\tikzstyle{dc}   = [circle, minimum width=8pt, draw, inner sep=0pt, path picture={\draw (path picture bounding box.south east) -- (path picture bounding box.north west) (path picture bounding box.south west) -- (path picture bounding box.north east);}]
\tikzstyle{dp}   = [circle, minimum width=8pt, draw, inner sep=0pt, path picture={\draw (path picture bounding box.west) -- (path picture bounding box.north) (path picture bounding box.south west) -- (path picture bounding box.north) (path picture bounding box.south west) -- (path picture bounding box.north east) (path picture bounding box.north east) -- (path picture bounding box.south) (path picture bounding box.south) -- (path picture bounding box.east);}]
\tikzstyle{ls} = [dc,scale=0.65]
\tikzstyle{ls'} = [dp,scale=1.3]
\tikzstyle{dpt}  = [circle, minimum width=8pt, draw, inner sep=0pt, path picture={\draw (path picture bounding box.south) -- (path picture bounding box.north) (path picture bounding box.west) -- (path picture bounding box.east);}]
\tikzstyle{agg} = [dpt,scale=0.65]
\numberwithin{Exercise}{section}
\newenvironment{pic}[1][]
{\begin{aligned}\begin{tikzpicture}[#1]}
{\end{tikzpicture}\end{aligned}}
\newcommand{\edges}[1][]%
{
}
\def\calign@preamble{%
   &\hfil\strut@
    \setboxz@h{\@lign$\m@th\displaystyle{##}$}%
    \ifmeasuring@\savefieldlength@\fi
    \set@field
    \hfil
    \tabskip\alignsep@
}
\let\cmeasure@\measure@
\patchcmd\cmeasure@{\divide\@tempcntb\tw@}{}{}{}
\patchcmd\cmeasure@{\divide\@tempcntb\tw@}{}{}{}
\patchcmd\cmeasure@{\ifodd\maxfields@
  \global\advance\maxfields@\@ne
  \fi}{}{}{}    
\renewcommand\matrix[1]
\newcommand\tinymatrix[1]
\renewcommand\thickspace{\kern2pt} \scriptstyle\begin{smallmatrix} #1 \end{smallmatrix} \hspace{-2pt} \right)}
\newcommand\ignore[1]{}
    \gdef\node@@on@layer{%
      \setbox\tikz@tempbox=\hbox\bgroup\pgfonlayer{#1}\unhbox\tikz@tempbox\endpgfonlayer\egroup}
\def\node@on@layer{\aftergroup\node@@on@layer}
\def\thickness{0.7pt}
\tikzstyle{oldmorphism}=[minimum width=30pt, minimum height=16pt, draw, font=\small, inner sep=0pt, fill=white, line width=\thickness]
\tikzstyle{cross}=[preaction={draw=white, -, line width=10pt}]
\tikzstyle{braid}=[double=black, line width=3*\thickness, double distance=\thickness, white]
\tikzstyle{string}=[line width=\thickness]
\tikzstyle{scalar}=[circle, inner sep=0pt, minimum width=15pt, draw, line width=\thickness]
\tikzstyle{dot}=[circle, draw=black, fill=black!25, inner sep=.4ex, line width=\thickness, node on layer=foreground]
\tikzstyle{blackdot}=[circle, draw=black, fill=black!35, inner sep=.4ex, line width=\thickness, node on layer=foreground]
\tikzstyle{whitedot}=[circle, draw=black, fill=white, inner sep=.4ex, line width=\thickness, node on layer=foreground]
\tikzstyle{mixedmorphism}=[morphism, minimum width=30pt, minimum height=16pt, draw, font=\small, inner sep=0pt, fill=white, line width=\thickness,rounded corners=1ex]
\tikzstyle{thick}=[line width=\thickness]
\tikzstyle{tiny}=[font=\tiny]
\tikzset{arrow/.style={decoration={
    markings,
    mark=at position #1 with \arrow{thickarrow}},
    postaction=decorate}
}
\tikzset{reverse arrow/.style={decoration={
    markings,
    mark=at position #1 with \arrow{reversethickarrow}},
    postaction=decorate}
}
\newif\ifblack\pgfkeys{/tikz/black/.is if=black}
\newif\ifwedge\pgfkeys{/tikz/wedge/.is if=wedge}
\newif\ifvflip\pgfkeys{/tikz/vflip/.is if=vflip}
\newif\ifhflip\pgfkeys{/tikz/hflip/.is if=hflip}
\newif\ifhvflip\pgfkeys{/tikz/hvflip/.is if=hvflip}
\newif\ifconnectnw\pgfkeys{/tikz/connect nw/.is if=connectnw}
\newif\ifconnectne\pgfkeys{/tikz/connect ne/.is if=connectne}
\newif\ifconnectsw\pgfkeys{/tikz/connect sw/.is if=connectsw}
\newif\ifconnectse\pgfkeys{/tikz/connect se/.is if=connectse}
\newif\ifconnectn\pgfkeys{/tikz/connect n/.is if=connectn}
\newif\ifconnects\pgfkeys{/tikz/connect s/.is if=connects}
\newif\ifconnectnwf\pgfkeys{/tikz/connect nw >/.is if=connectnwf}
\newif\ifconnectnef\pgfkeys{/tikz/connect ne >/.is if=connectnef}
\newif\ifconnectswf\pgfkeys{/tikz/connect sw >/.is if=connectswf}
\newif\ifconnectsef\pgfkeys{/tikz/connect se >/.is if=connectsef}
\newif\ifconnectnf\pgfkeys{/tikz/connect n >/.is if=connectnf}
\newif\ifconnectsf\pgfkeys{/tikz/connect s >/.is if=connectsf}
\newif\ifconnectnwr\pgfkeys{/tikz/connect nw </.is if=connectnwr}
\newif\ifconnectner\pgfkeys{/tikz/connect ne </.is if=connectner}
\newif\ifconnectswr\pgfkeys{/tikz/connect sw </.is if=connectswr}
\newif\ifconnectser\pgfkeys{/tikz/connect se </.is if=connectser}
\newif\ifconnectnr\pgfkeys{/tikz/connect n </.is if=connectnr}
\newif\ifconnectsr\pgfkeys{/tikz/connect s </.is if=connectsr}
\tikzset{keylengthnw/.initial=\connectheight}
\tikzset{keylengthn/.initial =\connectheight}
\tikzset{keylengthne/.initial=\connectheight}
\tikzset{keylengthsw/.initial=\connectheight}
\tikzset{keylengths/.initial =\connectheight}
\tikzset{keylengthse/.initial=\connectheight}
\tikzset{connect nw length/.style={connect nw=true, keylengthnw={#1}}}
\tikzset{connect n length/.style ={connect n =true, keylengthn ={#1}}}
\tikzset{connect ne length/.style={connect ne=true, keylengthne={#1}}}
\tikzset{connect sw length/.style={connect sw=true, keylengthsw={#1}}}
\tikzset{connect s length/.style ={connect s =true, keylengths ={#1}}}
\tikzset{connect se length/.style={connect se=true, keylengthse={#1}}}
\tikzset{connect nw < length/.style={connect nw <=true, keylengthnw={#1}}}
\tikzset{connect n < length/.style ={connect n <=true,  keylengthn ={#1}}}
\tikzset{connect ne < length/.style={connect ne <=true, keylengthne={#1}}}
\tikzset{connect sw < length/.style={connect sw <=true, keylengthnw={#1}}}
\tikzset{connect s < length/.style ={connect s <=true,  keylengths ={#1}}}
\tikzset{connect se < length/.style={connect se <=true, keylengthse={#1}}}
\tikzset{connect nw > length/.style={connect nw >=true, keylengthnw={#1}}}
\tikzset{connect n > length/.style ={connect n >=true,  keylengthn ={#1}}}
\tikzset{connect ne > length/.style={connect ne >=true, keylengthne={#1}}}
\tikzset{connect sw > length/.style={connect sw >=true, keylengthsw={#1}}}
\tikzset{connect s > length/.style ={connect s >=true,  keylengths ={#1}}}
\tikzset{connect se > length/.style={connect se >=true, keylengthse={#1}}}
\newlength\morphismheight
\newlength\minimummorphismwidth
\newlength\stateheight
\newlength\minimumstatewidth
\newlength\connectheight
\tikzset{width/.initial=\minimummorphismwidth}
  \let\thickness=\pgfmathresult
\tikzset{forward arrow style/.style={every to/.style, decoration={
    markings,
    mark=at position 0.5 with \arrow{thickarrow}},
    postaction=decorate}}
\tikzset{reverse arrow style/.style={every to/.style, decoration={
    markings,
    mark=at position 0.5 with \arrow{reversethickarrow}},
    postaction=decorate}}
\newcommand{\tinycomult}[1][dot]{
\smash{\raisebox{-2pt}{\hspace{-5pt}\ensuremath{\begin{pic}[scale=0.4,string]
    \node (0) at (0,0) {};
    \node[#1, inner sep=1.5pt] (1) at (0,0.55) {};
    \node (2) at (-0.5,1) {};
    \node (3) at (0.5,1) {};
    \draw (0.center) to (1.center);
    \draw (1.center) to [out=left, in=down, out looseness=1.5] (2.center);
    \draw (1.center) to [out=right, in=down, out looseness=1.5] (3.center);
\end{pic}
}\hspace{-3pt}}}}
\newcommand{\tinycomultls}[1][whitedot]{
\smash{\raisebox{-2pt}{\hspace{-5pt}\ensuremath{\begin{pic}[scale=0.4,string]
    \node (0) at (0,0) {};
    \node[ls,scale=0.9,#1, inner sep=1.5pt] (1) at (0,0.55) {};
    \node (2) at (-0.5,1) {};
    \node (3) at (0.5,1) {};
    \draw (0.center) to (1.center);
    \draw (1.center) to [out=left, in=down, out looseness=1.5] (2.center);
    \draw (1.center) to [out=right, in=down, out looseness=1.5] (3.center);
\end{pic}
}\hspace{-3pt}}}}
\newcommand{\tinycomultagg}[1][whitedot]{
\smash{\raisebox{-2pt}{\hspace{-5pt}\ensuremath{\begin{pic}[scale=0.4,string]
    \node (0) at (0,0) {};
    \node[agg,scale=0.9,#1, inner sep=1.5pt] (1) at (0,0.55) {};
    \node (2) at (-0.5,1) {};
    \node (3) at (0.5,1) {};
    \draw (0.center) to (1.center);
    \draw (1.center) to [out=left, in=down, out looseness=1.5] (2.center);
    \draw (1.center) to [out=right, in=down, out looseness=1.5] (3.center);
\end{pic}
}\hspace{-3pt}}}}
\newcommand{\tinycounit}[1][dot]{
\smash{\raisebox{-2pt}{\ensuremath{\hspace{-3pt}\begin{pic}[scale=0.4,string]
        \node (0) at (0,0) {};
        \node (1) at (0,1) {};
        \node[#1, inner sep=1.5pt] (d) at (0,0.55) {};
        \draw (0.center) to (d.center);
    \end{pic}
    \hspace{-1pt}}}}}
    \newcommand{\tinycounitls}[1][whitedot]{
\smash{\raisebox{-2pt}{\ensuremath{\hspace{-3pt}\begin{pic}[scale=0.4,string]
        \node (0) at (0,0) {};
        \node (1) at (0,1) {};
        \node[ls,scale=0.9,#1, inner sep=1.5pt] (d) at (0,0.55) {};
        \draw (0.center) to (d.center);
    \end{pic}
    \hspace{-1pt}}}}}
\newcommand{\tinymult}[1][dot]{
\smash{\raisebox{-2pt}{\hspace{-5pt}\ensuremath{\begin{pic}[scale=0.4,string,yscale=-1]
    \node (0) at (0,0) {};
    \node[#1, inner sep=1.5pt] (1) at (0,0.55) {};
    \node (2) at (-0.5,1) {};
    \node (3) at (0.5,1) {};
    \draw (0.center) to (1.center);
    \draw (1.center) to [out=left, in=down, out looseness=1.5] (2.center);
    \draw (1.center) to [out=right, in=down, out looseness=1.5] (3.center);
\end{pic}
}\hspace{-3pt}}}}
\newcommand{\tinymultk}[1][whitedot]{
\smash{\raisebox{-2pt}{\hspace{-5pt}\ensuremath{\begin{pic}[scale=0.4,string,yscale=-1]
    \node (0) at (0,0) {};
    \node[#1, inner sep=1.5pt,label={[xshift=-0.1cm,yshift=-0.1cm]0:{\tiny$k$}}] (1) at (0,0.55) {};
    \node (2) at (-0.5,1) {};
    \node (3) at (0.5,1) {};
    \draw (0.center) to (1.center);
    \draw (1.center) to [out=left, in=down, out looseness=1.5] (2.center);
    \draw (1.center) to [out=right, in=down, out looseness=1.5] (3.center);
\end{pic}
}\hspace{-3pt}}}}
\newcommand{\tinymultls}[1][whitedot]{
\smash{\raisebox{-2pt}{\hspace{-5pt}\ensuremath{\begin{pic}[scale=0.4,string,yscale=-1]
    \node (0) at (0,0) {};
    \node[ls,scale=0.9,#1, inner sep=1.5pt] (1) at (0,0.55) {};
    \node (2) at (-0.5,1) {};
    \node (3) at (0.5,1) {};
    \draw (0.center) to (1.center);
    \draw (1.center) to [out=left, in=down, out looseness=1.5] (2.center);
    \draw (1.center) to [out=right, in=down, out looseness=1.5] (3.center);
\end{pic}
}\hspace{-1pt}}}}
\newcommand{\tinymultagg}[1][whitedot]{
\smash{\raisebox{-2pt}{\hspace{-5pt}\ensuremath{\begin{pic}[scale=0.4,string,yscale=-1]
    \node (0) at (0,0) {};
    \node[agg,scale=0.9,#1, inner sep=1.5pt] (1) at (0,0.55) {};
    \node (2) at (-0.5,1) {};
    \node (3) at (0.5,1) {};
    \draw (0.center) to (1.center);
    \draw (1.center) to [out=left, in=down, out looseness=1.5] (2.center);
    \draw (1.center) to [out=right, in=down, out looseness=1.5] (3.center);
\end{pic}
}\hspace{-1pt}}}}
\newcommand{\tinyunit}[1][dot]{
\smash{\raisebox{-2pt}{\ensuremath{\hspace{-3pt}\begin{pic}[scale=0.4,string,yscale=-1]
        \node (0) at (0,0) {};
        \node (1) at (0,1) {};
        \node[#1, inner sep=1.5pt] (d) at (0,0.55) {};
        \draw (0.center) to (d.center);
    \end{pic}
    \hspace{-1pt}}}}}
    \newcommand{\tinyunitls}[1][whitedot]{
\smash{\raisebox{-2pt}{\ensuremath{\hspace{-3pt}\begin{pic}[scale=0.4,string,yscale=-1]
        \node (0) at (0,0) {};
        \node (1) at (0,1) {};
        \node[ls,scale=0.9,#1, inner sep=1.5pt] (d) at (0,0.55) {};
        \draw (0.center) to (d.center);
    \end{pic}
    \hspace{-1pt}}}}}
    \newcommand{\tinyunitagg}[1][whitedot]{
\smash{\raisebox{-2pt}{\ensuremath{\hspace{-3pt}\begin{pic}[scale=0.4,string,yscale=-1]
        \node (0) at (0,0) {};
        \node (1) at (0,1) {};
        \node[agg,scale=0.9,#1, inner sep=1.5pt] (d) at (0,0.55) {};
        \draw (0.center) to (d.center);
    \end{pic}
    \hspace{-1pt}}}}}
\newcommand{\tinyhandle}[1][dot]{\raisebox{-2pt}{\ensuremath{\hspace{-3pt}\begin{pic}[scale=0.4,string]
        \node (0) at (0,0) {};
        \node[dot, inner sep=1.0pt] (1) at (0,0.3) {};
        \node[dot, inner sep=1.0pt] (2) at (0,0.7) {};
        \node (3) at (0,1) {};
        \draw (0.center) to (1.center);
        \draw (2.center) to (3.center);
        \draw[in=180, out=180, looseness=2] (1.center) to (2.center);
        \draw[in=0, out=0, looseness=2] (1.center) to (2.center);
\end{pic}\hspace{-1pt}}}}
\newcommand{\tinycomultdb}[1][dot]{
\smash{\raisebox{-2pt}{\hspace{-5pt}\ensuremath{\begin{pic}[scale=0.4,string]
    \node (0) at (0,0) {};
    \node[#1, inner sep=1.5pt] (1) at (0,0.55) {};
    \node (2) at (-0.5,1) {};
    \node (3) at (0.5,1) {};
    \draw (0.center) to (1.center);
    \draw (1.center) to [out=left, in=down, out looseness=1.5] (2.center);
    \draw (1.center) to [out=right, in=down, out looseness=1.5] (3.center);
    \node (0') at (0.75,0) {};
    \node[#1, inner sep=1.5pt] (1') at (0.75,0.55) {};
    \node (2') at (0.25,1) {};
    \node (3') at (1.25,1) {};
    \draw (0'.center) to (1'.center);
    \draw (1'.center) to [out=left, in=down, out looseness=1.5] (2'.center);
    \draw (1'.center) to [out=right, in=down, out looseness=1.5] (3'.center);
\end{pic}
}\hspace{-3pt}}}}
\newcommand{\tinymultlss}[1][whitedot]{
\smash{\raisebox{-2pt}{\hspace{-5pt}\ensuremath{\begin{pic}[scale=0.4,string,yscale=-1]
    \node (0) at (0,0) {};
    \node[ls',scale=0.65,#1, inner sep=1.5pt] (1) at (0,0.55) {};
    \node (2) at (-0.5,1) {};
    \node (3) at (0.5,1) {};
    \draw (0.center) to (1.center);
    \draw (1.center) to [out=left, in=down, out looseness=1.5] (2.center);
    \draw (1.center) to [out=right, in=down, out looseness=1.5] (3.center);
\end{pic}
}\hspace{-1pt}}}}
\newcommand{\tinycomultlss}[1][whitedot]{
\smash{\raisebox{-2pt}{\hspace{-5pt}\ensuremath{\begin{pic}[scale=0.4,string]
    \node (0) at (0,0) {};
    \node[ls',scale=0.65,#1, inner sep=1.5pt] (1) at (0,0.55) {};
    \node (2) at (-0.5,1) {};
    \node (3) at (0.5,1) {};
    \draw (0.center) to (1.center);
    \draw (1.center) to [out=left, in=down, out looseness=1.5] (2.center);
    \draw (1.center) to [out=right, in=down, out looseness=1.5] (3.center);
\end{pic}
}\hspace{-3pt}}}}
\newcommand\cat[1]{\ensuremath{\mathbf{#1}}}
\renewcommand\dag{\ensuremath{\dagger}}
\newcommand\id[1][]{\ensuremath{\mathrm{id}_{#1}}}
\DeclareMathOperator{\Ob}{Ob}
\newcommand\ket[1]{\ensuremath{| #1 \rangle}}
\newcommand\C{\ensuremath{\mathbb{C}}}
\newcommand{\inprod}[2]{\ensuremath{\langle #1\hspace{0.5pt}|\hspace{0.5pt}#2 \rangle}}
\theoremstyle{plain}
\newtheorem{theorem}{Theorem}[chapter] 
\newtheorem{proposition}[theorem]{Proposition}
\theoremstyle{lemma}
\newtheorem{lemma}[theorem]{Lemma}
\theoremstyle{definition} 
\newtheorem{definition}{Definition}[chapter] 
\newtheorem*{conj*}{Conjecture}
\newcommand{\tr}{\text{tr}}
\newcommand{\ob}{\text{Ob}}
\title{Exploring Quantum Teleportation \\[1ex]     
        through Unitary Error Bases}   
\author{Benjamin Musto}             
\begin{document}

\baselineskip=18pt plus1pt

\setcounter{secnumdepth}{3}
\setcounter{tocdepth}{3}

\maketitle                  
\begin{abstract}
\textit{Unitary error bases have a great number of applications across
quantum information and quantum computation, and are fundamentally
linked to quantum teleportation, dense coding and quantum error
correction. Werner's combinatorial construction builds a unitary error
basis from a family of Hadamard matrices and a latin square. In this
dissertation, I give a new categorical axiomatisation of latin
squares, and use this to give a fully graphical presentation and proof
of the correctness of Werner's construction. The categorical approach
makes clear that some of the latin square axioms are unnecessary for
the construction to go through, and I propose a generalised
construction scheme with the potential to create new classes of
unitary error bases.}
\end{abstract}
\begin{romanpages}          
\tableofcontents            
\end{romanpages}            
\setcounter{chapter}{-1}


\label{chap:introduction}

\setlength\minimummorphismwidth{7mm}
\tikzset{wedge=false}
\chapter{Introduction}
Unitary error bases came to prominence in the mid-1990s through papers by  Knill ~\cite{knill}, Steane ~\cite{steane} and others working on  quantum error correction. Quantum teleportation was a relatively late discovery in quantum theory, not conceived of until 1993 in this breakthrough paper ~\cite{teleportation}. It has since been physically realised using a range of methods. It utilises quantum entanglement, a phenomenon with no classical analogue, to communicate quantum information across potentially great distances.

Werner showed in 2001 that all `tight', quantum teleportation protocols, as well as dense coding protocols, can be precisely mathematically characterised as unitary error bases. Werner defined `tight' to mean protocols that are optimally realised from minimal resources in terms of Hilbert Space dimensions and classical bits \cite{werner2001all}. The proof in that paper was fairly involved and long. More recently the categorical quantum mechanics research programme has rendered the correspondence more apparent by use of the graphical calculus. 

New constructions for UEBs thus give us new ways to perform quantum teleportation and dense coding algorithms, and are thus highly sought after. My main result here is a generalisation of the combinatorial construction.

 So unitary error bases have now become even more important objects of research in quantum information, and various attempts have been made to classify them, leading  to two main constructions, giving two types of unitary error bases. The algebraic construction due to Knill ~\cite{knill}, giving a nice error basis, and the combinatorial construction due to Werner \cite{werner2001all} giving a shift and multiply basis. Another construction has come from the categorical quantum mechanics research programme \cite{toy}  . Given a pair of mutually unbiased bases, a quantum teleportation protocol, and thus a unitary error basis can be derived. I will call this the MUB construction, and the resulting unitary error basis an MUB error basis. 

In this dissertation I have achieved an explicit categorical presentation of the combinatorial construction. I have done so using the graphical calculus of categorical quantum mechanics. I have then proven, entirely graphically the correctness of the combinatorial construction. I have used this categorical presentation to generalise the construction, and presented a specific model that
does not produce trivially equivalent UEBs. I have also proven the correctness of this model, again completely graphically. 

As an essential step towards deriving a categorical presentation of the combinatorial construction, I have categorically axiomatised a latin square. This in itself, was a non-trivial undertaking, and has the potential to be applied beyond what I have done here.
\vspace{5mm}\\
\textit{Outline}\\
In Chapter 1, I will define a unitary error basis as well as the notion of equivalence between bases. I will then summarise the necessary background material on the graphical calculus of categorical quantum mechanics. \\In Chapter 2, I will define the combinatorial and MUB constructions as well as what I will call the `minimal combinatorial construction'. I will prove that MUB bases are unitary error bases. I will explicitly formulate both MUB and minimal combinatorial constructions graphically and prove that they are equivalent, thus showing that MUB error bases are isomorphic to a subset of shift and multiply bases. \\In Chapter 3, I will characterise latin squares in the graphical calculus and prove that only those that are associative obey the Frobenius law. \\In Chapter 4, I will explicitly formulate shift and multiply bases graphically, and present a purely graphical proof that they are unitary error bases.  \\In Chapter 5, I will generalise shift and multiply bases and give a specific model that is not trivially equivalent to a shift and multiply basis. I will end with a specific example of a generalised latin square (that is not a latin square) giving rise to a UEB.
\\Chapter 6 is the conclusion and in Chapter 7, I will outline some possible directions for further research.\\
In the appendix there is a glossary of the graphical axioms which I will refer to during graphical proofs.


\setlength\minimummorphismwidth{7mm}
\tikzset{wedge=false}
\chapter{Background}
\section{Unitary Error Bases}
I will start by formally defining a unitary error basis.
\begin{definition}[Unitary Error Basis]
A unitary error basis on a $d$ dimensional Hilbert space is a family of $d$ unitary matrices $U_i$, each of size $d \times d$, such that:
\begin{equation}
 \tr (U^{\dag}_i \circ U_{i'})=\delta_{ii'}d 
\end{equation}
\end{definition}
As mentioned in the introduction there is a notion of equivalence between UEBs. Let $A=A_i \, ,\, \, 0 \leq i< d$ and $B=B_i \, ,\, \, 0 \leq i<  d$ be unitary error bases. Then $A$ is equivalent to $B$ if $\exists$ unitary matrices $U$ and $V$ and $c_i \in \mathbb{C}$ such that $A_i=c_iU  B_i V $ for all $0 \leq i<  d $ \cite{klapp}.
\section{The Pauli Matrices} 
 The most famous example of a unitary error basis is the Pauli matrices:

\begin{equation}
 \left \{ \begin{pmatrix}
    1 & \, \,0\\
    0 & \, 1
  \end{pmatrix},
  \begin{pmatrix}
    1 & \,0\\
    0 & -1
  \end{pmatrix},
  \begin{pmatrix}
    0 & e^{i\theta}\\
    1 & 0
  \end{pmatrix},
  \begin{pmatrix}
    0 & -e^{i\theta}\\
    1 & 0
  \end{pmatrix}
 \right \}\end{equation} 

where $d=2$. In fact, up to equivalence, the Pauli matrices are the only unitary error basis  for $d=2$. 

The following is a proof is due to Klappenecker and R\"otteler \cite{klapp}. However their proof is extremely terse being only eight lines long and having several sizeable gaps. I have worked out how to fill in the gaps myself, and here present a fully explicit proof.\begin{lemma}
All unitary error bases with $d=2$ are equivalent to the Pauli matrices. \end{lemma}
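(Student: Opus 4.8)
The plan is to exploit the Hilbert--Schmidt orthogonality encoded in the trace condition together with the very rigid structure of $2\times 2$ unitaries, reducing the whole statement to a fact about orthonormal frames in $\mathbb{R}^3$. Since equivalence permits left- and right-multiplication of the entire family by fixed unitaries and rescaling of each matrix by a scalar, I would first normalise: left-multiplying the family by $U_0^{\dag}$ sends $U_0$ to the identity and preserves both unitarity and the defining condition, because $\tr((U_0^{\dag}U_i)^{\dag}(U_0^{\dag}U_j)) = \tr(U_i^{\dag}U_j)$. So I may assume $U_0 = I$.

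Under this normalisation the orthogonality $\tr(U_0^{\dag}U_i)=0$ for $i\neq 0$ becomes simply $\tr(U_i)=0$. A traceless $2\times 2$ unitary has eigenvalues $\pm e^{i\phi_i}$ (unit modulus, summing to zero), so rescaling by the phase $c_i = e^{-i\phi_i}$ — again an allowed equivalence move — turns each $U_i$ with $i\geq 1$ into a traceless Hermitian involution. The traceless Hermitian $2\times 2$ matrices are the real span of the Pauli matrices $X,Y,Z$, so I can write $U_i = \vec n_i\cdot\vec\sigma$; the normalisation $\tr(U_i^{\dag}U_i)=2$ forces $|\vec n_i|=1$, and the surviving relations $\tr(U_i^{\dag}U_j)=0$ translate into $\vec n_i\cdot\vec n_j = 0$. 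Hence $\vec n_1,\vec n_2,\vec n_3$ form an orthonormal frame of $\mathbb{R}^3$.

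Finally I would invoke the double cover $SU(2)\to SO(3)$: conjugation of $\vec n\cdot\vec\sigma$ by a unitary $V$ realises an arbitrary rotation of $\vec n$, and conjugation fixes $U_0 = I$ while preserving all the trace relations, being a Hilbert--Schmidt isometry. Choosing $V$ to carry $\vec n_1,\vec n_2$ onto the standard axes $\hat x,\hat y$ leaves $\vec n_3$ at $\pm\hat z$, and a final scalar $c_3 = \pm 1$ fixes the sign. This exhibits the normalised family as $\{I,X,Y,Z\}$, the Pauli basis, establishing the equivalence.

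The main obstacle I anticipate is this last step, and specifically the orientation issue: since $SO(3)$ contains only orientation-preserving rotations, a single conjugation cannot align a left-handed frame $\{\vec n_i\}$ with the right-handed standard frame, so conjugation alone is not enough. This is precisely where the scalar freedom is indispensable — taking $c_3 = -1$ reverses the handedness — and I would argue carefully that the equivalence operations that fix $U_0 = I$ supply exactly the conjugations together with these sign flips. Setting up the $SU(2)\to SO(3)$ correspondence and the unit-vector/orthogonality dictionary cleanly is the technical heart of the argument; the preceding normalisation steps are routine.
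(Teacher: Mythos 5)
Your proof is correct, and it takes a genuinely different route from the paper's. The paper follows Klappenecker--R\"otteler: after left-multiplying by $A_1^{\dag}$ it diagonalises the second element via the spectral theorem, uses the trace conditions entrywise to pin down the explicit matrix forms $\tinymatrix{1 & 0 \\ 0 & -1}$, $\tinymatrix{0 & c_1 \\ 1 & 0}$, $\tinymatrix{0 & -c_1 \\ 1 & 0}$, and then conjugates by the diagonal phase matrix $T=\tinymatrix{1 & 0\\ 0 & e^{i\phi/2}}$ to land on the Pauli set. You instead normalise each non-identity element to a traceless Hermitian involution, invoke the Bloch-vector dictionary $U_i=\vec n_i\cdot\vec\sigma$ to convert the Hilbert--Schmidt relations into orthonormality of $\{\vec n_1,\vec n_2,\vec n_3\}$ in $\mathbb{R}^3$, and finish with the surjectivity of the adjoint map $SU(2)\to SO(3)$, using a final scalar $c_3=\pm1$ to repair a possible handedness mismatch --- a point you correctly identify as essential, since conjugation alone cannot reverse orientation. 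Each step you use (left-multiplication, per-element scalars, two-sided conjugation) is an allowed equivalence move and composes to one, so the argument is sound. What the two approaches buy: the paper's is entirely elementary, needing only the spectral theorem and direct computation, and yields explicit matrices at every stage; yours is coordinate-free and conceptually sharper --- it explains \emph{why} the answer is unique (an orthonormal frame in $\mathbb{R}^3$ is unique up to rotation and a sign) rather than discovering it by calculation, at the cost of importing the $SU(2)\to SO(3)$ double-cover machinery.
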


\begin{proof}
 
Let 
\begin{equation}
E
\quad:=\quad
\{ A_1,A_2,A_3,A_4 \}
\end{equation}
be a unitary error basis. If we now compose each of our unitaries by $A_1^{\dag}$ on the left then we obtain the equivalent UEB, $\{ \mathbb{I}_2,A'_2,A'_3,A'_4 \}$ for some unitaries $A'_2,A'_3,A'_4$.

Now $A'_2$ is a unitary matrix and thus a normal matrix. By the spectral theorem $\exists$ a unitary matrix $P$ $s.t$ $P^{\dag}A'_2 P = D$ for some $D$ that is diagonal. \\
So now take our UEB and conjugate with $P$ to obtain $\{ \mathbb{I}_2,D,B_1,B_2 \}$ with $D$ diagonal. 
\begin{equation*}
\text{Let } D= \begin{pmatrix}
    a & 0\\
    0 & b
  \end{pmatrix} \text{for some } a,b \in \C \end{equation*}
Then  
$
 \tr(D^{\dag}D)
=
2 
\Rightarrow
|a|^2+|b|^2
=
2
$
and
$
 \tr(D \mathbb{I}_2)
=
0 
\Rightarrow
a+b
=
0
$.
Thus $| a|^2+|a|^2=2$. So $a= e^{i\theta}$ and $b=-e^{i\theta}$ for some $\theta \in \mathbb{R}$. Since we can multiply our matrix by a complex scalar we can assume: 
\begin{equation*}D= \begin{pmatrix}
    1 & 0\\
    0 & -1
  \end{pmatrix}\end{equation*}
Let the diagonal elements of $B_1$ be $b_1$ and $b_2$.
Now $\tr(D^{\dag}B_1 )=0=b_1-b_2\Rightarrow b_1=b_2$ \\
and $\tr(\mathbb{I}_2^\dag B_1)=b_1+b_2=2b_1=2b_2=0$.\\
$\Rightarrow b_1=b_2=0.$ Similarly the diagonal elements of $B_2$ are zero.

Since we can multiply by any complex scalar we can assume that: 
\begin{equation*}
B_1=
\begin{pmatrix}
    0 & c_1\\
    1 & 0
\end{pmatrix} 
\text{and } B_2=
\begin{pmatrix}
    0 & c_2\\
    1 & 0
\end{pmatrix} 
\text{for } c_1,c_2 \in \mathbb{C}
\end{equation*}
$\tr (B_1 B_1^\dag)=|c_1|+1=2 \Rightarrow |c_1|=1.$  Similarly, $|c_2|=1$ \\
$\tr (B_1^\dag B_2)=\bar c_1 c_2+1=0 \Rightarrow \bar c_1 c_2=-1  \Rightarrow \bar c_1 c_1 c_2 =-c_1\Rightarrow c_2=-c_1$

So since $c_1$ is a phase we can write $c_1$ as $e^{i\phi}$ for some $\phi \in \mathbb{R} $ .\\
Now we have:
\begin{equation}
E \equiv E'=\left \{ \begin{pmatrix}
    1 & \, \,0\\
    0 & \, 1
  \end{pmatrix},
  \begin{pmatrix}
    1 & \,0\\
    0 & -1
  \end{pmatrix},
  \begin{pmatrix}
    0 & e^{i\phi}\\
    1 & 0
  \end{pmatrix},
  \begin{pmatrix}
    0 & -e^{i\phi}\\
    1 & 0
  \end{pmatrix}
 \right \}\end{equation} \\
 \newpage
Now let 
\begin{equation*}
T=
\begin{pmatrix}
    1 & \,0\\
    0 & e^{i \phi /2}
\end{pmatrix}
\end{equation*} 
We now compose all our basis matrices on the left by $T$ and on the right by $T^\dag$. The first two are both diagonal and thus commute with $T$, thus by unitarity of $T$ they remain unchanged.

The third becomes:

\begin{equation*}
\begin{pmatrix}
    0 & e^{i\phi/2}\\
    e^{i\phi/2} & 0
\end{pmatrix}
\end{equation*}
 and if we now multiply by the phase $e^{-i\phi/2}$ we obtain 
\begin{equation*}
\begin{pmatrix}
    0 & \, \, 1\\
    1 & \, \, 0
 \end{pmatrix}
 \end{equation*}

  The fourth matrix becomes:
 \begin{equation*}
 \begin{pmatrix}
    0 & -e^{i\theta/2}\\
    e^{i\theta/2} & 0
  \end{pmatrix}
  \end{equation*}
   and if we now multiply by the phase $e^{i(\pi/2 -\theta/2)}$ we have
  \begin{equation*}
  \begin{pmatrix}
    0 & -i\\
    i & 0
  \end{pmatrix}
  \end{equation*}

  Thus \begin{equation}
  E \equiv \left\{ \begin{pmatrix}
    1 &\, \, 0\\
    0 & \, 1
  \end{pmatrix}, \begin{pmatrix}
    1 & \, 0\\
    0 & -1
  \end{pmatrix},\begin{pmatrix}
    0 &\, \, 1\\
    1 & \, \,0
  \end{pmatrix},\begin{pmatrix}
    0 & -i\\
    i & \, 0
  \end{pmatrix} 
\right \}\end{equation}

which is the unitary error basis made up of the Pauli matrices as required.

\end{proof}

\section{Categorical Quantum Mechanics}
I will assume basic knowledge of category theory. 
\subsection{The Graphical Calculus and Symmetric Monoidal Categories}
Many of the diagrams and definitions in this section are from the book `An Introduction to Categorical Quantum Mechanics' ~\cite{cqm2014}.
Given a category, $\cat{C}$ the identity morphism, $\id{}$ of an object $A \in Ob(\cat{C})$ is represented by a wire:\\
\begin{equation}
\begin{aligned}
\begin{tikzpicture}
    \draw[string] (0,0) node [below] {$A$} to
    node [auto,swap] {\phantom{$A$}}
    (0,2);
\end{tikzpicture}
\end{aligned}
\end{equation}
A morphism $f:A \rightarrow B$ is represented as follows:\\
\begin{equation}
\begin{aligned}
\begin{tikzpicture}
  \begin{pgfonlayer}{background}
    \node (f) [morphism,wedge] at (0,0) {$f$};
  \end{pgfonlayer}
  \begin{pgfonlayer}{foreground}
    \draw[string] (f.north) to
        +(0,0.75) node [above] {$B$};
    \draw[string] (f.south) to
        +(0,-0.75) node [below] {$A$};
  \end{pgfonlayer}
\end{tikzpicture}
\end{aligned}
\end{equation}
Note that these diagrams are read bottom to top, this is not essential and is sometimes reversed in certain papers. I will stick to the bottom to top orientation in this dissertation.
Composition of morphisms is represented as connecting diagrams in series as follows. Given $f$ as above and \\
\begin{equation}
g:B \rightarrow C:=
\begin{aligned}
\begin{tikzpicture}
  \begin{pgfonlayer}{background}
    \node (f) [morphism,wedge] at (0,0) {$g$};
  \end{pgfonlayer}
  \begin{pgfonlayer}{foreground}
    \draw[string] (f.north) to
        +(0,0.75) node [above] {$C$};
    \draw[string] (f.south) to
        +(0,-0.75) node [below] {$B$};
  \end{pgfonlayer}
\end{tikzpicture}
\end{aligned}
\end{equation}
then
\begin{equation}
g \circ f:=
\begin{aligned}
\begin{tikzpicture}[yscale=1]
  \begin{pgfonlayer}{background}
\node (f) [morphism,wedge] at (0,0) {$f$};
\node (g) [morphism,wedge, anchor=south] at ([yshift=0.75cm] f.north) {$g$};
\end{pgfonlayer}
  \begin{pgfonlayer}{foreground}
\draw[string] (g.north) to
    +(0,0.75) node [above] {$C$};
\draw[string] (f.south) to
    +(0,-0.75) node [below] {$A$};
\draw[string] (f.north) to node [auto] {$B$} (g.south) {};
\end{pgfonlayer}
\end{tikzpicture}
\end{aligned}
\end{equation}
\begin{definition}[Monoidal Category]A monoidal category is a category equipped with the following additional structure:
\begin{itemize}
\item a monoidal product, defined to be a bifunctor $\otimes \colon \cat C \times \cat C \to \cat C$;
\item a unit object, defined to be a special object $I \in \Ob(\cat C )$ which is both a left and right unit for the monoidal product via natural isomorphisms $\lambda$ and $\rho$ with components, $\lambda_A:I \otimes A \rightarrow A$ and $\rho_A:A \otimes I \rightarrow A$; 
\item the monoidal product is associative via the natural isomorphism $\alpha$, with components $\alpha_{A,B,C}:(A \otimes B) \otimes C
  \rightarrow A \otimes (B \otimes C)$; 
\end{itemize}
There is also a coherence property that ensures that equations built from $\circ$, $\otimes$,
$\id{}$, $\alpha$, $\alpha ^{-1}$, $\lambda$, $\lambda ^{-1}$, $\rho$ and $\rho ^{-1}$ are well-defined~\cite{cqm2014}.\\
\textit{Coherence property}\\
The following diagrams commute~\cite{cqm2014}:\\
\begin{equation}
\label{eq:triangle}
\hspace{3mm}\begin{aligned}
\begin{tikzpicture}[xscale=6,yscale=2]
\node (A) at (0,0) {$(A \otimes I) \otimes B$};
\node (B) at (1,0) {$A \otimes (I \otimes B)$};
\node (C) at (0.5,-1) {$A \otimes B$};
\draw [->] (A) to node [auto,swap, pos=0.4] {$\rho_A \otimes \id[B]$} (C);
\draw [->] (B) to node [auto, pos=0.4] {$\id[A] \otimes \lambda_B$} (C);
\draw [->] (A) to node [above] {$\alpha_{A,I,B}$} (B);
\end{tikzpicture}
\end{aligned}
\end{equation}
\begin{equation}
\label{eq:pentagon}
\hspace{6.4mm}\begin{aligned}
\begin{tikzpicture}[xscale=2.3,yscale=2]
\node (A) at (0,0)
    {$\big( (A \otimes B) \otimes C\big) \otimes D$};
\node (B) at (0.8,1)
    {$\big( A \otimes (B \otimes C) \big) \otimes D$};
\node (C) at (3.2,1)
    {$A \otimes \big( ( B \otimes C) \otimes D\big)$};
\node (D) at (4,0)
    {$A \otimes \big( B \otimes ( C \otimes D ) \big)$};
\node (E) at (2,-1)
    {$(A \otimes B) \otimes (C \otimes D)$};
\draw [->] (A)
    to node [auto, pos=0.3]
    {$\alpha _{A,B,C} \otimes \id[D]$} (B);
\draw [->] (B)
    to node [auto]
    {$\alpha_{A,B \otimes C,D}$} (C);
\draw [->] (C)
    to node [auto, pos=0.7]
    {$\id[A] \otimes \alpha_{B,C,D}$} (D);
\draw [->] (A)
    to node [auto,swap]
    {$\alpha_{A \otimes B, C, D}$} (E);
\draw [->] (E)
    to node [auto,swap]
    {$\alpha_{A,B,C\otimes D}$} (D);
\end{tikzpicture}
\end{aligned}
\end{equation}

\end{definition}

\begin{definition}[Symmetric Monoidal Category]A symmetric monoidal category, $\cat{C}$, is a monoidal category with a natural isomorphism $\sigma$ whose components are \mbox{$\sigma_{A,B}: A \otimes B \rightarrow B \otimes A$} and $\sigma_{B,A}: B \otimes A \rightarrow A \otimes B$ which are mutual inverses $\forall A,B \in \ob (\cat{C})$ The components of $\sigma$ are often called swap maps~\cite{cqm2014}.
\end{definition}
Graphically, the tensor product is represented as parallel composition. For $f$ and $g$ as above:
\begin{equation}
\begin{aligned}
\begin{tikzpicture}
\begin{pgfonlayer}{background}
\node (f) [morphism,wedge] at (0,0) {$f$};
\node (g) [morphism,wedge] at (1.5,0) {$g$};
\end{pgfonlayer}
\begin{pgfonlayer}{foreground}
\draw[string] (f.north) to node [auto] {$B$} +(0,0.75);
\draw[string] (f.south) to [out=down, in=up] node [auto, swap] {$A$} +(0,-0.75);
\draw[string] (g.north) to node [auto] {$D$} +(0,0.75);
\draw[string] (g.south) to [out=down, in=up] node [auto, swap] {$C$} +(0,-0.75);
\end{pgfonlayer}
\end{tikzpicture}
\end{aligned}
\end{equation}
The unit object is not drawn by convention. Morphisms $I \rightarrow A$ are called states and are represented:
\begin{equation}
\begin{aligned}
\begin{tikzpicture}
\node[state] (0) at (0,0) {$a$};
\node (1) at (0,1) [above] {$A$};
\draw[string] (0) to (1);
\end{tikzpicture}
\end{aligned}
\end{equation}
Morphisms $A \rightarrow I$ are called effects and are represented:
\begin{equation}
\begin{aligned}
\begin{tikzpicture}[yscale=-1]
\node[state,hflip] (0) at (0,0) {$b$};
\node (1) at (0,1.5) [above] {$A$};
\draw[string] (0) to (1);
\end{tikzpicture}
\end{aligned}
\end{equation}
I will be working almost exclusively within the symmetric monoidal category \cat{FHilb} in this dissertation. Objects are finite dimensional Hilbert spaces, morphisms are linear maps, the monoidal product is the standard linear algebraic tensor product, and the unit object is the the one-dimensional Hilbert space, $\mathbb{C}$. States correspond to column vectors, and effects correspond to row vectors. 

Transposition is represented by vertical reflection, conjugation by rotation by $\pi$ radians and adjoints by horizontal reflection. So the adjoint of a linear map is its conjugate transpose as expected~\cite{cqm2014}. For example:
\begin{equation}
\left(
\begin{aligned}
\begin{tikzpicture}
  \begin{pgfonlayer}{background}
    \node (f) [morphism,wedge] at (0,0) {$f$};
  \end{pgfonlayer}
  \begin{pgfonlayer}{foreground}
    \draw[string] (f.north) to
        +(0,0.75) node [above] {$B$};
    \draw[string] (f.south) to
        +(0,-0.75) node [below] {$A$};
  \end{pgfonlayer}
\end{tikzpicture}
\end{aligned}
\right)^{{\Huge \dag}}
\quad=\quad
\begin{aligned}
\begin{tikzpicture}
  \begin{pgfonlayer}{background}
    \node (f) [morphism,hflip,wedge] at (0,0) {$f$};
  \end{pgfonlayer}
  \begin{pgfonlayer}{foreground}
    \draw[string] (f.north) to
        +(0,0.75) node [above] {$A$};
    \draw[string] (f.south) to
        +(0,-0.75) node [below] {$B$};
  \end{pgfonlayer}
\end{tikzpicture}
\end{aligned}
\end{equation}
Hence the asymmetry of the morphisms.
\subsection{Classical Structures}
\begin{definition}[Monoid]
Given a linear map \tinymult[blackdot] and a state \tinyunit \,, where all wires are the same object $H$, $(H,\tinymult,\tinyunit)$ is a monoid if the following equations are satisfied~\cite{cqm2014}:
\begin{equation}
\begin{aligned}\begin{tikzpicture}
          \node (0a) at (-1,0.25) {};
          \node (0b) at (-0.5,0.25) {};
          \node (0c) at (0.5,0.25) {};

          \node[dot] (1) at (0,1) {};
          \node[dot] (2) at (-0.5,1.5) {};

          \node (5a) at (-0.5,2) {};

          \draw[string, out=90, in =180] (0a) to (2);
          \draw[string, out=0, in=90] (2) to (1);
          \draw[string, out=0, in=90] (1) to (0c);
          \draw[string, out=180, in=90] (1) to (0b);

          \draw[string] (2) to (5a.center);
         
          \end{tikzpicture}\end{aligned}   
  \quad   = \quad
\begin{aligned}\begin{tikzpicture}
          \node (0a) at (0.75,0.25) {};
          \node (0b) at (0.25,0.25) {};
          \node (0c) at (-0.75,0.25) {};

          \node[dot] (1) at (-0.25,1) {};
          \node[dot] (2) at (0.25,1.5) {};

          \node (5a) at (0.25,2) {};

          \draw[string, out=90, in =00] (0a) to (2);
          \draw[string, out=180, in=90] (2) to (1);
          \draw[string, out=180, in=90] (1) to (0c);
          \draw[string, out=0, in=90] (1) to (0b);

          \draw[string] (2) to (5a.center);
\end{tikzpicture}\end{aligned} 
\end{equation}
and
\begin{equation}
\begin{aligned}\begin{tikzpicture}
          
          \node (0a) at (-1,0) {};
          \node (0b) at (-0.5,0) {};
          \node[dot] (0c) at (0,0) {};
          \node[dot] (7) at (-0.5,-0.5) {};       
          \node (10a) at (-0.5,-1) {};       
          \draw[string, out=270, in =180] (0a.center) to (7);
          \draw[string, out=0, in=270] (7) to (0c.center);
          \draw[string] (7) to (10a.center);
         
          \end{tikzpicture}\end{aligned}   
  \quad   = \quad
\begin{aligned}\begin{tikzpicture}
           \node (10a) at (-0.5,0) {}; 
           \node (5a) at (-0.5,1) {}; 
           \draw[string] (5a.center) to (10a.center);
\end{tikzpicture}\end{aligned}
\quad=\quad
\begin{aligned}\begin{tikzpicture}

          \node[dot] (0a) at (-1,0) {};
          \node (0b) at (-0.5,0) {};
          \node (0c) at (0,0) {};
          \node[dot] (7) at (-0.5,-0.5) {};       
          \node (10a) at (-0.5,-1) {};       
          \draw[string, out=270, in =180] (0a.center) to (7);
          \draw[string, out=0, in=270] (7) to (0c.center);
          \draw[string] (7) to (10a.center);

\end{tikzpicture}\end{aligned}  
  \end{equation}
\end{definition}
\begin{definition}[Comonoid]
Given a linear map \tinycomult[blackdot] and an effect \tinycounit \,, where all wires are the same object $H$, $(H,\tinycomult,\tinycounit)$ is a comonoid if the following equations are satisfied~\cite{cqm2014}:
\begin{equation}
\begin{aligned}\begin{tikzpicture}[yscale=-1]
          \node (0a) at (-1,0.25) {};
          \node (0b) at (-0.5,0.25) {};
          \node (0c) at (0.5,0.25) {};

          \node[dot] (1) at (0,1) {};
          \node[dot] (2) at (-0.5,1.5) {};

          \node (5a) at (-0.5,2) {};

          \draw[string, out=90, in =180] (0a.center) to (2);
          \draw[string, out=0, in=90] (2) to (1);
          \draw[string, out=0, in=90] (1) to (0c.center);
          \draw[string, out=180, in=90] (1) to (0b.center);

          \draw[string] (2) to (5a.center);
         
          \end{tikzpicture}\end{aligned}   
  \quad   = \quad
\begin{aligned}\begin{tikzpicture}[yscale=-1]
          \node (0a) at (0.75,0.25) {};
          \node (0b) at (0.25,0.25) {};
          \node (0c) at (-0.75,0.25) {};

          \node[dot] (1) at (-0.25,1) {};
          \node[dot] (2) at (0.25,1.5) {};

          \node (5a) at (0.25,2) {};

          \draw[string, out=90, in =00] (0a.center) to (2);
          \draw[string, out=180, in=90] (2) to (1);
          \draw[string, out=180, in=90] (1) to (0c.center);
          \draw[string, out=0, in=90] (1) to (0b.center);

          \draw[string] (2) to (5a.center);
\end{tikzpicture}\end{aligned}  
  \end{equation}
and
\begin{equation}
\begin{aligned}\begin{tikzpicture}
          
          \node (0a) at (-1,-0.5) {};
          \node (0b) at (-0.5,-0.5) {};
          \node[dot] (0c) at (0,-0.5) {};
          \node[dot] (7) at (-0.5,0) {};       
          \node (10a) at (-0.5,0.5) {};       
          \draw[string, out=90, in =180] (0a.center) to (7);
          \draw[string, out=0, in=90] (7) to (0c.center);
          \draw[string] (7) to (10a.center);
         
          \end{tikzpicture}\end{aligned}   
  \quad   = \quad
\begin{aligned}\begin{tikzpicture}
           \node (10a) at (-0.5,0) {}; 
           \node (5a) at (-0.5,1) {}; 
           \draw[string] (5a.center) to (10a.center);
\end{tikzpicture}\end{aligned}
\quad=\quad
\begin{aligned}\begin{tikzpicture}

      \node[dot] (0a) at (-1,-0.5) {};
          \node (0b) at (-0.5,-0.5) {};
          \node (0c) at (0,-0.5) {};
          \node[dot] (7) at (-0.5,0) {};       
          \node (10a) at (-0.5,0.5) {};       
          \draw[string, out=90, in =180] (0a.center) to (7);
          \draw[string, out=0, in=90] (7) to (0c.center);
          \draw[string] (7) to (10a.center);

\end{tikzpicture}\end{aligned}  
  \end{equation}
  \end{definition}
  \begin{definition}[Comonoid Homomorphism]
  A comonoid homomorphism\\ $(H,\tinycomult[blackdot],\tinycounit[blackdot])$ $\rightarrow (H',\tinycomult[whitedot],\tinycounit[whitedot])$ is a morphism $f:H \rightarrow H'$ such that~\cite{cqm2014}:
\begin{align}
\begin{aligned}
\begin{tikzpicture}[string]
\node (f1) [morphism, wedge, connect n, connect s, width=0cm] at (1,-2) {$f$};
\node (m) [whitedot] at (f1.connect n) {};
\draw (m.center) to [out=right, in=down] +(0.75,0.75);
\draw (m.center) to [out=left, in=down] +(-0.75,0.75);
\end{tikzpicture}
\end{aligned}
\quad&=\quad
\begin{aligned}
\begin{tikzpicture}[string]
\node (f1) [morphism, wedge, connect n, width=0cm] at (0.25,-0.2) {$f$};
\node (f2) [morphism, wedge, connect n, width=0cm] at (1.75,-0.2) {$f$};
\node (m) [blackdot] at (1,-1) {};
\draw (m.center) to [out=left, in=down] (f1.south);
\draw (m.center) to [out=right, in=down] (f2.south);
\draw (m.center) to +(0,-0.75);
\end{tikzpicture}
\end{aligned}
\\
\begin{aligned}
\begin{tikzpicture}[string]
\draw [white] (-0.75,0) to (0.75,0);
\node (f1) [morphism, wedge, connect n, width=0cm, connect s] at (0,-0.2) {$f$};
\node (e) [whitedot] at (f1.connect n) {};
\draw [white] (-0.5,1.25) to (0.5,1.25);
\end{tikzpicture}
\end{aligned}
\quad&=\quad
\begin{aligned}
\begin{tikzpicture}[string]
\node (e) [blackdot] at (0,0) {};
\draw (e.center) to +(0,-1);
\draw [white] (-0.3,1.25) to (0.3,1.25);
\end{tikzpicture}
\end{aligned}
\end{align}
  \end{definition}
  \begin{definition}[Classical Structure]
  Given a monoid $(H,\tinymult[blackdot],\tinyunit[blackdot])$ and comonoid $(H,\tinycomult[blackdot],\tinycounit[blackdot])$ on the same object $H$, they form a classical structure, $(H,\tinymult[blackdot],\tinyunit[blackdot],\tinycomult[blackdot],\tinycounit[blackdot])$  if the following holds~\cite{cqm2014}:
 
\hspace{55mm} \textit{Frobenius law}\begin{equation}
    \begin{aligned}\begin{tikzpicture}[yscale=0.75]
          \node (0) at (0,0) {};
          \node (0a) at (0,1) {};
          \node[blackdot] (1) at (0.5,2) {};
          \node[blackdot] (2) at (1.5,1) {};
          \node (3) at (1.5,0) {};
          \node (4) at (2,3) {};
          \node (4a) at (2,2) {};
          \node (5) at (0.5,3) {};
          \draw[string] (0) to (0a.center);
          \draw[string,out=90,in=180] (0a.center) to (1.center);
          \draw[string,out=0,in=180] (1.center) to (2.center);
          \draw[string,out=0,in=270] (2.center) to (4a.center);
          \draw[string] (4a.center) to (4);
          \draw[string] (2.center) to (3);
          \draw[string] (1.center) to (5);
      \end{tikzpicture}\end{aligned}
    \quad = \quad
    \begin{aligned}\begin{tikzpicture}[yscale=0.75]
          \node (0a) at (-0.5,0) {};
          \node (0b) at (0.5,0) {};
          \node[blackdot] (1) at (0,1) {};
          \node[blackdot] (2) at (0,2) {};
          \node (3a) at (-0.5,3) {};
          \node (3b) at (0.5,3) {};
          \draw[string,out=90,in=180] (0a) to (1.center);
          \draw[string,out=90,in=0] (0b) to (1.center);
          \draw[string] (1.center) to (2.center);
          \draw[string,out=180,in=270] (2.center) to (3a);
          \draw[string,out=0,in=270] (2.center) to (3b);
      \end{tikzpicture}\end{aligned}
    \quad = \quad
    \begin{aligned}\begin{tikzpicture}[yscale=0.75,xscale=-1]
          \node (0) at (0,0) {}; 
          \node (0a) at (0,1) {};
          \node[blackdot] (1) at (0.5,2) {};
          \node[blackdot] (2) at (1.5,1) {};
          \node (3) at (1.5,0) {};
          \node (4) at (2,3) {};
          \node (4a) at (2,2) {};
          \node (5) at (0.5,3) {};
          \draw[string] (0) to (0a.center);
          \draw[string,out=90,in=180] (0a.center) to (1.center);
          \draw[string,out=0,in=180] (1.center) to (2);
          \draw[string,out=0,in=270] (2.center) to (4a.center);
          \draw[string] (4a.center) to (4);
          \draw[string] (2.center) to (3);
          \draw[string] (1.center) to (5);
      \end{tikzpicture}\end{aligned}
  \end{equation} 
  This is the Frobenius law and makes $(H,\tinymult[blackdot],\tinyunit[blackdot],\tinycomult[blackdot],\tinycounit[blackdot])$ into a Frobenius algebra. 

\hspace{60mm} \textit{specialness}
\begin{equation}
\begin{aligned}\begin{tikzpicture}
          \node (0a) at (-1,0) {};
          \node (0b) at (-0.5,0) {};
          \node (0c) at (0,0) {};

          \node[dot] (2) at (-0.5,0.5) {};
          \node (5a) at (-0.5,1) {};
 
          \node[dot] (7) at (-0.5,-0.5) {};       
          \node (10a) at (-0.5,-1) {};

          \draw[string, out=90, in =180] (0a.center) to (2);
          \draw[string, out=0, in=90] (2) to (0c.center);

          \draw[string] (2) to (5a.center);
          
          \draw[string, out=270, in =180] (0a.center) to (7);
          \draw[string, out=0, in=270] (7) to (0c.center);

          \draw[string] (7) to (10a.center);
         
          \end{tikzpicture}\end{aligned}   
  \quad   = \quad
\begin{aligned}\begin{tikzpicture}
           \node (10a) at (-0.5,-1) {}; 
           \node (5a) at (-0.5,1) {}; 
           \draw[string] (5a.center) to (10a.center);
\end{tikzpicture}\end{aligned}  
  \end{equation}
 \hspace{63mm} \textit{commutativity}
\begin{equation}
\begin{aligned}\begin{tikzpicture}
          \node (0a) at (-1,0.25) {};
          \node (0b) at (-0.5,0.25) {};
          \node (0c) at (0,0.25) {};

          \node[dot] (2) at (-0.5,1.5) {};

          \node (5a) at (-0.5,2) {};

          \draw[string, out=90, in =180] (0a) to (2);
          \draw[string, out=0, in=90] (2) to (0c);

          \draw[string] (2) to (5a.center);
         
          \end{tikzpicture}\end{aligned}   
  \quad   = \quad
\begin{aligned}\begin{tikzpicture}
           \node (0a) at (-1,0.25) {};
          \node (0b) at (-0.5,0.25) {};
          \node (0c) at (0,0.25) {};

          \node[dot] (2) at (-0.5,1.5) {};

          \node (5a) at (-0.5,2) {};

          \draw[string, out=90, in =0] (0a) to (2);
          \draw[string, out=180, in=90] (2) to (0c);
          \draw[string] (2) to (5a.center);
\end{tikzpicture}\end{aligned}  
  \end{equation}  
and $(\tinymult[blackdot])^{\dag}=\tinycomult[blackdot]$  
\end{definition}

In $\cat{FHilb}$ every classical structure takes the following form. Given $H \in \ob (\cat{FHilb})$ with orthonormal basis $\ket{i}\,, \, \,0 \leq i < d$ and $c \in \mathbb{C}$~\cite{cqm2014}: \begin{equation}\label{cls}
\tinycomult[blackdot](\ket{i})=\ket{i} \otimes \ket{i},\, \, \,
\tinymult[blackdot](\ket{i} \otimes \ket{j})=\delta_{ij} \ket{i}, \, \, \,
\tinycounit[blackdot](\ket{i})=1 \text{ and }
\tinyunit[blackdot](c)=c \sum^{d-1}_{i=0} \ket{i}
\end{equation}
Given a classical structure on a Hilbert space $H$ the following theorem gives us the concept of a spider:
\begin{theorem}
 Any connected morphism from $H^{\otimes m}$ to $H^{\otimes n}$ built using the multiplication, comultiplication, unit and counit maps of a classical structure, as well as the identity on $H$ and the switch map, $\sigma_{H,H}$, can be re-written in the following normal form~\cite{cqm2014}:

\begin{equation}\label{eq:normalform}
    \overbrace{\underbrace{
    \begin{pic}[xscale=1.33, yscale=0.5, string]
      \node (0) at (0,0) {};
      \node (1) at (2,0) {};
      \node (2) at (3,0) {};
      \node[dot] (3) at (2.5, 1) {};
      \node (4) at (2.25, 1.75) {};
      \node (6) at (1.25, 2.75) {};
      \node[dot] (7) at (1,3) {};
      \node[dot] (8) at (1,4) {};
      \node (9) at (1.25,4.25) {};
      \node (11) at (1.75,4.75) {};
      \node[dot] (12) at (2,5) {};
      \node[dot] (13) at (2.5,6) {};
      \node (14) at (0,7) {};
      \node (15) at (1.5,7) {};
      \node (16) at (2,7) {};
      \node (17) at (3,7) {};
      \edges[string]
      \draw[out=90, in=180, looseness=0.75] (0.center) to (7.center);
      \draw[out=90, in=180] (1.center) to (3.center);
      \draw[out=90, in=0] (2.center) to (3.center);
      \draw (7) to (8);
      \draw[out=270, in=0] (17) to (13);
      \draw[out=270, in=180] (16) to (13);
      \draw[out=270, in=180, looseness=0.66] (15) to (12);
      \draw[out=270, in=0] (13) to (12);
      \draw[out=270, in=180, looseness=0.75] (14) to (8);
      \draw[dashed, out=0, in=270] (8) to (12);
      \begin{scope}
      \clip (12) circle (7mm);
      \draw[string, in=0, out=270] (12) to (8);
      \end{scope}
      \begin{scope}
      \clip (8) circle (3mm);
      \draw[string, out=0, in=270] (8) to (12);
      \end{scope}
      \draw[out=0, in=90, dashed] (7) to (3);
      \begin{scope}
      \clip (7) circle (3mm);
      \draw[string, out=0, in=90] (7) to (3);
      \end{scope}
      \begin{scope}
      \clip (3) circle (7mm);
      \draw[string, out=0, in=90] (7) to (3);
      \end{scope}
    \end{pic}}_{m}}^{n}
  \end{equation}
\end{theorem}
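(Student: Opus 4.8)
The plan is to prove this \emph{spider theorem} by induction on the number of nodes (that is, the instances of the multiplication, comultiplication, unit and counit maps) occurring in the diagram, the goal being to show that every such connected morphism collapses to the single normal form displayed above: a tree of multiplications fusing the $m$ inputs into one wire, followed by a tree of comultiplications splitting that wire into $n$ outputs.

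First I would record the elementary consequences of the axioms that will serve as rewrite rules: associativity and the unit law from the monoid, their duals from the comonoid, commutativity, specialness, and, crucially, the Frobenius law. The central observation is that the Frobenius law lets one slide a comultiplication past a multiplication: whenever a multiplication node sits below a comultiplication node along a shared wire, the pair can be rewritten into the canonical merge-then-split shape $\tinycomult[blackdot] \circ \tinymult[blackdot]$, and conversely. This is the engine that separates all the multiplications from all the comultiplications, pushing the former to the bottom and the latter to the top.

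The cleanest route is to establish a \emph{fusion lemma}: if two diagrams already in normal form are wired together along $k \geq 1$ connecting wires so as to form a connected diagram, then the result is again in normal form, now with inputs and outputs reduced by $k$ on each joined side. For $k = 1$ this follows from associativity, coassociativity and the Frobenius law, since the single shared wire lets the two multiplication trees combine into one and likewise for the comultiplication trees. For $k \geq 2$ the surplus connecting wires create loops; each redundant loop is a multiplication immediately followed by a comultiplication, which specialness collapses to the identity, reducing to the $k=1$ case. Internal units and counits are removed by the unit and counit laws, and commutativity guarantees the resulting normal form depends only on the counts $m$ and $n$, not on any reordering of wires introduced by the swap maps.

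Granting the fusion lemma, the theorem follows by induction: each generator is trivially a spider (multiplication is the $2 \to 1$ spider, comultiplication the $1 \to 2$ spider, unit the $0 \to 1$ spider, counit the $1 \to 0$ spider and the identity the $1 \to 1$ spider), while swap maps contribute no nodes and are absorbed using commutativity and naturality of the symmetry; since any connected diagram is assembled from these by composition and tensor product, and connectivity forces at least one wire at each joining step, the fusion lemma assembles the whole diagram into one spider. I expect the fusion lemma in full generality to be the main obstacle, specifically the verification that the Frobenius law together with specialness genuinely eliminates \emph{every} loop created while separating the two layers for an arbitrary connected graph, and the setting up of the induction so that each rewrite strictly decreases the node count while preserving both connectedness and the boundary. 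Careful bookkeeping of the swap maps will also be needed to confirm that the normal form is truly canonical.
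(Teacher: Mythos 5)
The paper never actually proves this theorem: it is stated as imported background, attributed to~\cite{cqm2014}, and immediately used to define spiders, so there is no in-paper argument to measure your proposal against. What you have sketched is, in outline, the standard proof from that cited reference: induction on the number of generating nodes, driven by a spider-fusion lemma (two normal forms joined by at least one wire fuse into a single normal form), with the Frobenius law, (co)associativity and (co)commutativity handling a single connecting wire and specialness eliminating the loops produced by additional connecting wires. That is the right skeleton, and your identification of the fusion lemma as the crux is accurate.

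Two details need repair before this becomes a proof. First, you have specialness backwards: reading diagrams bottom-to-top as the paper does, the loop created by a surplus connecting wire is a comultiplication followed by a multiplication, i.e. $\tinymult[blackdot] \circ \tinycomult[blackdot] = \tinyid$, which is exactly what specialness asserts; the opposite composite $\tinycomult[blackdot] \circ \tinymult[blackdot]$ (merge then split), which is what your phrase ``a multiplication immediately followed by a comultiplication'' describes, is the Frobenius middle form and is certainly not the identity. Second, your closing induction tacitly assumes that every stage of assembling the diagram is connected (``connectivity forces at least one wire at each joining step''), but that is false: connected diagrams are routinely built by tensoring disconnected pieces that only become connected at a later composition, e.g. $\tinymult[blackdot] \circ (f \otimes g)$ where $f$ and $g$ are disjoint spiders. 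The induction must therefore be organised on the completed diagram --- decompose at the last composition, apply the inductive hypothesis to the connected components of each layer, then fuse --- rather than on the syntactic build-up, which is exactly the bookkeeping difficulty you flag at the end. Both points are repairable within your framework and are handled in the cited source.
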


Thus any two connected diagrams of black dots with the same numbers of inputs and outputs are equal and the following is well-defined:

\begin{definition}[Spider]
 Any connected morphism from $H^{\otimes m}$ to $H^{\otimes n}$ built using the multiplication, comultiplication, unit and counit maps of a classical structure, as well as the identity and the switch map, $\sigma_{H,H}$ is called a spider and is denoted~\cite{qcs}:
\begin{equation}
    \overbrace{\underbrace{
    \begin{pic}
     \node[blackdot,scale=2] (A){};
     \draw[string,out=190,in=90] (A) to (-2,-1);
     \draw[string,out=350,in=90] (A) to (2,-1);
     \draw[string,out=190,in=90] (A) to (-1.7,-1);
     \draw[string,out=350,in=90] (A) to (1.7,-1);  
     \draw[string,out=170,in=270] (A) to (-2,1);
     \draw[string,out=10,in=270] (A) to (2,1);
     \draw[string,out=170,in=270] (A) to (-1.7,1);
     \draw[string,out=10,in=270] (A) to (1.7,1); 
     \draw[loosely dotted] (-1.65,0.9) to (1.65,0.9); 
     \draw[loosely dotted] (-1.65,-0.9) to (1.65,-0.9);      
    \end{pic}}_{m}}^{n}
  \end{equation}
\end{definition}
By the above it is clear that any connected spiders can be merged into one as long as the total number of inputs and outputs are preserved.

Given a classical structure there are two spiders which play a special role, the one with no inputs and two outputs and the one with two inputs and no outputs:
\begin{equation}
 \begin{aligned}\begin{tikzpicture}[xscale=1.35]        
        \node[blackdot] (1) at (0,1) {};
        \node (2) at (-0.5,2) {$H$};
        \node (3) at (0.5,2) {$H$};
        \draw[string,out=180,in=down] (1.center) to (2.south);
        \draw[string,out=0,in=down] (1.center) to (3.south);
    \end{tikzpicture}\end{aligned}
    ,
 \begin{aligned}\begin{tikzpicture}[xscale=1.35,yscale=-1]        
        \node[blackdot] (1) at (0,1) {};
        \node (2) at (-0.5,2) {$H$};
        \node (3) at (0.5,2) {$H$};
        \draw[string,out=180,in=down] (1.center) to (2.north);
        \draw[string,out=0,in=down] (1.center) to (3.north);
    \end{tikzpicture}\end{aligned}   
\end{equation}
These are sometimes referred to as cups and caps for obvious reasons. These morphisms enact a self-duality of $H$, which means that the following equations hold~\cite{cqm2014}:\\
\textit{(SN) Snake equation}
\begin{equation}
 \begin{aligned}\begin{tikzpicture}

          \node (2a) at (1.25,0.5) {};
          \node (2b)[blackdot] at (0.75,1.75) {};
          \node (3a)[blackdot] at (0,0.75){};
          \node (3b) at (-0.5,2){};

          \draw[string,out=180,in=0] (2b) to (3a);
          \draw[string,out=180,in=270] (3a) to (3b);

          \draw[string,out=90,in=00] (2a) to (2b);
      \end{tikzpicture}\end{aligned} 
      \quad= \quad
      \begin{aligned}\begin{tikzpicture}
      \node (A) at (0,0) {};
      \node(B) at (0,3) {};
      \draw[string] (A) to (B);
      \end{tikzpicture}\end{aligned}  
  \quad=\quad
    \begin{aligned}\begin{tikzpicture}

          \node (2a) at (1.75,1.5) {};
          \node (2d)[blackdot] at (1.25,0.25) {};
          \node (3d)[blackdot] at (0.5,1.25){};
          \node (3e) at (0,0){};
          
          \draw[string,out=180,in=0] (2d) to (3d);
          \draw[string,out=180,in=90] (3d) to (3e);
         \draw[string,out=270,in=00] (2a) to (2d);
          
      \end{tikzpicture}\end{aligned}    
      \end{equation}  
The trace of a linear operator $f$ is given graphically by~\cite{cqm2014}:
\begin{equation}\label{eq:trace}
     \tr (f):= \begin{pic}
  \begin{pgfonlayer}{background}
    \node (f) [morphism,wedge] at (0,0) {$f$};
  \end{pgfonlayer}
  \begin{pgfonlayer}{foreground}
    \draw[string] (f.north) to
        +(0,0.75) node [above] {};
    \draw[string] (f.south) to
        +(0,-0.75) node [below] {};
  \end{pgfonlayer}
  \node[blackdot] (A) at (1, 1.75){};
  \node[blackdot] (B) at (1, -1.75){};
  \draw[string] (2,-1) to (2,1);
  \draw[string,out=270,in=180] (0,-1) to (B);
  \draw[string,out=270,in=0] (2,-1) to (B);
  \draw[string,out=90,in=180] (0,1) to (A);
  \draw[string,out=90,in=0] (2,1) to (A);
  \end{pic}
    \end{equation}
Where in this case the cup and cap map are given by a classical structure. A swap map is necessary above the cup here in general as the duality may not be of this form. However, since classical structures are commutative I have elided it.

The comultiplication map \tinycomult[blackdot], of our classical structure copies the basis states of an orthonormal basis. Given a classical structure, the states that the comultiplication map copies always form an orthonormal basis. Given an orthonormal basis a classical structure can be defined as above by equations (1.25). So there is a one-to-one correspondence between orthonormal bases and classical structures. If the basis is not normalised then the corresponding structure will not be special~\cite{cqm2014}. In quantum theory pairs of mutually unbiased bases of Hilbert spaces are a well studied and important phenomenon. The classical structure analogue is pairs of complementary classical structures. 
\begin{definition}[Complementary Bases]
Given two orthonormal bases $\ket {a_i}$ and $\ket {b_j}$ for $H \in \ob (\cat{FHilb})$ of dimension $d$, they are mutually unbiased when:
\begin{equation}
|\inprod{a_i}{b_j}|^2 \, = \, \frac{1}{d}
\end{equation}
$\forall i,j \,, \, \, 0 \leq i,j < d-1$~\cite{cqm2014}.
\end{definition}
For the corresponding classical structures \tinymult[blackdot] and \tinymult[whitedot] this means that they satisfy the following equation~\cite{cqm2014}:
\begin{equation}
d
\begin{pic}[string]
\draw (-0.5,0.25) to (-0.5,1) node [blackdot] {} to [out=left, in=right] (-1,2) node [blackdot] {} to [out=left, in=right] (-1.5,1.5) node [whitedot] {} to [out=left, in=down] (-2,2) to [out=up, in=left] (-0.75,3) node (a) [whitedot] {} to [out=right, in=right] (-0.5,1);
\draw (a.center) to +(0,0.75);
\end{pic}
\quad=\quad
\begin{pic}[string]
\draw (0,0.25) to (0,1) node [blackdot] {};
\draw (0,3) node [whitedot] {} to (0,3.75);
\end{pic}
\end{equation}
This is equivalent to the following morphism being unitary~\cite{cqm2014}:
\begin{equation}
\sqrt{d}\,\,
\begin{aligned}
\begin{tikzpicture}[yscale=0.75,string]
\node (b) [blackdot] at (0,0) {};
\node (w) [whitedot] at (1,1) {};
\draw (-0.75,2) to [out=down, in=left] (b.center);
\draw (b.center) to [out=right, in=left] (w.center);
\draw (w.center) to (1,2);
\draw (b.center) to (0,-1);
\draw (w.center) to [out=right, in=up] (1.75,-1);
\end{tikzpicture}
\end{aligned}
\end{equation} 
And since our structures are commutative here we can interchange the roles of black and white in the above equation.

Given the definition of our classical structure maps, spiders can also be written in terms of sums of their copyable basis states and the corresponding  effects (i.e. the adjoints of the basis states) as follows~\cite{qcs}:
 \begin{equation}\label{eq:normalform}
    \overbrace{\underbrace{
    \begin{pic}
     \node[blackdot,scale=2] (A){};
     \draw[string,out=190,in=90] (A) to (-2,-1);
     \draw[string,out=350,in=90] (A) to (2,-1);
     \draw[string,out=190,in=90] (A) to (-1.7,-1);
     \draw[string,out=350,in=90] (A) to (1.7,-1);  
     \draw[string,out=170,in=270] (A) to (-2,1);
     \draw[string,out=10,in=270] (A) to (2,1);
     \draw[string,out=170,in=270] (A) to (-1.7,1);
     \draw[string,out=10,in=270] (A) to (1.7,1); 
     \draw[loosely dotted] (-1.65,0.9) to (1.65,0.9); 
     \draw[loosely dotted] (-1.65,-0.9) to (1.65,-0.9);          
    \end{pic}}_{m}}^{n}
    \quad=\quad
    \sum^{d-1}_{i=0}
    \overbrace{\underbrace{
    \begin{pic}
     \node[state,black,scale=0.5] (A)at (-1.75,0.3){$i$};
     \node[state,hflip,black,scale=0.5] (B)at (-2,-0.3){$i$};
     \node[state,black,scale=0.5] (C)at (1.75,0.3){$i$};
     \node[state,hflip,black,scale=0.5] (D)at (2,-0.3){$i$};
     \node[state,black,scale=0.5] (A')at (-1.25,0.3){$i$};
     \node[state,hflip,black,scale=0.5] (B')at (-1.5,-0.3){$i$};
     \node[state,black,scale=0.5] (C')at (1.25,0.3){$i$};
     \node[state,hflip,black,scale=0.5] (D')at (1.5,-0.3){$i$};
     \draw[string] (A) to (-1.75,1);
     \draw[string] (B) to (-2,-1);
     \draw[string] (C) to (1.75,1);
     \draw[string] (D) to (2,-1);  
     \draw[string] (A') to (-1.25,1);
     \draw[string] (B') to (-1.5,-1);
     \draw[string] (C') to (1.25,1);
     \draw[string] (D') to (1.5,-1); 
     \draw[loosely dotted] (-1,0.1) to (1,0.1); 
     \draw[loosely dotted] (-1.25,-0.1) to (1.25,-0.1);          
    \end{pic}}_{m}}^{n}
  \end{equation}
In the case with one input and one output we have the spectral decomposition of the identity~\cite{qcs}:
\begin{equation*}
\begin{aligned}\begin{pic}
\node[blackdot] (A) {};
\draw[string] (0,-1) to (A);
\draw[string] (0,1) to (A);
\end{pic}\end{aligned}
\quad=\quad
\sum^{d-1}_{i=0}
\begin{aligned}\begin{pic}
\node[state,black,scale=0.5] (A) at (0,0.3) {$i$};
\node[state,hflip,black,scale=0.5] (B) at (0,-0.3) {$i$};
\draw[string] (0,-1) to (B);
\draw[string] (0,1) to (A);
\end{pic}\end{aligned}
\quad=\quad
\begin{aligned}\begin{pic}
\draw[string] (0,-1) to (0,1);
\end{pic}\end{aligned}
\end{equation*}


\chapter{Constructions of Unitary Error Bases}
There are two main constructions of unitary error bases in the literature, the so-called `algebraic construction' and the `combinatorial construction'. In this paper, I will be interested in the latter, as well as a construction that has arisen out of the categorical quantum mechanics programme of research, which I will refer to as the `MUB construction'. 

\section{The Combinatorial Construction}
The combinatorial construction is due to Werner ~\cite{werner2001all}, and produces a unitary error basis called a shift and multiply basis. However, the construction presented here is more along the lines of that found in ~\cite{klapp} rather than in  Werner's paper, which I find somewhat unintuitive. The construction of a shift and multiply basis consisting of $d^2$  unitary matrices of size $d\times d$ requires one latin square of order $d$, and $d$ Hadamard matrices of size $d\times d$. In general there need be no relationship between the Hadamard matrices and the latin square. 

\begin{definition}[Latin Square]
A latin square of order $d$ is a $d\times d$ array such that each row and each column is a permutation of the elements of the cyclic group $\mathbb{Z}_d$~\cite{latinsquare}.  \end{definition}
\begin{definition}[Hadamard Matrix]
A Hadamard matrix of order $d$ is a $d\times d$ matrix, $H$ such that each entry $|H_{ij}|=1$ and $H \circ H^{\dag}=d \mathbb{I}_d$~\cite{Hadamard}.  
\end{definition} 
 
Given a latin square, $L$ and a family of $d$ Hadamard matrices $H^j: 0 \leq j <d$ all of order $d$, define $P_j$ and $H_{\text{diag}(i)}$ as follows. $P_j$ is the $d\times d$ permutation matrix representing the $j^{th}$ row of $L$ and $H^j_{\text{diag}(i)}$ is the $d\times d $ matrix with the $i^{th}$ row of $H^j$ along the diagonal and zeros elsewhere. 

Then a shift and multiply basis $E_{ij}$ is obtained as follows:
\begin{equation}
E_{ij}=\{ P_j \circ H^j_{\text{diag}(i)}: 0\leq i,j < d \}
\end{equation}
There are proofs that a shift and multiply basis is a unitary error basis in both of the papers mentioned above. I will not repeat them here. However once I have developed the requisite machinery, I will produce a purely diagramatic proof of my own in Chapter 5. This proof gives more insight into how a shift and multiply basis produces a UEB.

Latin squares and Hadamard matrices each have a notion of equivalence. Two latin squares are said to be isotopic if one can be obtained from the other by permuting the symbols, the columns or the rows. Two Hadamard matrices are also equivalent if one is obtained from the other by permuting the rows or columns. In addition to when the rows or columns differ only by a phase factor. Isotopic latin squares and equivalent Hadamard matrices produce equivalent UEBs. However, every Hadamard must be modified in exactly the same way to ensure equivalence of UEBs~\cite{werner2001all}.

\section{The MUB Construction}

The following diagram represents abstract quantum teleportation in a symmetric monoidal category, where $(A,\tinymult[blackdot],\tinyunit[blackdot])$ and $(A,\tinymult[whitedot],\tinyunit[whitedot])$ are Frobenius algebras ~\cite{cqm2014}:\\
%
\begin{equation}\begin{pic}[string,yscale=.9,xscale=.8]
      \draw (-1,0) node [below] {$H$}
      to [out=up, in=left] (0,2.5) node [whitedot] {} 
      to [out=up, in=right] (-.5,3.5) node [whitedot] {}
      to [out=left, in=right] (-1,3) node [blackdot] {}
      to [out=left, in=left] (-1,7) node [blackdot] {}
      to (-1,7.5) node [ground] {};
      \draw (0,2.5) 
      to [out=right, in=left] (.5,2) node [blackdot] {}
      to [out=right, in=right] (.5,4) node [blackdot] {}
      to [out=left, in=right] (0,3.5) node [whitedot] {}
      to [out=left, in=left] (0,5.5) node [whitedot] {}
      to (0,6) node [ground] {};
      \draw (.5,2)
      to [out=down, in=left] (2.5,1) node [blackdot] {}
      to [out=right, in=left] (5,7) node [blackdot] {}
      to [out=right, in=left] (5.5,6.5) node [whitedot] {}
      to [out=right, in=right] (5.5,8) node [whitedot] {}
      to (5.5,8);
      \draw (-1,7)
      to [out=right, in=left] (-.5,6.5) node [whitedot] {}
      to [out=right, in=left] (5.5,8) 
      to (5.5,9) node [above] {$H$};
      \draw (0,5.5)
      to [out=right, in=left] (.5,5) node [blackdot] {}
      to [out=right, in=left] (4.5,7.5) node [blackdot] {}
      to [out=right, in=up] (5,7);
      \draw [gray, dashed] (-3,1.75) rectangle (2,4.25);
      \draw [gray, dashed] (.25,1.5) rectangle (4.5,0.75);
      \draw [gray, dashed] (3.5,5.5) rectangle (6,8.25);
      \node [gray] at (-2,0) {input};
      \node [gray] at (4.5,9) {output};
      \node [gray] at (4,.5) {preparation};
      \node [gray] at (-2.25,1.5) {measurement};
      \node [gray] at (5.5,5.25) {correction};
      \node [gray] at (-1,8.75) {classical communication};
    \end{pic} \end{equation}
This is more general than is necessary for our purposes. We make the choice of $\cat{FHilb}$ for our symmetric monoidal category and choose  dagger commutative special Frobenius algebras i.e classical structures as our Frobenius algebras. 
The set of all $d^2$ corrections form a unitary error basis which characterises a particular quantum teleportation protocol.

\subsection{Construction of a Pair of Complementary Classical Structures from an Abelian Group}
\begin{theorem}
Any Hadamard matrix order $d$ is the change of basis matrix between a pair of MUBs on a $d$ dimensional Hilbert space~\cite{mubs}.
\end{theorem}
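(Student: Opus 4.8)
The plan is to exhibit an explicit pair of bases for which the (normalised) Hadamard matrix implements the change of basis, and then verify the two defining conditions---orthonormality of each basis and the unbiasedness relation---by direct computation. The whole argument amounts to a dictionary translation between the two conditions defining a Hadamard matrix and the two conditions defining a pair of MUBs.

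First I would take the first basis to be the computational basis $\ket{i}$, $0 \le i < d$, of the $d$-dimensional space. For the second basis I would read off the normalised columns of $H$, setting $\ket{b_j} := \tfrac{1}{\sqrt d}\sum_{i} H_{ij}\ket{i}$. By construction the change of basis matrix between $\{\ket{i}\}$ and $\{\ket{b_j}\}$ has entries $\inprod{i}{b_j} = \tfrac{1}{\sqrt d}H_{ij}$, so it is exactly $\tfrac{1}{\sqrt d}H$; this is the sense in which $H$ itself, up to the global scalar $\sqrt d$ forced by unitarity, is the change of basis matrix.

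Next I would check that $\{\ket{b_j}\}$ is genuinely an orthonormal basis, and here the Hadamard condition $H \circ H^{\dag}=d\,\mathbb{I}_d$ does the work. Since $\tfrac{1}{\sqrt d}H$ is a square matrix satisfying $\tfrac{1}{\sqrt d}H \cdot (\tfrac{1}{\sqrt d}H)^{\dag}=\mathbb{I}_d$, it has a right inverse equal to its adjoint, hence, being square and finite dimensional, it is unitary and also satisfies $(\tfrac{1}{\sqrt d}H)^{\dag}\cdot\tfrac{1}{\sqrt d}H=\mathbb{I}_d$. Reading off the $(j,k)$ entry gives $\inprod{b_j}{b_k}=\tfrac1d (H^{\dag}H)_{jk}=\delta_{jk}$, so the $\ket{b_j}$ are orthonormal, and being $d$ vectors in a $d$-dimensional space they form a basis. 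Finally I would verify mutual unbiasedness, which is the only place the unit-modulus condition $|H_{ij}|=1$ is used: directly, $|\inprod{i}{b_j}|^2 = \tfrac1d|H_{ij}|^2 = \tfrac1d$ for all $i,j$, which is exactly the complementarity condition in the definition of Complementary Bases.

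I do not expect a serious obstacle, since the result is essentially definitional once the right pair of bases is chosen. The one point requiring care is the normalisation: the genuine change of basis matrix between orthonormal bases must be unitary, so it is $\tfrac{1}{\sqrt d}H$ rather than $H$, and the statement should be read up to this global scalar. A secondary subtlety worth spelling out is the passage from $H\circ H^{\dag}=d\,\mathbb{I}_d$ to $H^{\dag}\circ H=d\,\mathbb{I}_d$, which is immediate in finite dimensions but is precisely what guarantees that the columns, and not merely the rows, of $H$ are orthogonal.
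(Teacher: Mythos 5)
Your proof is correct and complete. One thing to be aware of: the paper does not actually prove this statement at all --- it is quoted from the literature (\cite{mubs}) and used as a black box --- so there is no internal proof to compare yours against; your argument supplies the missing content. It is the standard dictionary translation, and both halves are done properly: the row condition $H \circ H^{\dag}=d\,\mathbb{I}_d$ gives unitarity of $\tfrac{1}{\sqrt d}H$ (your remark that right-invertibility of a square matrix yields two-sided unitarity, hence column orthogonality, is exactly the step that must not be skipped), and the unit-modulus condition $|H_{ij}|=1$ gives the unbiasedness $|\inprod{i}{b_j}|^2=\tfrac1d$. The normalisation caveat you raise --- that the genuinely unitary change of basis is $\tfrac{1}{\sqrt d}H$ rather than $H$ --- is precisely the point the paper itself has to confront immediately after invoking this theorem: it distinguishes the non-orthonormal image basis $\ket{c_i}=F(\ket{a_i})$ (whose associated Frobenius algebra fails to be special) from its normalisation $\ket{b_i}=\tfrac{1}{\sqrt d}F(\ket{a_i})$, noting that the Fourier/Hadamard matrix satisfies $F^{\dag}F=d\,\mathbb{I}$ and so is not unitary. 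Your reading of the theorem ``up to the global scalar $\sqrt d$'' is therefore the one consistent with how the result is actually used downstream.
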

Given a finite abelian group $G$ of order $d$, a pair of classical structures can be canonically derived.

Take $H$ to be the $d$ dimensional Hilbert space with ONB given by the elements of $G$. Let us denote that ONB: 
\begin{equation*}\ket{a_i} \,,  \, \, 0\leq i <d\text{ where }a_i \in G\end{equation*}
\begin{definition}[Main Classical Structure]
Let us denote the classical structure for which $\ket{a_i}$ form a complete ONB of copyable states, $(H,\tinymult[blackdot],\tinyunit[blackdot])$ and call it the main classical structure for the group $G$.
\end{definition}
Let $F$ be the Fourier transform matrix of $G$. $F$ will be a Hadamard matrix of order $d$. Thus the vectors $F(\ket{a_i}) \,,  \, \, 0\leq i <d$ form a basis which is mutually unbiased to $\ket{a_i} \,,  \, \, 0\leq i <d$. Let us denote:
\begin{equation*}
\ket{c_i}:= F(\ket{a_i}) \,,  \, \, 0\leq i <d
\end{equation*}
\begin{definition}[$G$-Frobenius Algebra]
Let us denote the dagger Frobenius algebra for which $\ket{c_i}$ as above form a complete basis of copyable states as $(H,\tinymultagg,\tinyunitagg)$ and call it a $G$-Frobenius algebra.
 This Frobenius algebra has multiplication given by the linear extension of the binary operation of $G$.
\end{definition}
Please note that since $F$ is Hadamard, $F^{\dag}F=d\mathbb{I}$ and so $F$ is not a unitary transformation and in particular does not preserve norm.
Thus $\ket{c_i}$ is  not orthonormal and our $G$-Frobenius algebra $(H,\tinymultagg,\tinyunitagg)$, is not special. $G$-Frobenius algebras are commutative dagger Frobenius algebras (the commutativity coming from the abelian group's operation).

It is however, easy to see how $\ket{c_i}$ can be normalised. Note that $\frac{1}{\sqrt{d}}F$ is unitary.\\
Let:
\begin{equation*}
\ket{b_i}:=\frac{1}{\sqrt{d}}F(\ket{a_i}) \,,  \, \, 0\leq i <d
\end{equation*}
Since $\ket{b_i}$ is obtained from $\ket{a_i}$ which is orthonormal, via a unitary transformation, {\ket{b_i} is an ONB. In fact $\ket{b_i}$ is simply $\ket{c_i}$ normalised so $\ket{b_i}$ is also mutually unbiased to $\ket{a_i}$. \begin{definition}[Normalised $G$-Classical Structure]
Let us denote the classical structure for which $\ket{b_i}$ form a complete ONB of copyable states, $(H,\tinymult[whitedot],\tinyunit[whitedot])$ and call it a normalised $G$-classical structure.
\end{definition}
So to recap from our abelian group $G$, we have a pair of mutually unbiased ONB $\ket{a_i}$ and $\ket{b_i}$ (where $a_i \in G$) with corresponding classical structures, the main classical structure for $G$, $(H,\tinymult[blackdot],\tinyunit[blackdot])$ and the normalised $G$-classical structure $(H,\tinymult[whitedot],\tinyunit[whitedot])$ respectively.

In addition we have a non-normalised basis $\ket{c_i}$ which is mutually unbiased with $\ket{a_i}$ and corresponding (non-special) $G$-Frobenius algebra $(H,\tinymultagg,\tinyunitagg),$ complementary with our main classical structure. We also know that $\tinymultagg$ is the linear extension of the binary operation on $G$.
The following two Lemmas establish a relationship between $(H,\tinymultagg,\tinyunitagg)$ and $(H,\tinymult[whitedot],\tinyunit[whitedot])$.

\begin{lemma}
With $(H,\tinymultagg ,\tinyunitagg )$ and $(H,\tinymult[whitedot],\tinyunit[whitedot])$ as above the following equation holds:
\begin{equation}
\tinymultagg \quad = \quad \sqrt{d} \tinymult[whitedot]
\end{equation}
\end{lemma}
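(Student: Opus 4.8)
The plan is to prove the identity by reducing it to the corresponding statement about comultiplications and then dualising. Since $\tinymultagg$ and $\tinymult[whitedot]$ are both morphisms $H \otimes H \to H$, it suffices to check that they agree on the vectors $\ket{c_i} \otimes \ket{c_j}$, which span $H \otimes H$. The key structural observation is that both the $G$-Frobenius algebra and the normalised $G$-classical structure are \emph{dagger} Frobenius algebras, so in each case the multiplication is the adjoint of the comultiplication, i.e. $(\tinymultagg)^{\dag} = \tinycomultagg$ and $(\tinymult[whitedot])^{\dag} = \tinycomult[whitedot]$. Because $\sqrt{d}$ is a positive real, it is therefore equivalent, and much cleaner, to first establish the comultiplication identity $\tinycomultagg = \sqrt{d}\,\tinycomult[whitedot]$ and then take daggers of both sides. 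I would recast the whole argument in terms of comultiplications for this reason.

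The comultiplication computation uses only two inputs. First, by definition $\ket{b_i}$ and $\ket{c_i}$ are the copyable states of $\tinycomult[whitedot]$ and $\tinycomultagg$ respectively, so $\tinycomult[whitedot](\ket{b_i}) = \ket{b_i}\otimes\ket{b_i}$ and $\tinycomultagg(\ket{c_i}) = \ket{c_i}\otimes\ket{c_i}$. Second, the definitions $\ket{c_i} := F(\ket{a_i})$ and $\ket{b_i} := \tfrac{1}{\sqrt{d}}F(\ket{a_i})$ give the relation $\ket{c_i} = \sqrt{d}\,\ket{b_i}$. Combining these, for each $i$ one computes
\[
\tinycomultagg(\ket{c_i}) \;=\; \ket{c_i}\otimes\ket{c_i} \;=\; d\,\ket{b_i}\otimes\ket{b_i} \;=\; d\,\tinycomult[whitedot](\ket{b_i}) \;=\; \sqrt{d}\,\tinycomult[whitedot](\ket{c_i}),
\]
where the final step uses linearity together with $\ket{b_i} = \tfrac{1}{\sqrt{d}}\ket{c_i}$. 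Since $\{\ket{c_i}\}$ is a basis of $H$, this forces $\tinycomultagg = \sqrt{d}\,\tinycomult[whitedot]$ as maps. Taking adjoints of both sides and using $(\tinymultagg)^{\dag} = \tinycomultagg$, $(\tinymult[whitedot])^{\dag} = \tinycomult[whitedot]$ and $\overline{\sqrt{d}} = \sqrt{d}$ yields $\tinymultagg = \sqrt{d}\,\tinymult[whitedot]$, as required.

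The main obstacle here is conceptual rather than computational: the $G$-Frobenius algebra is \emph{not} special, so its copyable basis $\ket{c_i}$ is not orthonormal, and one cannot directly read off the action of $\tinymultagg$ from the familiar special-structure formula $\tinymult[whitedot](\ket{b_i}\otimes\ket{b_j}) = \delta_{ij}\ket{b_i}$. A direct route is possible — one can establish the orthogonality relation $\inprod{c_i}{c_j} = d\,\delta_{ij}$ from $F^{\dag}F = d\mathbb{I}$, deduce $\tinymultagg(\ket{c_i}\otimes\ket{c_j}) = d\,\delta_{ij}\ket{c_i}$, and compare against $\sqrt{d}\,\tinymult[whitedot]$ — but this is error-prone because of the normalisation constants that must be tracked through a non-special multiplication. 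The comultiplication route avoids this entirely: the copying identity $\tinycomultagg(\ket{c_i}) = \ket{c_i}\otimes\ket{c_i}$ holds with no normalisation hypothesis, and the only point at which the scaling enters is the clean substitution $\ket{c_i} = \sqrt{d}\,\ket{b_i}$.
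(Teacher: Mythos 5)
Your proof is correct and follows essentially the same route as the paper's: both arguments verify the comultiplication identity $\tinycomultagg = \sqrt{d}\,\tinycomult[whitedot]$ on the non-normalised copyable basis, using copyability of each structure together with the relation $\ket{c_i} = \sqrt{d}\,\ket{b_i}$, and then conclude equality of the linear maps by linearity. The only difference is that you make explicit the final adjoint step converting the comultiplication identity into the stated multiplication identity (using that both are dagger structures and $\sqrt{d}$ is real), a step the paper leaves implicit.
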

\begin{proof}
Let $\ket{a_i} \, , \, \, 0 \leq i < d$ be the white ONB. The non-normalised basis corresponding to $(H,\tinymultagg ,\tinyunitagg )$ is thus given by $\sqrt{d} \ket{a_i} \, , \, \, 0 \leq i < d$. These states are copied by $\tinycomultagg$, so we have: $\forall i \, , \, \, 0 \leq i < d$
\begin{equation*}
 \tinycomultagg(\sqrt{d}\ket{a_i})=\sqrt{d}\ket{a_i} \otimes \sqrt{d}\ket{a_i}=d\ket{a_i}\otimes\ket{a_i} \end{equation*}
Now $\tinycomult[whitedot]$ copies the states $\ket{a_i}$ so: $\forall i \, , \, \, 0 \leq i < d$ 
\begin{equation*}
\sqrt{d}\tinycomult[whitedot](\sqrt{d}\ket{a_i})=d\tinycomult[whitedot](\ket{a_i})=d\ket{a_i}\otimes\ket{a_i}
\end{equation*}
Hence $\sqrt{d}\tinycomult[whitedot]$ and $\tinycomultagg$ are equal on all elements of the basis $\sqrt{d} \ket{b_i} \, , \, \, 0 \leq i < d$ and are thus equal as linear maps.
\end{proof}
\begin{lemma}
\begin{equation}
\tinyunitagg=\frac{1}{\sqrt{d}}\tinyunit[whitedot]
\end{equation}
\end{lemma}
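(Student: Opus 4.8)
The plan is to deduce this unit identity purely formally from the previous Lemma, using uniqueness of the unit in a monoid, rather than recomputing anything on basis vectors. Recall first the elementary fact that in any monoidal category the two-sided unit of a monoid is uniquely determined by its multiplication: if states $u_1,u_2\colon I\to H$ are both left and right units for a fixed multiplication $m$, then $u_1 = m\circ(u_1\otimes u_2) = u_2$, the first equality because $u_2$ is a right unit and the second because $u_1$ is a left unit. Since the $G$-Frobenius algebra $(H,\tinymultagg,\tinyunitagg)$ is in particular a monoid, $\tinyunitagg$ is the unique state that is a unit for $\tinymultagg$.

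First I would check that $\tfrac{1}{\sqrt d}\tinyunit[whitedot]$ is such a unit. Pulling the scalar through and invoking the scaling relation $\tinymultagg = \sqrt d\,\tinymult[whitedot]$ from the previous Lemma together with the unit law of the normalised $G$-classical structure gives
\begin{equation*}
\tinymultagg\circ\big(\tfrac{1}{\sqrt d}\tinyunit[whitedot]\otimes\id[H]\big)
= \tfrac{1}{\sqrt d}\,\sqrt d\;\tinymult[whitedot]\circ\big(\tinyunit[whitedot]\otimes\id[H]\big)
= \tinymult[whitedot]\circ\big(\tinyunit[whitedot]\otimes\id[H]\big)
= \id[H],
\end{equation*}
and the right unit law follows identically, or from commutativity of both structures. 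Hence $\tfrac{1}{\sqrt d}\tinyunit[whitedot]$ is a unit for $\tinymultagg$, and uniqueness forces $\tinyunitagg = \tfrac{1}{\sqrt d}\tinyunit[whitedot]$.

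I expect the only real difficulty to be scalar bookkeeping: the factor $\sqrt d$ carried by $\tinymultagg$ must cancel the $\tfrac{1}{\sqrt d}$ in the proposed unit, which is exactly what the multiplication scaling relation provides. If instead one prefers an argument in the style of the previous Lemma, one can work on copyable states: the $G$-Frobenius algebra copies $\ket{c_i}=\sqrt d\,\ket{b_i}$, so the scaling relation gives $\tinymultagg(\ket{c_i}\otimes\ket{c_j})=d\,\delta_{ij}\ket{c_i}$, and solving the unit law $\tinymultagg\circ(\tinyunitagg\otimes\id[H])=\id[H]$ pins down $\tinyunitagg=\tfrac{1}{d}\sum_i\ket{c_i}=\tfrac{1}{\sqrt d}\sum_i\ket{b_i}=\tfrac{1}{\sqrt d}\tinyunit[whitedot]$. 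Here the subtlety worth flagging is that non-speciality of the $G$-Frobenius algebra means its unit is not the bare sum of copyable states but carries the compensating factor $\tfrac{1}{d}$.
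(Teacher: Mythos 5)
Your proof is correct and follows essentially the same route as the paper: both use Lemma 2.2 ($\tinymultagg = \sqrt{d}\,\tinymult[whitedot]$) together with the unit law of the white classical structure to show that $\tfrac{1}{\sqrt d}\tinyunit[whitedot]$ is a two-sided unit for $\tinymultagg$, and then conclude equality with $\tinyunitagg$. Your version is slightly tidier in that you state the uniqueness-of-units argument explicitly, whereas the paper leaves that final step implicit.
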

\begin{proof}
The first equation is by unitarity of white. The implication follows by the Lemma 2.2 above:
\begin{equation*}\left[
\begin{aligned}\begin{tikzpicture}

      \node[whitedot] (0a) at (-1,-0.5) {};
          \node (0b) at (-0.5,-0.5) {};
          \node (0c) at (0,-0.5) {};
          \node[whitedot] (7) at (-0.5,0) {};       
          \node (10a) at (-0.5,0.5) {};       
          \draw[string, out=90, in =180] (0a) to (7);
          \draw[string, out=0, in=90] (7) to (0c.center);
          \draw[string] (7) to (10a.center);

\end{tikzpicture}\end{aligned} 
\quad=\quad 
\begin{aligned}\begin{tikzpicture}
           \node (10a) at (-0.5,0) {}; 
           \node (5a) at (-0.5,1) {}; 
           \draw[string] (5a.center) to (10a.center);
\end{tikzpicture}\end{aligned}\right]
\quad \Rightarrow \quad 
\left[
\frac{1}{\sqrt{d}}\begin{aligned}\begin{tikzpicture}

      \node[whitedot] (0a) at (-1,-0.5) {};
          \node (0b) at (-0.5,-0.5) {};
          \node (0c) at (0,-0.5) {};
          \node[agg] (7) at (-0.5,0) {};       
          \node (10a) at (-0.5,0.5) {};       
          \draw[string, out=90, in =180] (0a) to (7);
          \draw[string, out=0, in=90] (7) to (0c.center);
          \draw[string] (7) to (10a.center);

\end{tikzpicture}\end{aligned} 
\quad=\quad 
\begin{aligned}\begin{tikzpicture}
           \node (10a) at (-0.5,0) {}; 
           \node (5a) at (-0.5,1) {}; 
           \draw[string] (5a.center) to (10a.center);
\end{tikzpicture}\end{aligned}\right]
\end{equation*}
So $\frac{1}{\sqrt{d}} \tinyunit[whitedot]$ is the left unit for \tinymultagg and by similar reasoning the right unit as well. Thus $\tinyunitagg=\frac{1}{\sqrt{d}}\tinyunit[whitedot]$
\end{proof}
 \begin{definition}[MUB Construction]
Given an abelian group $G$of  order $d$, let $(H,\tinymult,\tinyunit)$ and $(H,\tinymult[whitedot],\tinyunit[whitedot])$  be the main classical structure and normalised $G$-classical structure canonically derived as above. 

Then 
\begin{equation}
  \label{eq:}
  M_{ij}
   \quad := \quad d
\begin{aligned}\begin{tikzpicture}[yscale=0.9375]
          \node (0a) at (0,0) {};
          \node (i)[state,black,scale=0.5] at (-1,1) {$j$};
          \node (j)[state,scale=0.5] at (-0.5,1) {$i$};
          \node (2a)[blackdot] at (0,2) {};
          \node (2b)[blackdot] at (0.5,1.5) {};
          \node (3a)[whitedot] at (1,1){};
          \node (3b)[whitedot] at (1,2.5){};
          \node (0b) at (1,3.5){};
          \draw[string,out=90,in=180] (0a) to (2b);
          \draw[string,out=0,in=180] (2b) to (3a);
          \draw[string,out=0,in=0] (3a) to (3b);
          \draw[string] (3b) to (0b);
          \draw[string,out=180,in=90] (3b) to(i);
          \draw[string,out=90,in=180] (j) to (2a);
          \draw[string,out=0,in=90] (2a) to (2b);
         
      \end{tikzpicture}\end{aligned}
      \end{equation}
 is an \textit{MUB error basis}. The construction of an MUB error basis from an abelian group is the MUB construction. 
\end{definition}
\textit{Note: this definition is my own and will not (yet) be found in the literature.}\\

\begin{theorem}
An MUB error basis is a unitary error basis.   
\end{theorem}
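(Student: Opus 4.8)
The plan is to verify directly, in the graphical calculus, the two conditions defining a unitary error basis for the family $\{M_{ij}:0\le i,j<d\}$: first that each $M_{ij}$ is unitary, and second the trace-orthogonality relation $\tr(M_{ij}^{\dag}\circ M_{kl})=d\,\delta_{ik}\delta_{jl}$. The guiding observation is that, up to the scalar $d$, the diagram defining $M_{ij}$ is the composite of two elementary operations: black-multiplication of the input with the copyable state $\ket{b_i}$ of the white structure, followed by white-multiplication of the result with the copyable state $\ket{a_j}$ of the black structure. Because $\ket{b_i}$ is a copyable state of the structure complementary to black, black-multiplication by it is $\tfrac{1}{\sqrt d}$ times a unitary, and symmetrically for white-multiplication by $\ket{a_j}$; the two factors of $\tfrac{1}{\sqrt d}$ are exactly what the prefactor $d$ is there to cancel, so $M_{ij}$ should come out unitary on the nose.

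For unitarity I would compose $M_{ij}$ with its adjoint $M_{ij}^{\dag}$ (the vertical reflection of the diagram) and reduce $M_{ij}^{\dag}\circ M_{ij}$ to the identity wire. The two copies of $\ket{b_i}$ and its effect meet along the black structure and the two copies of $\ket{a_j}$ and its effect meet along the white structure; fusing each pair through the relevant spider, and then using the Frobenius law and specialness to straighten the middle, leaves alternating black--white loops pairing each unbiased state against the opposite structure. This is precisely the configuration controlled by the complementarity equation displayed earlier (the one carrying the factor $d$, equivalently the unitarity of the $\sqrt d$-scaled coupling map): applying it decouples the two colours and collapses the loops, the factors of $\tfrac1d$ contributed by these applications exactly absorbing the overall $d^2$ coming from $M_{ij}^{\dag}\circ M_{ij}$, so that the identity remains.

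For the trace condition I would write $\tr(M_{ij}^{\dag}\circ M_{kl})$ as a single closed diagram carrying the scalar $d^2$, close it using the cup and cap of the black structure as in the graphical definition of the trace, and again fuse the like-coloured state--effect pairs: $\ket{b_i}$ against $\bra{b_k}$ on the black structure and $\ket{a_j}$ against $\bra{a_l}$ on the white structure. Applying the complementarity equation to the resulting two-coloured loop disconnects the colours, after which the orthonormality of each basis---graphically, a copyable state composed with the adjoint of another copyable state of the same structure---yields $\langle b_k|b_i\rangle=\delta_{ik}$ and $\langle a_l|a_j\rangle=\delta_{jl}$. Collecting the scalars---the $d^2$ from the definition, the factors of $\tfrac1d$ contributed by complementarity, and any remaining dimension loop---leaves the required $d\,\delta_{ik}\delta_{jl}$.

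The main obstacle will be the trace computation rather than the unitarity, because there one genuinely manipulates a closed two-coloured diagram whose only colour-decoupling move is the complementarity equation, which must be applied in exactly the right place and orientation; tracking the accumulated powers of $d$ and $\sqrt d$ through that step is where the argument is most delicate. A secondary point to handle with care is that the multiplication appearing in the construction is really that of the non-normalised $G$-Frobenius algebra, which differs from the normalised white structure by the factor $\sqrt d$ recorded in the two preceding Lemmas; I would rewrite everything in terms of the normalised white structure before evaluating, so that all scalar bookkeeping is done with genuine classical structures and the factors of $\sqrt d$ are accounted for up front.
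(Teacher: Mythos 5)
Your proposal is correct and takes essentially the same route as the paper: the paper likewise factors $M_{ij}$ into $\sqrt{d}$-scaled multiplications by complementary basis states (its $U_1$ and $U_3$), proves these unitary via the complementarity equation, and establishes trace-orthogonality by closing the diagram, fusing spiders, and applying complementarity once to collapse $d^2$ to the required $d\,\delta_{ii'}\delta_{jj'}$. The only discrepancy is that your ``two-operation'' reading of $M_{ij}$ omits the middle cup/cap map (the paper's $U_2$, which is a nontrivial unitary proved so by the snake equations), but since your actual plan ``straightens the middle'' using exactly those duality moves, nothing in the argument is lost.
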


\begin{proof}
$M_{ij}$ can be simplified as follows:
\\
\begin{equation} d
\begin{aligned}\begin{tikzpicture}[yscale=0.75,xscale=0.65]
          \node (0a) at (0,0) {};
          \node (i)[state,black,scale=0.5] at (-1,1) {$j$};
          \node (j)[state,scale=0.5] at (-0.5,1) {$i$};
          \node (2a)[blackdot] at (0,2) {};
          \node (2b)[blackdot] at (0.5,1.5) {};
          \node (3a)[whitedot] at (1,1){};
          \node (3b)[whitedot] at (1,2.5){};
          \node (0b) at (1,3.5){};
          \draw[string,out=90,in=180] (0a) to (2b);
          \draw[string,out=0,in=180] (2b) to (3a);
          \draw[string,out=0,in=0] (3a) to (3b);
          \draw[string] (3b) to (0b);
          \draw[string,out=180,in=90] (3b) to(i);
          \draw[string,out=90,in=180] (j) to (2a);
          \draw[string,out=0,in=90] (2a) to (2b);
         
      \end{tikzpicture}\end{aligned}
      \quad \overset{(\text{Wh,C) and (Bl,C}) }{=}\quad
     d \begin{aligned}\begin{tikzpicture}[yscale=0.75,xscale=0.65]
          \node (0a) at (1,-0.5) {};
          \node (0c) at (1,1){};
          \node (i)[state,black,scale=0.5] at (-1.5,1.5) {$j$};
          \node (j)[state,scale=0.5] at(1.5,1.5) {$i$};
          \node (2a)[blackdot] at (1,2) {};
          \node (2b)[blackdot] at (0.5,1.5) {};
          \node (3a)[whitedot] at (0,1){};
          \node (3b)[whitedot] at (-1,2){};
          \node (0b) at (-1,3){};
          \draw[string,out=90,in=0] (0c.center) to (2b);
          \draw[string] (0a) to (0c.center);
          \draw[string,out=180,in=0] (2b) to (3a);
          \draw[string,out=180,in=0] (3a) to (3b);
          \draw[string] (3b) to (0b);
          \draw[string,out=180,in=90] (3b) to(i);
          \draw[string,out=90,in=0] (j) to (2a);
          \draw[string,out=180,in=90] (2a) to (2b);
         
      \end{tikzpicture}\end{aligned}
       \quad \overset{(\text{Bl,A})}{=} \quad
    d  \begin{aligned}\begin{tikzpicture}[yscale=0.75,xscale=0.65]
          \node (0a) at (0.5,-1) {};
          \node (i)[state,black,scale=0.5] at (-1.25,1.5) {$j$};
          \node (j)[state,scale=0.5] at(2,0) {$i$};
          \node (2a)[blackdot] at (1.5,0.5) {};
          \node (2b)[blackdot] at (0.75,1.5) {};
          \node (3a)[whitedot] at (0,1){};
          \node (3b)[whitedot] at (-0.75,2){};
          \node (0b) at (-0.75,3){};
          \draw[string,out=90,in=180] (0a) to (2a);
          \draw[string,out=180,in=0] (2b) to (3a);
          \draw[string,out=180,in=0] (3a) to (3b);
          \draw[string] (3b) to (0b);
          \draw[string,out=180,in=90] (3b) to(i);
          \draw[string,out=90,in=0] (j) to (2a);
          \draw[string,out=90,in=00] (2a) to (2b);
         
      \end{tikzpicture}\end{aligned}
      \end{equation}\\  
      So 
   \begin{equation}
  M_{ij}
   \quad =\quad
   d \begin{aligned}\begin{tikzpicture}[yscale=0.75,xscale=0.65]
          \node (0a) at (0.5,-1) {};
          \node (i)[state,black,scale=0.5] at (-1.25,1.5) {$j$};
          \node (j)[state,scale=0.5] at(2,0) {$i$};
          \node (2a)[blackdot] at (1.5,0.5) {};
          \node (2b)[blackdot] at (0.75,1.5) {};
          \node (3a)[whitedot] at (0,1){};
          \node (3b)[whitedot] at (-0.75,2){};
          \node (0b) at (-0.75,3){};
          \draw[string,out=90,in=180] (0a) to (2a);
          \draw[string,out=180,in=0] (2b) to (3a);
          \draw[string,out=180,in=0] (3a) to (3b);
          \draw[string] (3b) to (0b);
          \draw[string,out=180,in=90] (3b) to(i);
          \draw[string,out=90,in=0] (j) to (2a);
          \draw[string,out=90,in=00] (2a) to (2b);
         
      \end{tikzpicture}\end{aligned}
      \end{equation}
This is the form of $M_{ij}$ that I will favour henceforth.\\
\bigskip\\
\textit{$M_{ij}$ is a unitary} \\
$M_{ij}$ is the composition of the following three matrices:
\\
 \begin{equation}\sqrt{d} \begin{aligned}\begin{tikzpicture}
\node (0a) at (0.75,-1) {};
         
          \node (j)[state,scale=0.5] at(2,0) {$i$};
          \node (2a)[blackdot] at (1.5,0.5) {};
          \node (2b) at (1.5,1.5) {};
          
          \draw[string,out=90,in=180] (0a) to (2a);

          \draw[string,out=90,in=0] (j) to (2a);
          \draw[string,out=90,in=270] (2a) to (2b);
         
      \end{tikzpicture}\end{aligned}
      \quad,\quad
       \begin{aligned}\begin{tikzpicture}

          \node (2a) at (1.25,0.5) {};
          \node (2b)[blackdot] at (0.75,1.75) {};
          \node (3a)[whitedot] at (0,0.75){};
          \node (3b) at (-0.5,2){};

          \draw[string,out=180,in=0] (2b) to (3a);
          \draw[string,out=180,in=270] (3a) to (3b);

          \draw[string,out=90,in=00] (2a) to (2b);
      \end{tikzpicture}\end{aligned}
      \quad,\quad
      \sqrt{d} \begin{aligned}\begin{tikzpicture}
\node (0a) at (0.75,-1) {};
         
          \node (i)[state,black,scale=0.5] at(-0.5,0) {$j$};
          \node (2a)[whitedot] at (0,0.5) {};
          \node (2b) at (0,1.5) {};
          
          \draw[string,out=90,in=0] (0a) to (2a);

          \draw[string,out=90,in=180] (i) to (2a);
          \draw[string,out=90,in=270] (2a) to (2b);
         
      \end{tikzpicture}\end{aligned}\end{equation}
   Let us label them $U_1$, $U_2$ and $U_3$, respectively.\\
\textit{$U_1$ is  unitary}\\
The first expression here comes from the fact that the black basis states form a complete copyable basis for the main classical structure, the first equivalence follows by notational simplification and the third is by equation (1.31).
\begin{equation}
\forall a \left[
\sqrt{d} \sqrt{d} \begin{aligned}\begin{pic}[yscale=0.75]
\node[blackdot] (1) at (0,0) {};
\node[blackdot] (2) at (0,1) {};
\node[state,scale=0.5] (b) at (-0.5,-1) {$i$};
\node[state,hflip,scale=0.5] (a) at (-0.5,2) {$i$};
\node[state,black,scale=0.5] (5) at (0.5,-1) {$a$};
\draw[string] (1) to (2);
\draw[string,out=180,in=90] (1) to (b);
\draw[string,out=180,in=270] (2) to (a);
\draw[string,out=90,in=0] (0.5,-1) to (1);
\draw[string,out=270,in=0] (0.5,2) to (2);

\end{pic}\end{aligned}
\quad=\quad
\sqrt{d} \sqrt{d}\begin{aligned}\begin{pic}[yscale=0.75]
\node[state,black,scale=0.5] (5) at (0.5,-1) {$a$};
\node[state,scale=0.5] (b) at (-0.5,-1) {$i$};
\node[state,black,hflip,scale=0.5] (6) at (-0.5,-1) {$a$};
\node[state,hflip,scale=0.5] (a) at (-0.5,2) {$i$};
\node[state,black,scale=0.5] (7) at (-0.5,2) {$a$};
\draw[string] (5) to (0.5,2);
\end{pic}\end{aligned}
\, \, \right]
\quad \Leftrightarrow \quad
\forall a \left[
d \begin{aligned}\begin{pic}[yscale=0.75]
\node[blackdot] (1) at (0,0) {};
\node[blackdot] (2) at (0,1) {};
\node[state,scale=0.5] (b) at (-0.5,-1) {$i$};
\node[state,hflip,scale=0.5] (a) at (-0.5,2) {$i$};
\node[state,black,scale=0.5] (5) at (0.5,-1) {$a$};
\draw[string] (1) to (2);
\draw[string,out=180,in=90] (1) to (b);
\draw[string,out=180,in=270] (2) to (a);
\draw[string,out=90,in=0] (0.5,-1) to (1);
\draw[string,out=270,in=0] (0.5,2) to (2);

\end{pic}\end{aligned}
\quad=\quad 
d|\inprod{a}{i}|^2 \begin{aligned}\begin{pic}[yscale=0.75]
\node[state,black,scale=0.5] (5) at (0.5,-1) {$a$};

\draw[string] (5) to (0.5,2);
\end{pic}\end{aligned} \right]
\quad $$\\$$ \Leftrightarrow \quad
 \left[
d \begin{aligned}\begin{pic}[yscale=0.75]
\node[blackdot] (1) at (0,0) {};
\node[blackdot] (2) at (0,1) {};
\node[state,scale=0.5] (b) at (-0.5,-1) {$i$};
\node[state,hflip,scale=0.5] (a) at (-0.5,2) {$i$};
\node (5) at (0.5,-1) {};
\draw[string] (1) to (2);
\draw[string,out=180,in=90] (1) to (b);
\draw[string,out=180,in=270] (2) to (a);
\draw[string,out=90,in=0] (0.5,-1) to (1);
\draw[string,out=270,in=0] (0.5,2) to (2);

\end{pic}\end{aligned}
\quad=\quad 
 \begin{aligned}\begin{pic}[yscale=0.75]
\node (5) at (0.5,-1) {};

\draw[string] (5.center) to (0.5,2);
\end{pic}\end{aligned} \right]
\end{equation} 
The other direction is similar. Hence $U_1$ is unitary.\\
\textit{$U_2$ is  unitary}
\\

\begin{equation}
\begin{aligned}\begin{tikzpicture}

          \node (2a) at (0,0) {};
          \node (2b)[blackdot] at (-0.5,1.25) {};
          \node (3a)[whitedot] at (-1.25,0.25){};
          \node (3b) at (-1.75,1.5){};

          \draw[string,out=180,in=0] (2b) to (3a);
          \draw[string,out=180,in=270] (3a) to (3b);
         
          \draw[string,out=90,in=00] (2a.center) to (2b);

          \node (2d)[blackdot] at (-0.5,-1.25) {};
          \node (3d)[whitedot] at (-1.25,-0.25){};
          \node (3e) at (-1.75,-1.5){};
          
          \draw[string,out=180,in=0] (2d) to (3d);
          \draw[string,out=180,in=90] (3d) to (3e);
         
          \draw[string,out=270,in=00] (2a.center) to (2d);
      \end{tikzpicture}\end{aligned}
      \quad\overset{(\text{Bl,C and Wh,C})}{=} \quad
      \begin{aligned}\begin{tikzpicture}

          \node (2a) at (0,0) {};
          \node (2b)[blackdot] at (-0.5,1.25) {};
          \node (3a)[whitedot] at (-1.25,0.25){};
          \node (3b) at (-1.75,1.5){};

          \draw[string,out=180,in=0] (2b) to (3a);
          \draw[string,out=180,in=270] (3a) to (3b);
         
          \draw[string,out=90,in=00] (2a.center) to (2b);

          \node (2d)[blackdot] at (0.5,-1.25) {};
          \node (3d)[whitedot] at (1.25,-0.25){};
          \node (3e) at (1.75,-1.5){};
          
          \draw[string,out=0,in=180] (2d) to (3d);
          \draw[string,out=0,in=90] (3d) to (3e);
         
          \draw[string,out=270,in=180] (2a.center) to (2d);
      \end{tikzpicture}\end{aligned}
      \quad\overset{(\text{Bl,SN})}{=} \quad
        \begin{aligned}\begin{tikzpicture}

          \node (2a) at (1.25,0.5) {};
          \node (2b)[whitedot] at (0.75,1.75) {};
          \node (3a)[whitedot] at (0,0.75){};
          \node (3b) at (-0.5,2){};

          \draw[string,out=180,in=0] (2b) to (3a);
          \draw[string,out=180,in=270] (3a) to (3b);

          \draw[string,out=90,in=00] (2a) to (2b);
      \end{tikzpicture}\end{aligned} 
      \quad\overset{(\text{Wh,SN})}{=} \quad
      \begin{aligned}\begin{tikzpicture}
      \node (A) at (0,0) {};
      \node(B) at (0,3) {};
      \draw[string] (A) to (B);
      \end{tikzpicture}\end{aligned}       
 \end{equation}       
 \begin{equation}
\begin{aligned}\begin{tikzpicture}

          \node (2a) at (1.75,1.5) {};
          \node (2d)[blackdot] at (1.25,0.25) {};
          \node (3d)[whitedot] at (0.5,1.25){};
          \node (3e) at (0,0){};
          
          \draw[string,out=180,in=0] (2d) to (3d);
          \draw[string,out=180,in=90] (3d) to (3e.center);
         \draw[string,out=270,in=00] (2a) to (2d);

          \node (2e) at (1.75,-1.5) {};
          \node (2b)[blackdot] at (1.25,-0.25) {};
          \node (3a)[whitedot] at (0.5,-1.25){};

          \draw[string,out=180,in=0] (2b) to (3a);
          \draw[string,out=180,in=270] (3a) to (3e.center);
         \draw[string,out=90,in=00] (2e) to (2b);         
      \end{tikzpicture}\end{aligned}
      \quad\overset{(\text{Bl,C and Wh,C})}{=} \quad
      \begin{aligned}\begin{tikzpicture}
          
          \node (2a) at (1.75,1.5) {};
          \node (2d)[blackdot] at (1.25,0.25) {};
          \node (3d)[whitedot] at (0.5,1.25){};
          \node (3e) at (0,0){};
          
          \draw[string,out=180,in=0] (2d) to (3d);
          \draw[string,out=180,in=90] (3d) to (3e.center);
         \draw[string,out=270,in=00] (2a) to (2d);

          \node (2e) at (-1.75,-1.5) {};
          \node (2b)[blackdot] at (-1.25,-0.25) {};
          \node (3a)[whitedot] at (-0.5,-1.25){};

          \draw[string,out=0,in=180] (2b) to (3a);
          \draw[string,out=0,in=270] (3a) to (3e.center);
         \draw[string,out=90,in=180] (2e) to (2b);
      \end{tikzpicture}\end{aligned}
      \quad\overset{(\text{Wh,SN})}{=} \quad
        \begin{aligned}\begin{tikzpicture}

          \node (2a) at (1.75,1.5) {};
          \node (2d)[blackdot] at (1.25,0.25) {};
          \node (3d)[blackdot] at (0.5,1.25){};
          \node (3e) at (0,0){};
          
          \draw[string,out=180,in=0] (2d) to (3d);
          \draw[string,out=180,in=90] (3d) to (3e);
         \draw[string,out=270,in=00] (2a) to (2d);
          
      \end{tikzpicture}\end{aligned} 
      \quad\overset{(\text{Bl,SN})}{=} \quad
      \begin{aligned}\begin{tikzpicture}
      \node (A) at (0,0) {};
      \node(B) at (0,3) {};
      \draw[string] (A) to (B);
      \end{tikzpicture}\end{aligned}     
 \end{equation}
Hence $U_2$ is unitary.\\
\textit{$U_3$ is unitary}\\
   Note that the basis states for black are phases for white. Thus $U_3$ represents a phase shift, which is unitary, the factor of $\sqrt{d}$ being necessary because white is special~\cite{cqm2014}. Put another way, by equation (2.3) $U_3$ is equal to:
\begin{equation*}\begin{aligned}\begin{tikzpicture}
\node (0a) at (0.75,-1) {};
         
          \node (i)[state,black,scale=0.5] at(-0.5,0) {$j$};
          \node (2a)[agg] at (0,0.5) {};
          \node (2b) at (0,1.5) {};
          
          \draw[string,out=90,in=0] (0a) to (2a);

          \draw[string,out=90,in=180] (i) to (2a);
          \draw[string,out=90,in=270] (2a) to (2b);
         
\end{tikzpicture}\end{aligned}\end{equation*}
$U_3$ thus represents addition by the element $j$ in our abelian group and ${U_3}^{\dag}$ is subtraction by $j$. Composed in either order these clearly give the identity.\\

 \bigskip
 \textit{Orthogonality}\\
 Now we need to show that:
 \begin{equation}
 \tr(M_{ij}\circ {M}^{\dag}_{i'j'})=\delta_{ii'}\delta_{jj'}d   
\end{equation}

\bigskip
\begin{equation}
 \tr(M_{ij}\circ {M}^{\dag}_{i'j'})
   \quad = \quad 
 d^2 \begin{aligned}\begin{tikzpicture}[yscale=0.55,xscale=0.5]
          \node (0a) at (1.25,0) {};
          \node (i)[state,black,scale=0.5] at (-1.5,0.75) {$j$};
          \node (j)[state,scale=0.5] at(2.75,0.625) {$i$};
          \node (2a)[blackdot] at (2.25,1) {};
          \node (2b)[blackdot] at (1.25,4) {};
          \node (3a)[whitedot] at (0,1.5){};
          \node (3b)[whitedot] at (-0.75,2.5){};
          \node (0b)[whitedot] at (-2,3.25){};
          
          \draw[string,out=90,in=180] (0a.center) to (2a);
          \draw[string,out=180,in=0] (2b) to (3a);
          \draw[string,out=180,in=0] (3a) to (3b);
          \draw[string,out=90,in=0] (3b) to (0b);
          \draw[string,out=180,in=90] (3b) to(i);
          \draw[string,out=90,in=0] (j) to (2a);
          \draw[string,out=90,in=00] (2a) to (2b);

          \node (i')[state, hflip,black,scale=0.5] at (-1.5,-0.75) {$j'$};
          \node (j')[state, hflip,scale=0.5] at(2.75,-0.625) {$i'$};
          \node (2a')[blackdot] at (2.25,-1) {};
          \node (2b')[blackdot] at (1.25,-4) {};
          \node (3a')[whitedot] at (0,-1.5){};
          \node (3b')[whitedot] at (-0.75,-2.5){};
          \node (0b')[whitedot] at (-2,-3.25){};
         \node (0) at (-3.5,0){};
          \draw[string,out=270,in=180] (0a.center) to (2a');
          \draw[string,out=180,in=0] (2b') to (3a');
          \draw[string,out=180,in=0] (3a') to (3b');
          \draw[string, out=270,in=0] (3b') to (0b');
          \draw[string,out=180,in=270] (3b') to(i');
          \draw[string,out=270,in=0] (j') to (2a');
          \draw[string,out=270,in=0] (2a') to (2b');
          
          \draw[string,out=180,in=90] (0b) to (0.center);
          \draw[string,out=180,in=270] (0b') to (0.center);
           \draw [gray, dashed] (-4.25,1.25) rectangle (1.25,3.5);
            \draw [gray, dashed] (-4.25,-1.25) rectangle (1.25,-3.5);
         
      \end{tikzpicture}\end{aligned}
      \quad\overset{2\times(\text{Wh,SM})}{=}\quad
 d^2  \begin{aligned}\begin{tikzpicture}[yscale=0.2,xscale=0.25]  
   \node (w1)[blackdot] at (0,0) {};
   \node (b1)[whitedot] at (-2.5,3.25){};
   \node (w2)[blackdot] at (2.5,3.25) {};
   \node (b2)[whitedot] at (-2.5,9.75){};
   \node (w4)[blackdot] at (0,13) {};
   \node (w3)[blackdot] at (2.5,9.75){};
   \node (i)[state,black,scale=0.5] at (-0.5,8.5){$j$};
   \node (i')[state,hflip,black,scale=0.5] at (-0.5,4.5){$j'$};
   \node (j)[state, scale=0.5] at (4.5,8.5) {$i$};
   \node (j')[state,hflip,scale=0.5] at (4.5,4.5){$i'$};
   
   \draw[string,out=180,in=270] (w1) to (b1);
   \draw[string, out=180,in=180] (b1) to (b2);
   \draw[string,out=0,in=270] (w1) to (w2);
   \draw[string, out=180,in=180] (w2) to (w3);   
   \draw[string, out=90,in=180] (b2) to (w4); 
   \draw[string, out=0,in=90] (w4) to (w3); 
   \draw[string, out=0,in=270] (b1) to (i');   
   \draw[string, out=0,in=90] (b2) to (i);
   \draw[string, out=0,in=270] (w2) to (j');
   \draw[string, out=0,in=90] (w3) to (j);
   
    \end{tikzpicture}\end{aligned}
    \quad $$\\$$ \overset{(\text{Wh,C) and (Wh,CC})}{=}\quad
   d^2  \begin{aligned}\begin{tikzpicture}[yscale=0.2,xscale=0.25]  
   \node (w1)[blackdot] at (0,0) {};
   \node (b1)[whitedot] at (-2.5,3.25){};
   \node (w2)[blackdot] at (2.5,3.25) {};
   \node (b2)[whitedot] at (-2.5,9.75){};
   \node (w4)[blackdot] at (0,13) {};
   \node (w3)[blackdot] at (2.5,9.75){};
   \node (i)[state,black,scale=0.5] at (-4.5,8.5){$j$};
   \node (i')[state,hflip,black,scale=0.5] at (-4.5,4.5){$j'$};
   \node (j)[state, scale=0.5] at (4.5,8.5) {$i$};
   \node (j')[state,hflip,scale=0.5] at (4.5,4.5){$i'$};
   
   \draw[string,out=180,in=270] (w1) to (b1);
   \draw[string, out=0,in=0] (b1) to (b2);
   \draw[string,out=0,in=270] (w1) to (w2);
   \draw[string, out=180,in=180] (w2) to (w3);   
   \draw[string, out=90,in=180] (b2) to (w4); 
   \draw[string, out=0,in=90] (w4) to (w3); 
   \draw[string, out=180,in=270] (b1) to (i');   
   \draw[string, out=180,in=90] (b2) to (i);
   \draw[string, out=0,in=270] (w2) to (j');
   \draw[string, out=0,in=90] (w3) to (j);
   
    \end{tikzpicture}\end{aligned}
        \quad  \overset{(\text{Bl,A) and (Bl,CA})}{=}\quad
    d^2 \begin{aligned}\begin{tikzpicture}[yscale=0.2,xscale=0.25]  
   \node (w1)[blackdot] at (-1,1.75) {};
   \node (b1)[whitedot] at (-2.5,3.25){};
   \node (w2)[blackdot] at (0.5,0) {};
   \node (b2)[whitedot] at (-2.5,9.75){};
   \node (w4)[blackdot] at (-1,11.25) {};
   \node (w3)[blackdot] at (0.5,13){};
   \node (i)[state,black,scale=0.5] at (-4.5,8.5){$j$};
   \node (i')[state,hflip,black,scale=0.5] at (-4.5,4.5){$j'$};
   \node (j)[state, scale=0.5] at (2.5,10.75) {$i$};
   \node (j')[state,hflip,scale=0.5] at (2.5,2.25){$i'$};
   
   \draw[string,out=180,in=270] (w1) to (b1);
   \draw[string, out=0,in=0] (b1) to (b2);
   \draw[string,out=270,in=180] (w1) to (w2);
   \draw[string, out=0,in=0] (w1) to (w4);   
   \draw[string, out=90,in=180] (b2) to (w4); 
   \draw[string, out=90,in=180] (w4) to (w3); 
   \draw[string, out=180,in=270] (b1) to (i');   
   \draw[string, out=180,in=90] (b2) to (i);
   \draw[string, out=0,in=270] (w2) to (j');
   \draw[string, out=0,in=90] (w3) to (j);
   
    \end{tikzpicture}\end{aligned}
        \quad $$\\$$ \overset{(\text{Wh,SM})}{=}\quad
    d^2 \begin{aligned}\begin{tikzpicture}[yscale=0.2,xscale=0.25]  
   \node (w1)[blackdot] at (0,3.25) {};
   \node (b1)[whitedot] at (-5,3.25){};
   \node (w2)[blackdot] at (2.5,0) {};
   \node (b2)[whitedot] at (-5,9.75){};
   \node (w4)[blackdot] at (0,9.75) {};
   \node (w3)[blackdot] at (2.5,13){};
   \node (i)[state,black,scale=0.5] at (-7,8.5){$j$};
   \node (i')[state,hflip,black,scale=0.5] at (-7,4.5){$j'$};
   \node (j)[state, scale=0.5] at (4.5,10.75) {$i$};
   \node (j')[state,hflip,scale=0.5] at (4.5,2.25){$i'$};
   \node (b4)[whitedot] at (-2.5,7.75){};
   \node (b3)[whitedot] at (-2.5,5.25){};
   
   \draw[string,out=180,in=0] (w1) to (b3);
   \draw[string] (b3) to (b4);
   \draw[string,out=270,in=180] (w1) to (w2);
   \draw[string, out=0,in=0] (w1) to (w4);   
   \draw[string, out=0,in=180] (b4) to (w4); 
   \draw[string, out=90,in=180] (w4) to (w3); 
   \draw[string, out=180,in=270] (b1) to (i');   
   \draw[string, out=180,in=90] (b2) to (i);
   \draw[string, out=0,in=270] (w2) to (j');
   \draw[string, out=0,in=90] (w3) to (j);
   \draw[string, out=0,in=180] (b1) to (b3);
   \draw[string, out=0,in=180] (b2) to (b4);   
    \end{tikzpicture}\end{aligned}
    \quad  \overset{\text{by equation (1.33)}}{=}\quad
    d
     \begin{aligned}\begin{tikzpicture}[yscale=0.2,xscale=0.25]  
   
   \node (b1)[whitedot] at (0,3.25){};
   \node (w2)[blackdot] at (5,3.25) {};
   \node (b2)[whitedot] at (0,9.75){};
   
   \node (w3)[blackdot] at (5,9.75){};
   \node (i)[state,black,scale=0.5] at (-2,8.5){$j$};
   \node (i')[state,hflip,black,scale=0.5] at (-2,4.5){$j'$};
   \node (j)[state, scale=0.5] at (7,8.5) {$i$};
   \node (j')[state,hflip,scale=0.5] at (7,4.5){$i'$};

   \draw[string, out=180,in=180] (w2) to (w3);

   \draw[string, out=180,in=270] (b1) to (i');   
   \draw[string, out=180,in=90] (b2) to (i);
   \draw[string, out=0,in=270] (w2) to (j');
   \draw[string, out=0,in=90] (w3) to (j);
   \draw[string, out=0,in=0] (b1) to (b2);      
    \end{tikzpicture}\end{aligned}    
            \quad $$\\$$ \overset{(\text{Bl,C) and (Wh,C})}{=} \quad d
     \begin{aligned}\begin{tikzpicture}[yscale=0.2,xscale=0.25]  
   
   \node (b1)[whitedot] at (0,3.25){};
   \node (w2)[blackdot] at (6,3.25) {};
   \node (b2)[whitedot] at (4,9.75){};
   
   \node (w3)[blackdot] at (10,9.75){};
   \node (i)[state,black,scale=0.5] at (6,7.5){$j$};
   \node (i')[state,hflip,black,scale=0.5] at (-2,5.5){$j'$};
   \node (j)[state, scale=0.5] at (12,7.5) {$i$};
   \node (j')[state,hflip,scale=0.5] at (4,5.5){$i'$};

   \draw[string, out=0,in=180] (w2) to (w3);

   \draw[string, out=180,in=270] (b1) to (i');   
   \draw[string, out=0,in=90] (b2) to (i);
   \draw[string, out=180,in=270] (w2) to (j');
   \draw[string, out=0,in=90] (w3) to (j);
   \draw[string, out=0,in=180] (b1) to (b2);
      
    \end{tikzpicture}\end{aligned}
    \quad\overset{(\text{Bl,SN) and (Wh,SN})}{=}\quad d
    \begin{aligned}\begin{tikzpicture}
    \node[state,black,scale=0.5] at (0,0) {$j$};
    \node[state,hflip,black,scale=0.5] at (0,0) {$j'$}; 
    \node[state,scale=0.5] at (0.5,0) {$i$};
    \node[state,hflip,scale=0.5] at (0.5,0) {$i'$};
    \end{tikzpicture}\end{aligned}
\quad \\=\quad
\delta_{ii'}\delta_{jj'}d
\end{equation}

\end{proof}

\subsection{Minimal Shift and Multiply Basis}
I would like to define a minimal shift and multiply basis as follows:
\begin{definition}[Minimal Shift and Multiply Basis]
Given an abelian group, a latin square can be obtained from the group's multiplication table. A Hadamard matrix can be obtained from the matrix of the group's Fourier transform. Using this latin square and Hadamard matrix (in place of the entire family of $d$ Hadamard matrices required) as input, the combinatorial construction gives us a shift and multiply basis. I will refer to a shift and multiply basis obtained from an abelian group in this manner as a minimal shift and multiply basis.
I will refer to this construction as the minimal combinatorial construction. \end{definition}

The following question then naturally arises: If we have a minimal shift and multiply basis and an MUB error basis arising from the same finite abelian group, are they equivalent?
\\
\begin{theorem}
Given an abelian group $G$, the minimal combinatorial construction and the MUB construction produce equivalent UEBs.
\end{theorem}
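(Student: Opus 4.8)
The plan is to bring both families of operators into a common explicit form as linear maps on $H$, written in the group basis $\ket{a_k}$, and then to read off a single fixed unitary (together with a relabelling of the index set) that witnesses the equivalence. Throughout I write $F$ for the Fourier matrix of $G$, so that $\ket{c_i}=F\ket{a_i}$ and $\ket{b_i}=\tfrac{1}{\sqrt d}F\ket{a_i}$, and I use two standard facts about the character table of a finite abelian group: it is symmetric, $F_{ki}=F_{ik}$, and it obeys column orthogonality, so that $\sum_m F_{pm}F_{lm}$ equals $d$ when $a_l=-a_p$ and is zero otherwise. I also use $\overline{F_{ki}}=F_{-i,k}$, which expresses that conjugating a character inverts its index (combined with symmetry).

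First I would unpack the minimal combinatorial side. Since the latin square is the multiplication table of $G$, the permutation matrix $P_j$ is left translation $\ket{a_k}\mapsto\ket{a_j+a_k}$, which is exactly the shift map $U_3$ from the proof that an MUB error basis is a UEB (multiplication by $\ket{a_j}$ in the $G$-Frobenius algebra, equivalently $\sqrt d$ times white multiplication). Since every $H^j$ is the single Fourier matrix $F$, the multiply factor $H^{j}_{\mathrm{diag}(i)}$ is the diagonal map $\ket{a_k}\mapsto F_{ik}\ket{a_k}$; realised through the black classical structure this is multiplication by the state whose $\ket{a_k}$-coefficients are the $i$-th row of $F$, and by symmetry of $F$ that state is $\ket{c_i}=\sqrt d\,\ket{b_i}$, so the factor is precisely $U_1$. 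Hence $E_{ij}\ket{a_k}=F_{ik}\,\ket{a_j+a_k}$, i.e. $E_{ij}$ is the composite of $U_1$ followed by $U_3$.

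Next I would pin down $M_{ij}$ using the factorisation into the three unitaries $U_1,U_2,U_3$ already obtained in that same proof. The factors $U_1$ (black multiplication by $\ket{b_i}$) and $U_3$ (the shift) are literally the maps appearing on the combinatorial side; the only new ingredient is the colour-change map $U_2$, the composite of a black cap with a white cup. The key computation is to identify it: feeding $\ket{a_p}$ through $U_2$ produces $\sum_m\tfrac{1}{\sqrt d}F_{pm}\ket{b_m}$, and column orthogonality collapses this to $\ket{a_{-p}}$. Thus $U_2$ is the group antipode $\iota\colon\ket{a_p}\mapsto\ket{a_{-p}}$, a fixed permutation matrix independent of $i$ and $j$, and $M_{ij}\ket{a_p}=F_{pi}\,\ket{a_j-a_p}$.

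Finally I would assemble the equivalence. Comparing the two closed forms, $M_{ij}$ and $E_{ij}$ differ only by the insertion of $\iota$ between the multiply and shift factors. Precomposing with $\iota$ and using $\overline{F_{ki}}=F_{-i,k}$ gives $(M_{ij}\circ\iota)\ket{a_k}=F_{-i,k}\,\ket{a_j+a_k}=E_{-i,\,j}\ket{a_k}$, so $M_{ij}\circ\iota=E_{-i,\,j}$. Taking the fixed unitaries $U=\mathbb{I}$ and $V=\iota$, scalars $c_{ij}=1$, and the index bijection $(i,j)\mapsto(-i,j)$ then exhibits the minimal shift-and-multiply basis as equivalent to the MUB error basis, as required. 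I expect the main obstacle to be exactly the middle step: correctly reading the black-cap/white-cup factor $U_2$ as the group inversion rather than as the identity, since it is this antipode --- the transpose introduced by the Bell-type effect in the teleportation diagram --- that stops $M_{ij}$ and $E_{ij}$ being literally equal and forces the argument through the equivalence relation. Both this identification and the matching of the two diagonal factors rest on the same properties of the abelian Fourier matrix (symmetry and column orthogonality), so the genuine work lies in establishing those relations and in confirming that absorbing $\iota$ amounts only to the permitted relabelling $i\mapsto-i$.
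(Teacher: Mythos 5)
Your proof is correct, and it shares its central insight with the paper's own proof: both arguments turn on identifying the colour-change map $U_2$ (black cap composed with white cup) as the group inversion $\iota\colon\ket{a_p}\mapsto\ket{a_{-p}}$, and on recognising this antipode as the sole discrepancy between $M_{ij}$ and $E_{ij}=P_j\circ H_{\mathrm{diag}(i)}$. Where you genuinely diverge is in the medium of calculation and in the finishing move. The paper stays inside the graphical calculus: having shown that the cap--cup composite sends $g\mapsto g^{-1}$, it pre-composes each $P_j$ with $\iota$, interprets this as permuting the symbols of the latin square, invokes the cited fact that isotopic latin squares produce equivalent UEBs, and then verifies diagrammatically that the modified basis $E'_{ij}=P'_j\circ H_{\mathrm{diag}(i)}$ equals $M_{ij}$ on the nose. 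You instead compute in coordinates, derive the closed forms $E_{ij}\ket{a_k}=F_{ik}\ket{a_j+a_k}$ and $M_{ij}\ket{a_p}=F_{pi}\ket{a_j-a_p}$, and witness the equivalence directly from its definition: $M_{ij}\circ\iota=E_{-i,\,j}$, so $U=\mathbb{I}$, $V=\iota$, unit scalars, and the re-indexing $(i,j)\mapsto(-i,j)$ do the job. Your route buys self-containedness --- it avoids the isotopy-implies-equivalence black box borrowed from Werner, resting only on the Fourier identities $F_{ki}=F_{ik}$, $\overline{F_{ki}}=F_{-i,k}$ and column orthogonality --- at the price of leaving the graphical language the dissertation is promoting, and of the re-indexing step, which is harmless (a UEB is a set of unitaries, and $i\mapsto-i$ is a bijection of the index set) but worth flagging, since the paper's stated definition of equivalence is literally index-preserving; note that the paper's own isotopy step conceals an equally large index permutation.
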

\begin{proof}
An MUB error basis arising from a finite abelian group, $G$, is represented by:

\begin{equation}
M_{ij}
   \quad =\quad
   d \begin{aligned}\begin{tikzpicture}[yscale=0.75,xscale=0.65]
          \node (0a) at (0.5,-1) {};
          \node (i)[state,black,scale=0.5] at (-1.25,1.5) {$j$};
          \node (j)[state,scale=0.5] at(2,0) {$i$};
          \node (2a)[blackdot] at (1.5,0.5) {};
          \node (2b)[blackdot] at (0.75,1.5) {};
          \node (3a)[whitedot] at (0,1){};
          \node (3b)[whitedot] at (-0.75,2){};
          \node (0b) at (-0.75,3){};
          \draw[string,out=90,in=180] (0a) to (2a);
          \draw[string,out=180,in=0] (2b) to (3a);
          \draw[string,out=180,in=0] (3a) to (3b);
          \draw[string] (3b) to (0b);
          \draw[string,out=180,in=90] (3b) to(i);
          \draw[string,out=90,in=0] (j) to (2a);
          \draw[string,out=90,in=00] (2a) to (2b);
         
      \end{tikzpicture}\end{aligned}
      \end{equation}
Where \begin{tikzpicture}
\node[state,black,scale=0.25] (A)  at (0,0){$j$};
\node (B) at (0,0.4){};
\draw (A) to (B);
\end{tikzpicture}
are an ONB of states corresponding to the elements of $G$, and \begin{tikzpicture}
\node[state,scale=0.25] (A)  at (0,0){$i$};
\node (B) at (0,0.4){};
\draw (A) to (B);
\end{tikzpicture}
are an ONB of states corresponding to \begin{tikzpicture}
\node[state,black,scale=0.25] (A)  at (0,0){j};
\node (B) at (0,0.4){};
\draw (A) to (B);
\end{tikzpicture},
under the change of basis represented by the matrix of the Fourier transform of $G$ which have then been normalised. And where $(H,\tinymult[blackdot],\tinyunit[blackdot])$ and $(H,\tinymult[whitedot],\tinyunit[whitedot])$ are the main  classical structure and normalised $G$-classical structure associated to those ONBs. \\

\textit{Minimal Shift and Multiply Bases}\\
With \begin{tikzpicture}
\node[state,black,scale=0.25] (A)  at (0,0){$j$};
\node (B) at (0,0.4){};
\draw (A) to (B);
\end{tikzpicture}
,
\begin{tikzpicture}
\node[state,scale=0.25] (A)  at (0,0){$i$};
\node (B) at (0,0.4){};
\draw (A) to (B);
\end{tikzpicture}
,
$(\tinymult[blackdot],\tinyunit[blackdot])$
 and
$(\tinymult[whitedot],\tinyunit[whitedot])$ as above the Fourier transform matrix, $H$ is:\\
\begin{equation}
H
\quad=\quad
\sqrt{d} \sum_{k=0}^{d-1}
\begin{aligned}\begin{tikzpicture}[yscale=0.75,xscale=0.65]
          \node (0w) at (0,-0.75) {};
          \node (0b) at (0,1.75) {};
          \node (w)[state,hflip,black,scale=0.5] at (0,0) {$k$};
          \node (b)[state,scale=0.5] at(0,1) {$k$};

          \draw[string] (0w) to (w);
          \draw[string] (0b) to (b);        
      \end{tikzpicture}\end{aligned}
      \end{equation}
which takes any black basis state and returns the corresponding non-normalised white basis state.
Let $+$ represent the binary operation of $G$. Then the latin square, $L$ is given by:
\begin{equation}
L\quad:=\quad
\begin{pmatrix}
   
    0
    &
    1 
    & \cdots
    & \cdots
    & \cdots
    & \matrix{d-1}
    \\
    1
    & \matrix{1+1} & \hdots & \hdots & \hdots & \matrix{(d-1)+1}
    \\
    \vdots & \vdots & \ddots && & \vdots
    \\
     \vdots & \vdots && \matrix{i+j}  &  & \vdots
    \\
    \vdots & \vdots &&& \ddots & \vdots
    \\
    \matrix{d-1}
    & \matrix {1+(d-1)} & \hdots & \hdots & \hdots&
    \matrix{(d-1)+(d-1)}
  \end{pmatrix}
  \end{equation}
Let $P_j$ represent the projection matrix for the $j^{th}$ row of $L$, so that, 
$
P_j\ket{i}=\ket{L_{ij}}.$ \\

Then we have:
\\
\begin{equation}
P_j
\quad=\quad
\sum_{k=0}^{d-1}
\begin{aligned}\begin{tikzpicture}[yscale=0.75,xscale=0.65]
          \node (0w) at (0,-0.75) {};
          \node (0b) at (0,1.75) {};
          \node (w)[state,hflip,black,scale=0.5] at (0,0) {$k$};
          \node (b)[state,black,scale=0.5] at(0,1) {$k+j$};

          \draw[string] (0w) to (w);
          \draw[string] (0b) to (b);        
      \end{tikzpicture}\end{aligned}
      \quad=\quad
 \begin{aligned}\begin{tikzpicture}
\node (0a) at (0.75,-1) {};
         
          \node (j)[state,black,scale=0.5] at(2,0) {$j$};
          \node (2a)[agg] at (1.5,0.5) {};
          \node (2b) at (1.5,1.5) {};
          
          \draw[string,out=90,in=180] (0a) to (2a);

          \draw[string,out=90,in=0] (j) to (2a);
          \draw[string,out=90,in=270] (2a) to (2b);
         
      \end{tikzpicture}\end{aligned}      
      \end{equation}
Let $H_{\text{diag}(i)}$ represent the $(d\times d)$ matrix with the $i{th}$ row of $H$ written along the diagonal and zeros elsewhere. \\
Then
\begin{equation}
H_{\text{diag}(i)}
\quad=\quad
\sqrt{d} \sum_{k=0}^{d-1}
\begin{aligned}\begin{tikzpicture}[yscale=0.75,xscale=0.65]
          \node (0w) at (0,-0.75) {};
          \node (0b) at (0,1.75) {};
          \node (w)[state,hflip,black,scale=0.5] at (0,0) {$k$};
          \node (b)[state,black,scale=0.5] at(0,1) {$k$};
          \node (b2)[state,black,hflip,scale=0.5] at(-0.75,0.5) {$k$};
          \node (b3)[state,scale=0.5] at(-0.75,0.5) {$i$};

          \draw[string] (0w) to (w);
          \draw[string] (0b) to (b);        
      \end{tikzpicture}\end{aligned}     
 \quad=\quad
\sqrt{d} \begin{aligned}\begin{tikzpicture}
\node (0a) at (0.75,-1) {};
         
          \node (i)[state,scale=0.5] at(-0.5,0) {$i$};
          \node (2a)[blackdot] at (0,0.5) {};
          \node (2b) at (0,1.5) {};
          
          \draw[string,out=90,in=0] (0a) to (2a);

          \draw[string,out=90,in=180] (i) to (2a);
          \draw[string,out=90,in=270] (2a) to (2b);
         
      \end{tikzpicture}\end{aligned}
      \end{equation}      
 Now we consider the effect of applying the matrix        $\begin{aligned}\begin{tikzpicture}[scale=0.5]

          \node (2a) at (1.25,0.5) {};
          \node (2b)[blackdot] at (0.75,1.75) {};
          \node (3a)[whitedot] at (0,0.75){};
          \node (3b) at (-0.5,2){};

          \draw[string,out=180,in=0] (2b) to (3a);
          \draw[string,out=180,in=270] (3a) to (3b);

          \draw[string,out=90,in=00] (2a) to (2b);
      \end{tikzpicture}\end{aligned}$ to an arbitrary black basis state $g\in G$ The first equality holds due to the copyability of black states by the black classical structure. The final equality is by definition of \tinycomultagg \, as the adjoint of our group operation:
\begin{equation}
 \begin{aligned}\begin{tikzpicture}

          \node (2a)[state,black,scale=0.75] at (1.25,0.5) {$g$};
          \node (2b)[blackdot] at (0.75,1.75) {};
          \node (3a)[whitedot] at (0,0.75){};
          \node (3b) at (-0.5,2){};

          \draw[string,out=180,in=0] (2b) to (3a);
          \draw[string,out=180,in=270] (3a) to (3b);

          \draw[string,out=90,in=0] (2a) to (2b);
      \end{tikzpicture}\end{aligned}
 \quad=\quad
 \begin{aligned}\begin{tikzpicture}

          \node (2b)[state,black,hflip,scale=0.75] at (0.5,1.5) {$g$};
         
          \node (3a)[whitedot] at (0,0.75){};
          \node (3b) at (-0.5,2){};

          \draw[string,out=270,in=0] (2b) to (3a);
          \draw[string,out=180,in=270] (3a) to (3b);

      \end{tikzpicture}\end{aligned}      
\quad\overset{\text{(Wh,SM)}}{=}\quad
\begin{aligned}\begin{tikzpicture}

          \node (2b)[state,black,hflip,scale=0.75] at (0.5,1.5) {$g$};
         
          \node (3a)[whitedot] at (0,0.75){};
          \node (3d)[whitedot] at (0,0.4){};
          \node (3b) at (-0.5,2){};
         
          \draw[string] (3d) to (3a);
          \draw[string,out=270,in=0] (2b) to (3a);
          \draw[string,out=180,in=270] (3a) to (3b);

      \end{tikzpicture}\end{aligned} 
      \quad\overset{\text{by (2.3)}}{=}\quad
\frac{1}{\sqrt{d}}\begin{aligned}\begin{tikzpicture}

          \node (2b)[state,black,hflip,scale=0.75] at (0.5,1.5) {$g$};
         
          \node (3a)[agg] at (0,0.75){};
          \node (3d)[whitedot] at (0,0.4){};
          \node (3b) at (-0.5,2){};
         
          \draw[string] (3d) to (3a);
          \draw[string,out=270,in=0] (2b) to (3a);
          \draw[string,out=180,in=270] (3a) to (3b);

      \end{tikzpicture}\end{aligned}
            \quad $$\\$$ \overset{\text{by (2.4)}}{=}\quad
\begin{aligned}\begin{tikzpicture}

          \node (2b)[state,black,hflip,scale=0.75] at (0.5,1.5) {$g$};
         
          \node (3a)[agg] at (0,0.75){};
          \node (3d)[agg] at (0,0.4){};
          \node (3b) at (-0.5,2){};
         
          \draw[string] (3d) to (3a);
          \draw[string,out=270,in=0] (2b) to (3a);
          \draw[string,out=180,in=270] (3a) to (3b);

      \end{tikzpicture}\end{aligned}       
\quad =\quad
\begin{aligned}\begin{tikzpicture}
\node[state,black,scale=0.75] (A){$g^{-1}$};
\draw[string] (0,1) to (A);
\end{tikzpicture}\end{aligned}
\end{equation}
It represents the permutation on the group elements that takes each element to it's unique inverse. Applying this permutation before each $P_j$ gives the $j^{th}$ row of the latin square obtained from permuting the symbols of our original latin square. This latin square is, by definition, isotopic to our original latin square, and thus gives an equivalent shift and multiply basis. Let us call the permutation matrix corresponding to the $j^{th}$ row of our new latin square $P'_j$. Let $E'_{ij}$ be the shift and multiply basis obtained using $P'_j$ . \\ So we have:
\begin{equation*}
P'_j
\quad := \quad 
 \begin{aligned}\begin{tikzpicture}[yscale=0.75,xscale=0.65]
          \node (0a) at (1.5,1.5) {};
          
          \node (j)[state,black,scale=0.5] at(2,0) {$j$};
          \node (2a)[agg] at (1.5,0.5) {};
          \node (2b)[blackdot] at (2.25,-0.75) {};
          \node (3a)[whitedot] at (1.5,-1.25){};
          \node (3b) at (2.85,-2){};
         
          \draw[string] (0a) to (2a);
          \draw[string,out=180,in=0] (2b) to (3a);
          \draw[string,out=0,in=90] (2b) to (3b.center);

          \draw[string,out=90,in=0] (j) to (2a);
          \draw[string,out=180,in=180] (2a) to (3a);
         
      \end{tikzpicture}\end{aligned}
      \end{equation*}
Now
\begin{equation*}
E_{ij}
\quad:=\quad
P_j \circ H_{\text{diag}(i)}  
\quad \equiv \quad
P'_j \circ H_{\text{diag}(i)} 
\quad=\quad
E'_{ij}
\end{equation*}
Thus 
\begin{equation*}
E_{ij}
\quad \equiv \quad
E'_{ij}
\quad= \quad 
\sqrt{d} \begin{aligned}\begin{tikzpicture}[yscale=0.75,xscale=0.65]
          \node (0a) at (2.5,-1) {};
          \node (i)[state,black,scale=0.5] at (2,1.25) {$j$};
          \node (j)[state,scale=0.5] at(1,0) {$i$};
          \node (2a)[blackdot] at (1.5,0.5) {};
          \node (2b)[blackdot] at (0.75,1.5) {};
          \node (3a)[whitedot] at (0,1){};
          \node (3b)[agg] at (0,2){};
          \node (0b) at (0,3){};
          \draw[string,out=90,in=0] (0a) to (2a);
          \draw[string,out=180,in=0] (2b) to (3a);
          \draw[string,out=180,in=180] (3a) to (3b);
          \draw[string] (3b) to (0b);
          \draw[string,out=0,in=90] (3b) to(i);
          \draw[string,out=90,in=180] (j) to (2a);
          \draw[string,out=90,in=00] (2a) to (2b);
         
      \end{tikzpicture}\end{aligned}
\quad\overset{\text{by (2.3)}}{=} \quad 
d \begin{aligned}\begin{tikzpicture}[yscale=0.75,xscale=0.65]
          \node (0a) at (2.5,-1) {};
          \node (i)[state,black,scale=0.5] at (2,1.25) {$j$};
          \node (j)[state,scale=0.5] at(1,0) {$i$};
          \node (2a)[blackdot] at (1.5,0.5) {};
          \node (2b)[blackdot] at (0.75,1.5) {};
          \node (3a)[whitedot] at (0,1){};
          \node (3b)[whitedot] at (0,2){};
          \node (0b) at (0,3){};
          \draw[string,out=90,in=0] (0a) to (2a);
          \draw[string,out=180,in=0] (2b) to (3a);
          \draw[string,out=180,in=180] (3a) to (3b);
          \draw[string] (3b) to (0b);
          \draw[string,out=0,in=90] (3b) to(i);
          \draw[string,out=90,in=180] (j) to (2a);
          \draw[string,out=90,in=00] (2a) to (2b);
         
      \end{tikzpicture}\end{aligned}
\quad $$\\$$ \overset{(\text{Wh,C})}{=} \quad 
d \begin{aligned}\begin{tikzpicture}[yscale=0.75,xscale=0.65]
          \node (0a) at (0.5,-1) {};
          \node (i)[state,black,scale=0.5] at (-1.25,1.5) {$j$};
          \node (j)[state,scale=0.5] at(2,0) {$i$};
          \node (2a)[blackdot] at (1.5,0.5) {};
          \node (2b)[blackdot] at (0.75,1.5) {};
          \node (3a)[whitedot] at (0,1){};
          \node (3b)[whitedot] at (-0.75,2){};
          \node (0b) at (-0.75,3){};
          \draw[string,out=90,in=180] (0a) to (2a);
          \draw[string,out=180,in=0] (2b) to (3a);
          \draw[string,out=180,in=0] (3a) to (3b);
          \draw[string] (3b) to (0b);
          \draw[string,out=180,in=90] (3b) to(i);
          \draw[string,out=90,in=0] (j) to (2a);
          \draw[string,out=90,in=00] (2a) to (2b);
         
      \end{tikzpicture}\end{aligned}
      \quad  =\quad
      M_{ij}
      \end{equation*}
      
 So the two UEB constructions are equivalent.\\
 \end{proof}


\chapter{Graphical Characterisation of a Latin Square in Hilbert Space}
A latin square order $d$, can be characterised as a function $\phi :A\times A \rightarrow A$,  where $A$ is a finite set of cardinality $d,$ as follows. $\forall x,y \in A$ the functions $x:A \rightarrow A$ defined by $x(z)=\phi (x,z)$ are injective, and the functions $y:A \rightarrow A$ defined by $y(z)=\phi (z,y)$ are injective.

Now if we consider a latin square to be the multiplication table for a binary operation $*$, defined on $A$, then this amounts to the following:  
 
$\forall a,c \in A, \exists$ unique $b$ s.t $a*b=c$ and $\forall y,z \in A, \exists$ unique $x$ s.t $x*y=z$. I will refer to this as the latin square property.

Since $b$ uniquely exists we can uniquely define $b$:= $a\backslash c$. Which stands for the element that equals $c$ when multiplied by $a$ on the left. Similarly we can define $x:=z/y$, the element that equals $z$ when multiplied on the right by $y$. This leads to the following equalities:
\begin{equation}
(a/b)*b=a  
\end{equation}
\begin{equation}
b*(b\backslash a)=a 
\end{equation}
\begin{equation}
(a*b)/b=a  
\end{equation}
and
\begin{equation}
b\backslash(b* a)=a 
\end{equation}
These equations are equivalent to the latin square property above and thus fully characterise a latin square as follows.\\
To show that $a/b$ is unique: suppose that $c*b=a$. By equation (3.1) $(a/b)*b=a$ Thus $c*b=(a/b)*b \Rightarrow (c*b)/b=((a/b)*b)/b$, so using equation (3.3) we have: $c=a/b$. So we have a unique element $a/b$ s.t $(a/b)*b=a$. The other side is similar using equations (3.1) and (3.4).

Now let $H$ be a $d$ dimensional Hilbert space with an othonormal basis given by the elements of our set, $A$. And let $\tinymultls$ represent our binary operation linearly extended to a linear map $H \otimes H \rightarrow H$. So \tinymultls\,, takes $(a,b)$ to $a*b$ where $a$ and $b$ are basis states, as well as elements of our set $A$. We will also make use of the adjoint of $\tinymultls$:
\begin{equation} \tinycomultls:=\left( \tinymultls \right )^{\dag}. 
\end{equation}
This is the linear map $H  \rightarrow H \otimes H$ taking basis state $c$ to $\{a*b:a*b=c\}$.

I will at times take the liberty of referring to elements of $A$ and basis states of the ONB of $H$ as though they are one and the same.

\section{Graphical Rules for Latin Square Structures}
\subsection{Unitality and Counitality}
The algebraic structure $(A,*)$ with $A$ and $*$ as described above is a quasigroup. Quasigroups are precisely the algebraic structures that have latin squares as their multiplication tables. For a general latin square the quasigroup associated to it will not necessarily have an identity element. A quasigroup with an identity element is known as a loop. We are only interested in latin squares up to isotopy class as described in Chapter 2. Every quasigroup  is isotopic to a loop ~\cite{quasigroups} and so henceforth we will assume that our latin square is a loop. 

We represent the basis state of $H$ corresponding to the unit element of $A$ as \tinyunitls \, and its adjoint, the counit as \tinycounitls. 
\begin{definition}[Latin Square Structure]
I will refer to $(H,\tinymultls, \tinycomultls, \tinyunitls, \tinycounitls)$, as defined above as a latin square structure.
\end{definition}

We can now derive our first graphical rules for a latin square structure:
\begin{equation}
\begin{aligned}\begin{tikzpicture}
          
          \node (0a) at (-1,0) {};
          \node (0b) at (-0.5,0) {};
          \node[ls] (0c) at (0,0) {};
          \node[ls] (7) at (-0.5,-0.5) {};       
          \node (10a) at (-0.5,-1) {};       
          \draw[string, out=270, in =180] (0a.center) to (7);
          \draw[string, out=0, in=270] (7) to (0c);
          \draw[string] (7) to (10a.center);
         
          \end{tikzpicture}\end{aligned}   
  \quad   = \quad
\begin{aligned}\begin{tikzpicture}
           \node (10a) at (-0.5,0) {}; 
           \node (5a) at (-0.5,1) {}; 
           \draw[string] (5a.center) to (10a.center);
\end{tikzpicture}\end{aligned}
\quad=\quad
\begin{aligned}\begin{tikzpicture}

          \node[ls] (0a) at (-1,0) {};
          \node (0b) at (-0.5,0) {};
          \node (0c) at (0,0) {};
          \node[ls] (7) at (-0.5,-0.5) {};       
          \node (10a) at (-0.5,-1) {};       
          \draw[string, out=270, in =180] (0a) to (7);
          \draw[string, out=0, in=270] (7) to (0c.center);
          \draw[string] (7) to (10a.center);

\end{tikzpicture}\end{aligned}  
  \end{equation}
\begin{equation}
\begin{aligned}\begin{tikzpicture}
          
          \node (0a) at (-1,-0.5) {};
          \node (0b) at (-0.5,-0.5) {};
          \node[ls] (0c) at (0,-0.5) {};
          \node[ls] (7) at (-0.5,0) {};       
          \node (10a) at (-0.5,0.5) {};       
          \draw[string, out=90, in =180] (0a.center) to (7);
          \draw[string, out=0, in=90] (7) to (0c);
          \draw[string] (7) to (10a.center);
         
          \end{tikzpicture}\end{aligned}   
  \quad   = \quad
\begin{aligned}\begin{tikzpicture}
           \node (10a) at (-0.5,0) {}; 
           \node (5a) at (-0.5,1) {}; 
           \draw[string] (5a.center) to (10a.center);
\end{tikzpicture}\end{aligned}
\quad=\quad
\begin{aligned}\begin{tikzpicture}

      \node[ls] (0a) at (-1,-0.5) {};
          \node (0b) at (-0.5,-0.5) {};
          \node (0c) at (0,-0.5) {};
          \node[ls] (7) at (-0.5,0) {};       
          \node (10a) at (-0.5,0.5) {};       
          \draw[string, out=90, in =180] (0a) to (7);
          \draw[string, out=0, in=90] (7) to (0c.center);
          \draw[string] (7) to (10a.center);

\end{tikzpicture}\end{aligned}  
  \end{equation}
 The elements of $A$ form an orthonormal basis of $H$, a classical structure with comultiplication that copies these states can thus be canoncally defined. Let $(H,\tinymult[blackdot],\tinyunit[blackdot])$ represent this classical structure. 

\subsection{The Bialgebra Laws}
Given the comonoid part, of the classical structure above with comultiplication $\tinycomult[blackdot]$ and counit $\tinycounit[blackdot]$, the product comonoid, $(H\otimes H$,\tinycomultdb[blackdot]   $\begin{aligned},\tinycounit[blackdot] \tinycounit[blackdot])\end{aligned}$
is also a comonoid ~\cite{cqm2014}.

 Given any function, $f:A\times A \rightarrow A$ extended linearly to a linear map $H \otimes H \rightarrow H$, $f$ takes each basis state of $H \otimes H$ to exactly one basis state of $H$. By definition the comonoids, $(H\otimes H$,\tinycomultdb[blackdot]   $\begin{aligned},\tinycounit[blackdot] \tinycounit[blackdot])\end{aligned}$ and $(H,\tinymult[blackdot],\tinyunit[blackdot])$, copy the basis states of $H \otimes H$ and $H$ respectively. Thus:

 \begin{equation}
\tinycomult[blackdot]( \left[ f(a,b)\right])=(f(a,b),f(a,b)) \end{equation} and
\begin{equation}
(f,f)\left[ \tinycomultdb[blackdot]
(a,b)\right] = (f,f)[ (a,b),(a,b)]=(f(a,b),f(a,b))
\end{equation}
hence \begin{equation} \tinycomult[blackdot]([f(a,b)])=
(f,f)\left[ \tinycomultdb[blackdot]
(a,b)\right]
\end{equation}
So $f$ is a comonoid homomorphism from $(H\otimes H$,\tinycomultdb[blackdot]   $\begin{aligned},\tinycounit[blackdot] \tinycounit[blackdot])\end{aligned}$ to $(H,\tinymult[blackdot],\tinyunit[blackdot])$.

$*:A \times A \rightarrow A$ is a function since $(a*b=c )\land (a*b=c' )\Rightarrow (c=c')$. So\tinymultls is a comonoid homomorphism from $(H\otimes H)$,\tinycomultdb[blackdot] $\begin{aligned},\tinycounit[blackdot] \tinycounit[blackdot])\end{aligned}$ to $(H,\tinymult[blackdot],\tinyunit[blackdot])$.

So by the definition of a comonoid homomorphism we have:

   \begin{equation} \begin{aligned}\begin{tikzpicture}[yscale=0.75]
          \node (0a) at (-0.5,0) {};
          \node (0b) at (0.5,0) {};
          \node[ls] (1) at (0,1) {};
          \node[blackdot] (2) at (0,2) {};
          \node (3a) at (-0.5,3) {};
          \node (3b) at (0.5,3) {};
          \draw[string,out=90,in=180] (-0.5,0) to (1);
          \draw[string,out=90,in=0] (0.5,0) to (1);
          \draw[string] (1) to (0,2);
          \draw[string,out=180,in=-90] (0,2) to (-0.5,3);
          \draw[string,out=0,in=270] (2.center) to (0.5,3);
      \end{tikzpicture}\end{aligned}
    =
    \begin{aligned}\begin{tikzpicture}[yscale=0.75]
           \node[blackdot] (2) at (0,0.9) {};
           \node[blackdot] (3) at (1,0.9) {};
           \node[ls] (4) at (0,2.1) {};
           \node[ls] (5) at (1,2.1) {};
           \draw[string] (0,0) to (2.center);
           \draw[string] (1,0) to (3.center);
           \draw[string] (4) to (0,3);
           \draw[string] (5) to (1,3);
           \draw[string, in=180, out=180, looseness=1.2] (2.center) to (4);
           \draw[string, in=0, out=0, looseness=1.2] (3.center) to (5);
           \draw[string, in=180, out=right] (2.center) to (5);
           \draw[string, in=0, out=180] (3.center) to (4);
    \end{tikzpicture}\end{aligned}
    \text{and} 
 \begin{aligned}\begin{tikzpicture}[yscale=0.75]
          \node[ls] (1) at (0,1) {};
          \node[blackdot] (2) at (0,2) {};
          \draw[string,out=90,in=180] (-0.5,0) to (1);
          \draw[string,out=90,in=0] (0.5,0) to (1);
          \draw[string] (1) to (2.center);
          \draw[string,out=180,in=down, white] (2.center) to (-0.5,3);
          \draw[string,out=0,in=270, white] (2.center) to (0.5,3);
      \end{tikzpicture}\end{aligned}
  =
  \begin{aligned}\begin{tikzpicture}[yscale=0.75, xscale=0.75]
        \draw [white, string] (0,0) to (0,3);
        \node[blackdot] (2) at (0,1) {};
        \node[blackdot] (3) at (1,1) {};
        \draw[string] (0,0) to (0,1) {};
        \draw[string] (1,0) to (1,1) {};
    \end{tikzpicture}\end{aligned}\end{equation} 
Since we also know that \tinyunitls \,  is a basis state copyable by the black classical structure we have:

\begin{equation}\begin{aligned}\begin{tikzpicture}[yscale=-0.75]
          \node[blackdot] (1) at (0,1) {};
          \node[ls] (2) at (0,2) {};
          \draw[string,out=90,in=180] (-0.5,0) to (0,1);
          \draw[string,out=90,in=0] (0.5,0) to (1.center);
          \draw[string] (1.center) to (2);
          \draw[string,out=180,in=down, white] (2) to (-0.5,3);
          \draw[string,out=0,in=270, white] (2) to (0.5,3);
      \end{tikzpicture}\end{aligned}
  =
  \begin{aligned}\begin{tikzpicture}[yscale=-0.75, xscale=0.75]
        \draw [white, string] (0,0) to (0,3);
        \node[ls] (2) at (0,1) {};
        \node[ls] (3) at (1,1) {};
        \draw[string] (0,0) to (2) {};
        \draw[string] (1,0) to (3) {};
    \end{tikzpicture}\end{aligned}
\, \, \, \text{and} \, \, \,
    \begin{aligned}\begin{tikzpicture}[yscale=0.75]
        \draw[string, white] (0,-1) to (0,2);
        \node[ls] (0) at (0,0) {};
        \node[blackdot] (1) at (0,1) {};
        \draw[string] (0) to (1);
    \end{tikzpicture}\end{aligned}
    \,\,=\hspace{10pt}\end{equation}   
The RHS of the final equation being the empty picture.

These are precisely the bialgebra equations. So the multiplication of our latin square structure, and the comonoid of our classical structure form a bialgebra. By taking the adjoint of each side of these equations, we also have that the comultiplication of our latin square structure and the monoid of our classical structure form a bialgebra.

\subsection{Utilising the Black Duality}
Our classical structure gives us a duality on the object $H$ enacted by the black cup and cap. 

Using this we can find a special relationship between our latin square structure's multiplication and comultiplication.

First note that for any basis states in our ONB, $a$ and $b$, $\inprod{a}{b} =\inprod{b}{a}=\delta_{ab}$ and $a*b$ is another unique state.

So $\forall a,b,c \in A$ we have:
\begin{equation}
\begin{aligned}\begin{tikzpicture}
          
          \node[state,black,scale=0.5] (0a) at (-1,-0.5) {$a$};
          \node (0b) at (-0.5,-0.5) {};
          \node[state,black,scale=0.5] (0c) at (0,-0.5) {$b$};
          \node[ls] (7) at (-0.5,0) {};       
          \node[state,black,hflip,scale=0.5] (10a) at (-0.5,0.5) {$c$};       
          \draw[string, out=90, in =180] (0a.center) to (7);
          \draw[string, out=0, in=90] (7) to (0c);
          \draw[string] (7) to (10a.center);
         
          \end{tikzpicture}\end{aligned}
\quad= \quad
\begin{aligned}\begin{tikzpicture}
          \node[state,black,scale=0.5] (0) at (0,0) {$a*b$};
          \node[state,black,hflip,scale=0.5] (1) at (0,0) {$c$};
\end{tikzpicture}\end{aligned}
\quad=\quad \inprod{a*b}{c} =\inprod{c}{a*b}
\quad=\quad
\begin{aligned}\begin{tikzpicture}
          \node[state,black,hflip,scale=0.5] (0a) at (0,0) {$a*b$};
          \node[state,black,scale=0.5] (10a) at (0,0) {$c$};
\end{tikzpicture}\end{aligned}
\quad=\quad
\begin{aligned}\begin{tikzpicture}
          
          \node[state,black,hflip,scale=0.5] (0a) at (-1,0.5) {$a$};
          \node (0b) at (-0.5,0.5) {};
          \node[state,black,hflip,scale=0.5] (0c) at (0,0.5) {$b$};
          \node[ls] (7) at (-0.5,0) {};       
          \node[state,black,scale=0.5] (10a) at (-0.5,-0.5) {$c$};       
          \draw[string, out=270, in =180] (0a.center) to (7);
          \draw[string, out=0, in=270] (7) to (0c);
          \draw[string] (7) to (10a.center);
         
          \end{tikzpicture}\end{aligned}
          \end{equation}
Now note that $\forall a,b,c \in A$:
\begin{equation}
\begin{aligned}\begin{tikzpicture}
          
          \node[state,black,scale=0.5] (0a') at (-2,-1) {$a$};
          \node[blackdot] (a) at (-1.25,0.5) {};
          
          \node (0a) at (-2,-0.5) {};
          \node (0c) at (-1.5,-0.5) {}; 
                            
          \node[blackdot] (b) at (-1,1) {};
          \node[blackdot] (c) at (0,-0.5) {};
          \node (0b) at (-0.5,-0.5) {};
          \node[state,black,scale=0.5] (0c') at (-1.5,-1) {$b$};
          \node[ls] (7) at (-0.5,0) {};       
          \node[state,black,hflip,scale=0.5] (10a) at (0.5,1.5) {$c$};       
          \draw[string, out=90, in =180] (0c.center) to (a);
          \draw[string, out=180, in =0] (7) to (a);
          \draw[string, out=0, in=0] (7) to (b);
          \draw[string, out=180, in=90] (b) to (0a.center);      
          \draw[string,out=270 ,in=180] (7) to (c);
          \draw[string,out=0,in=270] (c) to (10a); 
          \draw[string,out=270,in=90] (0a.center) to (0c');   
          \draw[string,out=270,in=90] (0c.center) to (0a');  
          \end{tikzpicture}\end{aligned}
          \quad=\quad
 \begin{aligned}\begin{tikzpicture}
          
          \node[state,black,hflip,scale=0.5] (0a) at (-1,0.5) {$a$};
          \node (0b) at (-0.5,0.5) {};
          \node[state,black,hflip,scale=0.5] (0c) at (0,0.5) {$b$};
          \node[ls] (7) at (-0.5,0) {};       
          \node[state,black,scale=0.5] (10a) at (-0.5,-0.5) {$c$};       
          \draw[string, out=270, in =180] (0a.center) to (7);
          \draw[string, out=0, in=270] (7) to (0c);
          \draw[string] (7) to (10a.center);
         
          \end{tikzpicture}\end{aligned}         
\end{equation}
So $\forall a,b,c \in A$ we have:
\begin{equation}
\begin{aligned}\begin{tikzpicture}
          
          \node[state,black,scale=0.5] (0a) at (-1,-0.5) {$a$};
          \node (0b) at (-0.5,-0.5) {};
          \node[state,black,scale=0.5] (0c) at (0,-0.5) {$b$};
          \node[ls] (7) at (-0.5,0) {};       
          \node[state,black,hflip,scale=0.5] (10a) at (-0.5,0.5) {$c$};       
          \draw[string, out=90, in =180] (0a.center) to (7);
          \draw[string, out=0, in=90] (7) to (0c);
          \draw[string] (7) to (10a.center);
         
          \end{tikzpicture}\end{aligned}
          \quad=\quad
\begin{aligned}\begin{tikzpicture}
          
          \node[state,black,scale=0.5] (0a') at (-2,-1) {$a$};
          \node[blackdot] (a) at (-1,0.5) {};
          
          \node (0a) at (-2,-0.5) {};
          \node (0c) at (-1.5,-0.5) {}; 
                            
          \node[blackdot] (b) at (-1,1) {};
          \node[blackdot] (c) at (0,-0.5) {};
          \node (0b) at (-0.5,-0.5) {};
          \node[state,black,scale=0.5] (0c') at (-1.5,-1) {$b$};
          \node[ls] (7) at (-0.5,0) {};       
          \node[state,black,hflip,scale=0.5] (10a) at (0.5,1.5) {$c$};       
          \draw[string, out=90, in =180] (0c.center) to (a);
          \draw[string, out=180, in =0] (7) to (a);
          \draw[string, out=0, in=0] (7) to (b);
          \draw[string, out=180, in=90] (b) to (0a.center);      
          \draw[string,out=270 ,in=180] (7) to (c);
          \draw[string,out=0,in=270] (c) to (10a); 
          \draw[string,out=270,in=90] (0a.center) to (0c');   
          \draw[string,out=270,in=90] (0c.center) to (0a');  
          \end{tikzpicture}\end{aligned}
 \end{equation}
 Thus:
\begin{equation}
\begin{aligned}\begin{tikzpicture}
          
          \node (0a) at (-1,-0.5) {};
          \node (0b) at (-0.5,-0.5) {};
          \node (0c) at (0,-0.5) {};
          \node[ls] (7) at (-0.5,0) {};       
          \node (10a) at (-0.5,0.5) {};       
          \draw[string, out=90, in =180] (0a.center) to (7);
          \draw[string, out=0, in=90] (7) to (0c.center);
          \draw[string] (7) to (10a.center);
         
          \end{tikzpicture}\end{aligned}
          \quad=\quad
\begin{aligned}\begin{tikzpicture}
          
          \node (0a') at (-2,-1) {};
          \node[blackdot] (a) at (-1,0.5) {};
          
          \node (0a) at (-2,-0.5) {};
          \node (0c) at (-1.5,-0.5) {}; 
                            
          \node[blackdot] (b) at (-1,1) {};
          \node[blackdot] (c) at (0,-0.5) {};
          \node (0b) at (-0.5,-0.5) {};
          \node (0c') at (-1.5,-1) {};
          \node[ls] (7) at (-0.5,0) {};       
          \node (10a) at (0.5,1.5) {};       
          \draw[string, out=90, in =180] (0c.center) to (a);
          \draw[string, out=180, in =0] (7) to (a);
          \draw[string, out=0, in=0] (7) to (b);
          \draw[string, out=180, in=90] (b) to (0a.center);      
          \draw[string,out=270 ,in=180] (7) to (c);
          \draw[string,out=0,in=270] (c) to (10a); 
          \draw[string,out=270,in=90] (0a.center) to (0c');   
          \draw[string,out=270,in=90] (0c.center) to (0a');  
          \end{tikzpicture}\end{aligned}
 \end{equation} 
 Taking the adjoint of both sides we obtain:
 \begin{equation}
\begin{aligned}\begin{tikzpicture}
          
          \node (0a) at (-1,0.5) {};
          \node (0b) at (-0.5,0.5) {};
          \node (0c) at (0,0.5) {};
          \node[ls] (7) at (-0.5,0) {};       
          \node (10a) at (-0.5,-0.5) {};       
          \draw[string, out=270, in =180] (0a.center) to (7);
          \draw[string, out=0, in=270] (7) to (0c.center);
          \draw[string] (7) to (10a.center);
         
          \end{tikzpicture}\end{aligned}
          \quad=\quad
\begin{aligned}\begin{tikzpicture}
          
          \node (0a') at (-2,1) {};
          \node[blackdot] (a) at (-1,-0.5) {};
          
          \node (0a) at (-2,0.5) {};
          \node (0c) at (-1.5,0.5) {}; 
                            
          \node[blackdot] (b) at (-1,-1) {};
          \node[blackdot] (c) at (0,0.5) {};
          \node (0b) at (-0.5,0.5) {};
          \node (0c') at (-1.5,1) {};
          \node[ls] (7) at (-0.5,0) {};       
          \node (10a) at (0.5,-1.5) {};       
          \draw[string, out=270, in =180] (0c.center) to (a);
          \draw[string, out=180, in =0] (7) to (a);
          \draw[string, out=0, in=0] (7) to (b);
          \draw[string, out=180, in=270] (b) to (0a.center);      
          \draw[string,out=90 ,in=180] (7) to (c);
          \draw[string,out=0,in=90] (c) to (10a); 
          \draw[string,out=90,in=270] (0a.center) to (0c');   
          \draw[string,out=90,in=270] (0c.center) to (0a');  
          \end{tikzpicture}\end{aligned}
 \end{equation} 
\subsection{Unitary Rules}
 \begin{proposition}
 The four equations, 3.1-3.4, that characterise a quasigroup plus the uniqueness of $a*b$, are equivalent to the following statements in our graphical representation of a latin square:
\begin{equation}
 \begin{pic}[scale=0.5]
\node (0) at (0.5,-0.5) {};
\node[ls] (1) at (-1,2) {};
\node (2) at (1.5,3.5) {};
\node[blackdot] (B) at (0.5,1) {};
\node (3) at (-2,-0.5) {};
\node (4) at (-1,3.5) {};
\draw[string,out=180,in=90] (1) to (3);
\draw [string, out=90,in=270] (0) to (B);
\draw [string, out=180,in=0] (B) to (1);
\draw [string, out=0,in=270] (B) to (2);
\draw[string] (1) to (4);
\end{pic} \text{and}  
\begin{pic}[scale=0.5]
\node (0) at (0.5,-0.5) {};
\node[blackdot] (1) at (-1,2) {};
\node (2) at (1.5,3.5) {};
\node[ls] (B) at (0.5,1) {};
\node (3) at (-2,-0.5) {};
\node (4) at (-1,3.5) {};
\draw[string,out=180,in=90] (1) to (3);
\draw [string, out=90,in=270] (0) to (B);
\draw [string, out=180,in=0] (B) to (1);
\draw [string, out=0,in=270] (B) to (2);
\draw[string] (1) to (4);
\end{pic} \text{ are unitary.}
\end{equation} 
 \end{proposition}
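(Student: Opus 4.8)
The plan is to evaluate both diagrams on the orthonormal basis $A\times A$ of $H\otimes H$, recognise each as the linear extension of an explicit function of basis states, and then invoke the elementary fact that the linear extension of a function on an orthonormal basis is unitary exactly when that function is a bijection. I would prove this fact first: such an extension maps the basis to the family of images of basis vectors, which is again orthonormal (equivalently, the map is unitary) precisely when the function is injective, and on the finite set $A\times A$ injective means bijective.

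First I would compute the left-hand diagram, reading bottom to top. The black dot copies its input by (1.25), sending the right-hand input $b$ to $b\otimes b$; one copy is routed to the right output wire, and the other is fed as the right argument of \tinymultls, whose left argument is the remaining input $a$. Hence on basis states the diagram acts as $|a\rangle|b\rangle \mapsto |a*b\rangle|b\rangle$, the linear extension of $F_1(a,b)=(a*b,b)$. Because $*$ is single-valued (the uniqueness of $a*b$), $F_1$ is a genuine function of the basis, so $\hat{F}_1$ is unitary precisely when $F_1$ is a bijection, i.e.\ when for each fixed $b$ the right translation $a\mapsto a*b$ is a bijection of $A$.

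Next I would compute the right-hand diagram, where the two structures trade roles: now \tinycomultls, the adjoint of \tinymultls, splits the right-hand input $p$ as $\sum_{c*d=p}|c\rangle|d\rangle$, and the black multiplication (again by (1.25), acting as $|q\rangle|c\rangle\mapsto\delta_{qc}|q\rangle$) compares its other input $q$ with the left leg $c$. The net action is $|q\rangle|p\rangle \mapsto \sum_{d:\,q*d=p}|q\rangle|d\rangle$. Taking the squared norm of the image of a basis vector yields $|\{d:q*d=p\}|$, so this morphism preserves norms — a necessary condition for unitarity — iff for every $q,p$ there is exactly one such $d$, that is, iff each left translation $b\mapsto q*b$ is a bijection; under this condition the diagram is the linear extension of $(q,p)\mapsto(q,q\backslash p)$ and is genuinely unitary.

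Finally I would assemble the equivalence. Assuming (3.1)--(3.4) and the uniqueness of $a*b$, the two computations exhibit both diagrams as linear extensions of bijections, hence unitary; conversely, unitarity forces every right translation and every left translation to be a bijection, which is exactly the latin square property already shown in the text to be equivalent to (3.1)--(3.4). The main obstacle is the honest reading of the diagrams: tracking which wire is the left versus the right argument of \tinymultls, and recognising that in the second diagram the combination of the latin-square comultiplication (a sum over preimages of $*$) with the black multiplication (a Kronecker delta) is precisely what implements the left-division $q\backslash p$. The norm computation is the device that lets me treat the second diagram cleanly in the converse direction, since there it cannot be presupposed to extend a function.
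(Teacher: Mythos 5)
Your proof is correct, and it takes a genuinely different route from the paper's. The paper never formulates your key lemma --- that the linear extension of a function on an orthonormal basis is unitary precisely when the function is a bijection. Instead it treats the four equations (3.1)--(3.4) one at a time and, for each, runs a chain of diagrammatic equivalences: the equation is written with basis states and effects plugged into the latin square (co)multiplication, the constraint $a=b$ is traded for a scalar $\delta_{ab}$, the delta and the state/effect pair are absorbed into black-dot-decorated wires, and the basis labels are then eliminated by summing over the copyable basis via the resolution of the identity. Each axiom thereby becomes exactly one of the four composite-equals-identity equations: (3.1) and (3.3) give the two composites expressing unitarity of the first diagram, (3.2) and (3.4) the two for the second. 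Your argument instead evaluates the diagrams on the product basis: the first becomes the extension of $(a,b)\mapsto(a*b,\,b)$, unitary iff every right translation is a bijection, while for the second --- which, as you rightly stress, is not a priori the extension of a function --- your computation of the squared norm $|\{d : q*d=p\}|$ of the image of a basis vector forces left division to exist uniquely, after which the map is the extension of the bijection $(q,p)\mapsto(q,\,q\backslash p)$. You then close the loop through the latin square property, whose equivalence with (3.1)--(3.4) the paper establishes immediately before the proposition, so that citation is legitimate. The paper's method buys an axiom-by-axiom correspondence and a proof conducted entirely inside the graphical calculus, in line with the dissertation's goal of fully graphical reasoning; yours buys brevity and a transparent semantic explanation of why unitarity is the right condition --- the two diagrams are permutations of the product basis exactly when the rows and columns of the multiplication table are permutations of $A$, which is literally the latin square condition.
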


\begin{proof}

\textit{Equation (3.1)}\\
$(a/b)*b=a$ can be represented in the following way diagramatically:
 \begin{equation}
 \hspace{-15mm}\forall a,b \in A \, \,  s.t \, \,a=b 
 \left[
\begin{aligned}\begin{pic}[yscale=0.75]
\node[ls] (1) at (0,0) {};
\node[ls] (2) at (0,2) {};
\node[state,black, hflip,scale=0.5] (b) at (0.5,0.5) {$b$};
\node[state,black,scale=0.5] (a) at (0.5,1.5) {$a$};
\draw[string,out=180,in=180] (1) to (2);
\draw[string,out=0,in=270] (1) to (b);
\draw[string,out=0,in=90] (2) to (a);
\draw[string] (0,-1) to (1);
\draw[string] (0,3) to (2);

\end{pic}\end{aligned}
\quad=\quad
\begin{aligned}\begin{pic}[yscale=0.75]
\draw[string] (0,-1) to (0,3);
\end{pic}\end{aligned}
\, \, \right]
\quad  \Leftrightarrow \quad
\forall a,b \in A
 \left[
\begin{aligned}\begin{pic}[yscale=0.75]
\node[ls] (1) at (0,0) {};
\node[ls] (2) at (0,2) {};
\node[state,black, hflip,scale=0.5] (b) at (0.5,0.5) {$b$};
\node[state,black,scale=0.5] (a) at (0.5,1.5) {$a$};
\draw[string,out=180,in=180] (1) to (2);
\draw[string,out=0,in=270] (1) to (b);
\draw[string,out=0,in=90] (2) to (a);
\draw[string] (0,-1) to (1);
\draw[string] (0,3) to (2);

\end{pic}\end{aligned} \delta_{ab}
\quad=\quad
\begin{aligned}\begin{pic}[yscale=0.75]
\draw[string] (0,-1) to (0,3);
\end{pic}\end{aligned} \delta_{ab}
\, \, \right]
\quad $$\\$$ \hspace{-20mm}\Leftrightarrow \quad
\forall a,b \in A
 \left[
\begin{aligned}\begin{pic}[yscale=0.75]
\node[ls] (1) at (0,0) {};
\node[ls] (2) at (0,2) {};
\node[state,black, hflip,scale=0.5] (b) at (0.5,0.5) {$b$};
\node[state,black,scale=0.5] (a) at (0.5,1.5) {$a$};
\draw[string,out=180,in=180] (1) to (2);
\draw[string,out=0,in=270] (1) to (b);
\draw[string,out=0,in=90] (2) to (a);
\draw[string] (0,-1) to (1);
\draw[string] (0,3) to (2);
\node[state,black,scale=0.5] (a') at (1.25,-1) {$b$};
\node[state,black,hflip,scale=0.5] (b') at (1.25,3) {$a$};
\draw[string] (a') to (b');
\end{pic}\end{aligned} 
\quad=\quad
\begin{aligned}\begin{pic}[yscale=0.75]
\draw[string] (0,-1) to (0,3);
\node[state,black,scale=0.5] (a) at (0.75,-1) {$b$};
\node[state,black,hflip,scale=0.5] (b) at (0.75,3) {$a$};
\draw[string] (a) to (b);
\end{pic}\end{aligned} 
\, \, \right]
\quad  \Leftrightarrow \quad
\forall a,b \in A
 \left[
\begin{aligned}\begin{pic}[yscale=0.75]
\node[ls] (1) at (0,0) {};
\node[ls] (2) at (0,2) {};
\node[blackdot] (b) at (0.625,0.5) {};
\node[blackdot] (a) at (0.625,1.5) {};
\draw[string,out=180,in=180] (1) to (2);
\draw[string,out=0,in=180] (1) to (b);
\draw[string,out=0,in=180] (2) to (a);
\draw[string] (a) to (b);
\draw[string] (0,-1) to (1);
\draw[string] (0,3) to (2);
\node[state,black,scale=0.5] (a') at (1.25,-1) {$b$};
\node[state,black,hflip,scale=0.5] (b') at (1.25,3) {$a$};
\draw[string,out=90,in=0] (a') to (b);
\draw[string,out=270,in=0] (b') to (a);
\end{pic}\end{aligned} 
\quad=\quad
\begin{aligned}\begin{pic}[yscale=0.75]
\draw[string] (0,-1) to (0,3);
\node[state,black,scale=0.5] (a) at (0.75,-1) {$b$};
\node[state,black,hflip,scale=0.5] (b) at (0.75,3) {$a$};
\draw[string] (a) to (b);
\end{pic}\end{aligned} 
\, \, \right]
\quad $$\\$$ \Leftrightarrow \quad
 \left[
\begin{aligned}\begin{pic}[yscale=0.75]
\node[ls] (1) at (0,0) {};
\node[ls] (2) at (0,2) {};
\node[blackdot] (b) at (0.625,0.5) {};
\node[blackdot] (a) at (0.625,1.5) {};
\draw[string,out=180,in=180] (1) to (2);
\draw[string,out=0,in=180] (1) to (b);
\draw[string,out=0,in=180] (2) to (a);
\draw[string] (a) to (b);
\draw[string] (0,-1) to (1);
\draw[string] (0,3) to (2);
\node (a') at (1.25,-1) {};
\node (b') at (1.25,3) {};
\draw[string,out=90,in=0] (a'.center) to (b);
\draw[string,out=270,in=0] (b'.center) to (a);
\end{pic}\end{aligned} 
\quad=\quad
\begin{aligned}\begin{pic}[yscale=0.75]
\draw[string] (0,-1) to (0,3);
\node (a) at (0.75,-1) {};
\node (b) at (0.75,3) {};
\draw[string] (a.center) to (b.center);
\end{pic}\end{aligned} 
\, \, \right]
\end{equation}\pagebreak

\textit{Equation (3.2)}\\
Similarly:

$b*(b\backslash a)=a \, \, \, \Leftrightarrow$
 \begin{equation}
 \hspace{-15mm}\forall a,b \in A \, \,  s.t \, \,a=b
 \left[
\begin{aligned}\begin{pic}[yscale=0.75]
\node[ls] (1) at (0,0) {};
\node[ls] (2) at (0,2) {};
\node[state,black, hflip,scale=0.5] (b) at (-0.5,0.5) {$b$};
\node[state,black,scale=0.5] (a) at (-0.5,1.5) {$a$};
\draw[string,out=0,in=0] (1) to (2);
\draw[string,out=180,in=270] (1) to (b);
\draw[string,out=180,in=90] (2) to (a);
\draw[string] (0,-1) to (1);
\draw[string] (0,3) to (2);

\end{pic}\end{aligned}
\quad=\quad
\begin{aligned}\begin{pic}[yscale=0.75]
\draw[string] (0,-1) to (0,3);
\end{pic}\end{aligned}
\, \, \right]
\quad  \Leftrightarrow \quad
\forall a,b \in A
 \left[
 \delta_{ab}
\begin{aligned}\begin{pic}[yscale=0.75]
\node[ls] (1) at (0,0) {};
\node[ls] (2) at (0,2) {};
\node[state,black, hflip,scale=0.5] (b) at (-0.5,0.5) {$b$};
\node[state,black,scale=0.5] (a) at (-0.5,1.5) {$a$};
\draw[string,out=0,in=0] (1) to (2);
\draw[string,out=180,in=270] (1) to (b);
\draw[string,out=180,in=90] (2) to (a);
\draw[string] (0,-1) to (1);
\draw[string] (0,3) to (2);

\end{pic}\end{aligned} 
\quad=\quad
\delta_{ab}
\begin{aligned}\begin{pic}[yscale=0.75]
\draw[string] (0,-1) to (0,3);
\end{pic}\end{aligned} 
\, \, \right]
\quad $$\\$$ \hspace{-20mm}\Leftrightarrow \quad
\forall a,b \in A
 \left[
\begin{aligned}\begin{pic}[yscale=0.75]
\node[ls] (1) at (0,0) {};
\node[ls] (2) at (0,2) {};
\node[state,black, hflip,scale=0.5] (b) at (-0.5,0.5) {$b$};
\node[state,black,scale=0.5] (a) at (-0.5,1.5) {$a$};
\draw[string,out=0,in=0] (1) to (2);
\draw[string,out=180,in=270] (1) to (b);
\draw[string,out=180,in=90] (2) to (a);
\draw[string] (0,-1) to (1);
\draw[string] (0,3) to (2);
\node[state,black,scale=0.5] (a') at (-1.25,-1) {$b$};
\node[state,black,hflip,scale=0.5] (b') at (-1.25,3) {$a$};
\draw[string] (a') to (b');
\end{pic}\end{aligned} 
\quad=\quad
\begin{aligned}\begin{pic}[yscale=0.75]
\draw[string] (0,-1) to (0,3);
\node[state,black,scale=0.5] (a) at (-0.75,-1) {$b$};
\node[state,black,hflip,scale=0.5] (b) at (-0.75,3) {$a$};
\draw[string] (a) to (b);
\end{pic}\end{aligned} 
\, \, \right]
\quad \\ \Leftrightarrow \quad
\forall a,b \in A
 \left[
\begin{aligned}\begin{pic}[yscale=0.75]
\node[ls] (1) at (0,0) {};
\node[ls] (2) at (0,2) {};
\node[blackdot] (b) at (-0.625,0.5) {};
\node[blackdot] (a) at (-0.625,1.5) {};
\draw[string,out=0,in=0] (1) to (2);
\draw[string,out=180,in=0] (1) to (b);
\draw[string,out=180,in=0] (2) to (a);
\draw[string] (a) to (b);
\draw[string] (0,-1) to (1);
\draw[string] (0,3) to (2);
\node[state,black,scale=0.5] (a') at (-1.25,-1) {$b$};
\node[state,black,hflip,scale=0.5] (b') at (-1.25,3) {$a$};
\draw[string,out=90,in=180] (a') to (b);
\draw[string,out=270,in=180] (b') to (a);
\end{pic}\end{aligned} 
\quad=\quad
\begin{aligned}\begin{pic}[yscale=0.75]
\draw[string] (0,-1) to (0,3);
\node[state,black,scale=0.5] (a) at (-0.75,-1) {$b$};
\node[state,black,hflip,scale=0.5] (b) at (-0.75,3) {$a$};
\draw[string] (a) to (b);
\end{pic}\end{aligned} 
\, \, \right]
\quad $$\\$$ \Leftrightarrow \quad
 \left[
\begin{aligned}\begin{pic}[yscale=0.75]
\node[ls] (1) at (0,0) {};
\node[ls] (2) at (0,2) {};
\node[blackdot] (b) at (-0.625,0.5) {};
\node[blackdot] (a) at (-0.625,1.5) {};
\draw[string,out=0,in=0] (1) to (2);
\draw[string,out=180,in=0] (1) to (b);
\draw[string,out=180,in=0] (2) to (a);
\draw[string] (a) to (b);
\draw[string] (0,-1) to (1);
\draw[string] (0,3) to (2);
\node (a') at (-1.25,-1) {};
\node (b') at (-1.25,3) {};
\draw[string,out=90,in=180] (a'.center) to (b);
\draw[string,out=270,in=180] (b'.center) to (a);
\end{pic}\end{aligned} 
\quad=\quad
\begin{aligned}\begin{pic}[yscale=0.75]
\draw[string] (0,-1) to (0,3);
\node (a) at (0.75,-1) {};
\node (b) at (0.75,3) {};
\draw[string] (a.center) to (b.center);
\end{pic}\end{aligned} 
\, \, \right]
\end{equation}
\pagebreak

\textit{Equation (3.3)}\\
$ (a*b)/b=a \, \, \, \Leftrightarrow$ 
\begin{equation}
\hspace{-20mm}\forall a,b \in A \, \,  s.t \, \,a=b \left[
\begin{aligned}\begin{pic}[yscale=0.75]
\node[ls] (1) at (0,0) {};
\node[ls] (2) at (0,1) {};
\node[state,black,scale=0.5] (b) at (0.5,-1) {$b$};
\node[state,black,hflip,scale=0.5] (a) at (0.5,2) {$a$};
\draw[string] (1) to (2);
\draw[string,out=0,in=90] (1) to (b);
\draw[string,out=0,in=270] (2) to (a);
\draw[string,out=90,in=180] (-0.5,-1) to (1);
\draw[string,out=270,in=180] (-0.5,2) to (2);

\end{pic}\end{aligned}
\quad=\quad
\begin{aligned}\begin{pic}[yscale=0.75]
\draw[string] (0,-1) to (0,2);
\end{pic}\end{aligned}
\, \, \right]
\quad  \Leftrightarrow \quad
\forall a,b \in A
 \left[
\begin{aligned}\begin{pic}[yscale=0.75]
\node[ls] (1) at (0,0) {};
\node[ls] (2) at (0,1) {};
\node[state,black,scale=0.5] (b) at (0.5,-1) {$b$};
\node[state,black,hflip,scale=0.5] (a) at (0.5,2) {$a$};
\draw[string] (1) to (2);
\draw[string,out=0,in=90] (1) to (b);
\draw[string,out=0,in=270] (2) to (a);
\draw[string,out=90,in=180] (-0.5,-1) to (1);
\draw[string,out=270,in=180] (-0.5,2) to (2);
\end{pic}\end{aligned} \delta_{ab}
\quad=\quad
\begin{aligned}\begin{pic}[yscale=0.75]
\draw[string] (0,-1) to (0,2);
\end{pic}\end{aligned} \delta_{ab}
\, \, \right]
\quad $$\\$$ \hspace{-20mm}\Leftrightarrow \quad
\forall a,b \in A
 \left[
\begin{aligned}\begin{pic}[yscale=0.75]
\node[ls] (1) at (0,0) {};
\node[ls] (2) at (0,1) {};
\node[state,black,scale=0.5] (b) at (0.5,-1) {$b$};
\node[state,black,hflip,scale=0.5] (a) at (0.5,2) {$a$};
\draw[string] (1) to (2);
\draw[string,out=0,in=90] (1) to (b);
\draw[string,out=0,in=270] (2) to (a);
\draw[string,out=90,in=180] (-0.5,-1) to (1);
\draw[string,out=270,in=180] (-0.5,2) to (2);
\node[state,black,scale=0.5] (a') at (1.25,-1) {$b$};
\node[state,black,hflip,scale=0.5] (b') at (1.25,2) {$a$};
\draw[string] (a') to (b');
\end{pic}\end{aligned} 
\quad=\quad
\begin{aligned}\begin{pic}[yscale=0.75]
\draw[string] (0,-1) to (0,2);
\node[state,black,scale=0.5] (a) at (0.75,-1) {$b$};
\node[state,black,hflip,scale=0.5] (b) at (0.75,2) {$a$};
\draw[string] (a) to (b);
\end{pic}\end{aligned} 
\, \, \right]
\quad \\ \Leftrightarrow \quad
\forall a,b \in A
 \left[
\begin{aligned}\begin{pic}[yscale=0.375,xscale=0.5]
 \node[state,black,scale=0.5] (0) at (0.5,-3.5) {$b$};
\node[ls] (1) at (-1,-1) {};
\node (2) at (1.5,0) {};
\node[blackdot] (B) at (0.5,-2) {};
\node (3) at (-2,-3.5) {};
\node (4) at (-1,0) {};
\draw[string,out=180,in=90] (1) to (3);
\draw [string, out=90,in=270] (0) to (B);
\draw [string, out=180,in=0] (B) to (1);
\draw [string, out=0,in=270] (B) to (2.center);
\draw[string] (1) to (4.center);   

 \node[state,black,hflip,scale=0.5] (5) at (0.5,3.5) {$a$};
\node[ls] (6) at (-1,1) {};

\node[blackdot] (A) at (0.5,2) {};
\node (8) at (-2,3.5) {};

\draw[string,out=180,in=270] (6) to (8);
\draw [string, out=270,in=90] (5) to (A);
\draw [string, out=180,in=0] (A) to (6);
\draw [string, out=0,in=90] (A) to (2.center);
\draw[string] (6) to (4.center);  

\end{pic}\end{aligned} 
\quad=\quad
\begin{aligned}\begin{pic}[yscale=0.75]
\draw[string] (0,-1) to (0,2);
\node[state,black,scale=0.5] (a) at (0.75,-1) {$b$};
\node[state,black,hflip,scale=0.5] (b) at (0.75,2) {$a$};
\draw[string] (a) to (b);
\end{pic}\end{aligned} 
\, \, \right]
\quad $$\\$$ \Leftrightarrow \quad
 \left[
\begin{aligned}\begin{pic}[yscale=0.375,xscale=0.5]
 \node (0) at (0.5,-3.5) {};
\node[ls] (1) at (-1,-1) {};
\node (2) at (1.5,0) {};
\node[blackdot] (B) at (0.5,-2) {};
\node (3) at (-2,-3.5) {};
\node (4) at (-1,0) {};
\draw[string,out=180,in=90] (1) to (3);
\draw [string, out=90,in=270] (0) to (B);
\draw [string, out=180,in=0] (B) to (1);
\draw [string, out=0,in=270] (B) to (2.center);
\draw[string] (1) to (4.center);   

 \node (5) at (0.5,3.5) {};
\node[ls] (6) at (-1,1) {};

\node[blackdot] (A) at (0.5,2) {};
\node (8) at (-2,3.5) {};

\draw[string,out=180,in=270] (6) to (8);
\draw [string, out=270,in=90] (5) to (A);
\draw [string, out=180,in=0] (A) to (6);
\draw [string, out=0,in=90] (A) to (2.center);
\draw[string] (6) to (4.center);  

\end{pic}\end{aligned} 
\quad=\quad
\begin{aligned}\begin{pic}[yscale=0.75]
\draw[string] (0,-1) to (0,2);
\node (a) at (0.75,-1) {};
\node (b) at (0.75,2) {};
\draw[string] (a.center) to (b.center);
\end{pic}\end{aligned} 
\, \, \right]\end{equation}\newpage
\hspace{-6.5mm}\textit{Equation (3.4)}\\
And finally,\\
$b\backslash(b*a)=a \, \, \, \Leftrightarrow$
\begin{equation}
\hspace{-15mm}\forall a,b \in A \, \,  s.t \, \,a=b
 \left[
\begin{aligned}\begin{pic}[yscale=0.75]
\node[ls] (1) at (0,0) {};
\node[ls] (2) at (0,1) {};
\node[state,black,scale=0.5] (b) at (-0.5,-1) {$b$};
\node[state,black,hflip,scale=0.5] (a) at (-0.5,2) {$a$};
\draw[string] (1) to (2);
\draw[string,out=180,in=90] (1) to (b);
\draw[string,out=180,in=270] (2) to (a);
\draw[string,out=90,in=0] (0.5,-1) to (1);
\draw[string,out=270,in=0] (0.5,2) to (2);

\end{pic}\end{aligned}
\quad=\quad
\begin{aligned}\begin{pic}[yscale=0.75]
\draw[string] (0,-1) to (0,2);
\end{pic}\end{aligned}
\, \, \right]
\quad  \Leftrightarrow \quad
\forall a,b \in A
 \left[
 \delta_{ab}
\begin{aligned}\begin{pic}[yscale=0.75]
\node[ls] (1) at (0,0) {};
\node[ls] (2) at (0,1) {};
\node[state,black,scale=0.5] (b) at (-0.5,-1) {$b$};
\node[state,black,hflip,scale=0.5] (a) at (-0.5,2) {$a$};
\draw[string] (1) to (2);
\draw[string,out=180,in=90] (1) to (b);
\draw[string,out=180,in=270] (2) to (a);
\draw[string,out=90,in=0] (0.5,-1) to (1);
\draw[string,out=270,in=0] (0.5,2) to (2);
\end{pic}\end{aligned} 
\quad=\quad
\delta_{ab}
\begin{aligned}\begin{pic}[yscale=0.75]
\draw[string] (0,-1) to (0,2);
\end{pic}\end{aligned} 
\, \, \right]
\quad $$\\$$ \hspace{-20mm}\Leftrightarrow \quad
\forall a,b \in A
 \left[
\begin{aligned}\begin{pic}[yscale=0.75]
\node[ls] (1) at (0,0) {};
\node[ls] (2) at (0,1) {};
\node[state,black,scale=0.5] (b) at (-0.5,-1) {$b$};
\node[state,black,hflip,scale=0.5] (a) at (-0.5,2) {$a$};
\draw[string] (1) to (2);
\draw[string,out=180,in=90] (1) to (b);
\draw[string,out=180,in=270] (2) to (a);
\draw[string,out=90,in=0] (0.5,-1) to (1);
\draw[string,out=270,in=0] (0.5,2) to (2);
\node[state,black,scale=0.5] (a') at (-1.25,-1) {$b$};
\node[state,black,hflip,scale=0.5] (b') at (-1.25,2) {$a$};
\draw[string] (a') to (b');
\end{pic}\end{aligned} 
\quad=\quad
\begin{aligned}\begin{pic}[yscale=0.75]
\draw[string] (0,-1) to (0,2);
\node[state,black,scale=0.5] (a) at (-0.75,-1) {$b$};
\node[state,black,hflip,scale=0.5] (b) at (-0.75,2) {$a$};
\draw[string] (a) to (b);
\end{pic}\end{aligned} 
\, \, \right]
\quad \\ \Leftrightarrow \quad
\forall a,b \in A
 \left[
\begin{aligned}\begin{pic}[yscale=0.375,xscale=0.5]
 \node[state,black,scale=0.5] (0) at (-1.5,-3.5) {$b$};
\node[ls] (1) at (0,-1) {};
\node (2) at (-2.5,0) {};
\node[blackdot] (B) at (-1.5,-2) {};
\node (3) at (1,-3.5) {};
\node (4) at (0,0) {};
\draw[string,out=0,in=90] (1) to (3);
\draw [string, out=90,in=270] (0) to (B);
\draw [string, out=0,in=180] (B) to (1);
\draw [string, out=180,in=270] (B) to (2.center);
\draw[string] (1) to (4.center);   

 \node[state,black,hflip,scale=0.5] (5) at (-1.5,3.5) {$a$};
\node[ls] (6) at (0,1) {};

\node[blackdot] (A) at (-1.5,2) {};
\node (8) at (1,3.5) {};

\draw[string,out=0,in=270] (6) to (8);
\draw [string, out=270,in=90] (5) to (A);
\draw [string, out=0,in=180] (A) to (6);
\draw [string, out=180,in=90] (A) to (2.center);
\draw[string] (6) to (4.center);  

\end{pic}\end{aligned} 
\quad=\quad
\begin{aligned}\begin{pic}[yscale=0.75]
\draw[string] (0,-1) to (0,2);
\node[state,black,scale=0.5] (a) at (-0.75,-1) {$b$};
\node[state,black,hflip,scale=0.5] (b) at (-0.75,2) {$a$};
\draw[string] (a) to (b);
\end{pic}\end{aligned} 
\, \, \right]
\quad $$\\$$ \Leftrightarrow \quad
 \left[
\begin{aligned}\begin{pic}[yscale=0.375,xscale=0.5]
 \node (0) at (-1.5,-3.5) {};
\node[ls] (1) at (0,-1) {};
\node (2) at (-2.5,0) {};
\node[blackdot] (B) at (-1.5,-2) {};
\node (3) at (1,-3.5) {};
\node (4) at (0,0) {};
\draw[string,out=0,in=90] (1) to (3);
\draw [string, out=90,in=270] (0) to (B);
\draw [string, out=0,in=180] (B) to (1);
\draw [string, out=180,in=270] (B) to (2.center);
\draw[string] (1) to (4.center);   

 \node (5) at (-1.5,3.5) {};
\node[ls] (6) at (0,1) {};

\node[blackdot] (A) at (-1.5,2) {};
\node (8) at (1,3.5) {};

\draw[string,out=0,in=270] (6) to (8);
\draw [string, out=270,in=90] (5) to (A);
\draw [string, out=0,in=180] (A) to (6);
\draw [string, out=180,in=90] (A) to (2.center);
\draw[string] (6) to (4.center);

\end{pic}\end{aligned} 
\quad=\quad
\begin{aligned}\begin{pic}[yscale=0.75]
\draw[string] (0,-1) to (0,2);
\node (a) at (0.75,-1) {};
\node (b) at (0.75,2) {};
\draw[string] (a.center) to (b.center);
\end{pic}\end{aligned} 
\, \, \right]
\end{equation}
\end{proof}
\section{Full Characterisation of a Latin Square} 
I have introduced various graphical rules and shown that they are obeyed by our latin square structure. However, are these rules sufficient to fully characterise a latin square?
\begin{proposition}
Given morphisms on $H \in \text{Ob} (\cat{FHilb}), \, \tinymultls :H \otimes H \rightarrow H$, \tinycomultls : $H \rightarrow H \otimes H$, \tinyunitls: $I \rightarrow H$, \tinycounitls : $H \rightarrow I$, and \tinymult[blackdot] a classical structure on $H$, with $\text{dim}(H)=d$ : $(H,\tinymultls, \tinycomultls, \tinyunitls, \tinycounitls)$ is a latin square structure if the following equations are satisfied:

\textit{(Co)unitality}
\begin{equation}
\begin{aligned}\begin{tikzpicture}[scale=2/3]
          
          \node (0a) at (-1,0) {};
          \node (0b) at (-0.5,0) {};
          \node[ls] (0c) at (0,0) {};
          \node[ls] (7) at (-0.5,-0.5) {};       
          \node (10a) at (-0.5,-1) {};       
          \draw[string, out=270, in =180] (0a.center) to (7);
          \draw[string, out=0, in=270] (7) to (0c);
          \draw[string] (7) to (10a.center);
         
          \end{tikzpicture}\end{aligned}   
 =
\begin{aligned}\begin{tikzpicture}[scale=2/3]

          \node[ls] (0a) at (-1,0) {};
          \node (0b) at (-0.5,0) {};
          \node (0c) at (0,0) {};
          \node[ls] (7) at (-0.5,-0.5) {};       
          \node (10a) at (-0.5,-1) {};       
          \draw[string, out=270, in =180] (0a) to (7);
          \draw[string, out=0, in=270] (7) to (0c.center);
          \draw[string] (7) to (10a.center);

\end{tikzpicture}\end{aligned} 
=
\begin{aligned}\begin{tikzpicture}[scale=2/3]
           \node (10a) at (-0.5,0) {}; 
           \node (5a) at (-0.5,1) {}; 
           \draw[string] (5a.center) to (10a.center);
\end{tikzpicture}\end{aligned}
= 
\begin{aligned}\begin{tikzpicture}[scale=2/3]
          
          \node (0a) at (-1,-0.5) {};
          \node (0b) at (-0.5,-0.5) {};
          \node[ls] (0c) at (0,-0.5) {};
          \node[ls] (7) at (-0.5,0) {};       
          \node (10a) at (-0.5,0.5) {};       
          \draw[string, out=90, in =180] (0a.center) to (7);
          \draw[string, out=0, in=90] (7) to (0c);
          \draw[string] (7) to (10a.center);
         
          \end{tikzpicture}\end{aligned}   
=
\begin{aligned}\begin{tikzpicture}[scale=2/3]

      \node[ls] (0a) at (-1,-0.5) {};
          \node (0b) at (-0.5,-0.5) {};
          \node (0c) at (0,-0.5) {};
          \node[ls] (7) at (-0.5,0) {};       
          \node (10a) at (-0.5,0.5) {};       
          \draw[string, out=90, in =180] (0a) to (7);
          \draw[string, out=0, in=90] (7) to (0c.center);
          \draw[string] (7) to (10a.center);

\end{tikzpicture}\end{aligned}  
  \end{equation}

\textit{Bialgebra Laws}
\begin{equation}
    \begin{aligned}\begin{tikzpicture}[yscale=0.325,xscale=0.5]
          \node (0a) at (-0.5,0) {};
          \node (0b) at (0.5,0) {};
          \node[ls] (1) at (0,1) {};
          \node[blackdot] (2) at (0,2) {};
          \node (3a) at (-0.5,3) {};
          \node (3b) at (0.5,3) {};
          \draw[string,out=90,in=180] (-0.5,0) to (1);
          \draw[string,out=90,in=0] (0.5,0) to (1);
          \draw[string] (1) to (0,2);
          \draw[string,out=180,in=-90] (0,2) to (-0.5,3);
          \draw[string,out=0,in=270] (2.center) to (0.5,3);
      \end{tikzpicture}\end{aligned}
    =
    \begin{aligned}\begin{tikzpicture}[yscale=0.325,xscale=0.5]
           \node[blackdot] (2) at (0,0.9) {};
           \node[blackdot] (3) at (1,0.9) {};
           \node[ls] (4) at (0,2.1) {};
           \node[ls] (5) at (1,2.1) {};
           \draw[string] (0,0) to (2.center);
           \draw[string] (1,0) to (3.center);
           \draw[string] (4) to (0,3);
           \draw[string] (5) to (1,3);
           \draw[string, in=180, out=180, looseness=1.2] (2.center) to (4);
           \draw[string, in=0, out=0, looseness=1.2] (3.center) to (5);
           \draw[string, in=180, out=right] (2.center) to (5);
           \draw[string, in=0, out=180] (3.center) to (4);
    \end{tikzpicture}\end{aligned}
    , 
 \begin{aligned}\begin{tikzpicture}[yscale=0.325,xscale=0.5]
          \node[ls] (1) at (0,1) {};
          \node[blackdot] (2) at (0,2) {};
          \draw[string,out=90,in=180] (-0.5,0) to (1);
          \draw[string,out=90,in=0] (0.5,0) to (1);
          \draw[string] (1) to (2.center);
          \draw[string,out=180,in=down, white] (2.center) to (-0.5,3);
          \draw[string,out=0,in=270, white] (2.center) to (0.5,3);
      \end{tikzpicture}\end{aligned}
  =
  \begin{aligned}\begin{tikzpicture}[yscale=0.325, xscale=0.375]
        \draw [white, string] (0,0) to (0,3);
        \node[blackdot] (2) at (0,1) {};
        \node[blackdot] (3) at (1,1) {};
        \draw[string] (0,0) to (0,1) {};
        \draw[string] (1,0) to (1,1) {};
    \end{tikzpicture}\end{aligned} 
\quad,\quad
\begin{aligned}\begin{tikzpicture}[yscale=-0.325,xscale=0.5]
          \node[blackdot] (1) at (0,1) {};
          \node[ls] (2) at (0,2) {};
          \draw[string,out=90,in=180] (-0.5,0) to (0,1);
          \draw[string,out=90,in=0] (0.5,0) to (1.center);
          \draw[string] (1.center) to (2);
          \draw[string,out=180,in=down, white] (2) to (-0.5,3);
          \draw[string,out=0,in=270, white] (2) to (0.5,3);
      \end{tikzpicture}\end{aligned}
  =
  \begin{aligned}\begin{tikzpicture}[yscale=-0.325, xscale=0.375]
        \draw [white, string] (0,0) to (0,3);
        \node[ls] (2) at (0,1) {};
        \node[ls] (3) at (1,1) {};
        \draw[string] (0,0) to (2) {};
        \draw[string] (1,0) to (3) {};
    \end{tikzpicture}\end{aligned}
,
    \begin{aligned}\begin{tikzpicture}[yscale=0.325,xscale=0.5]
        \draw[string, white] (0,-1) to (0,2);
        \node[ls] (0) at (0,0) {};
        \node[blackdot] (1) at (0,1) {};
        \draw[string] (0) to (1);
    \end{tikzpicture}\end{aligned}
    \,\,=\hspace{10pt}   
\end{equation}    

\textit{Duality Relations}
\begin{equation}
\begin{aligned}\begin{tikzpicture}[scale=0.5]
          
          \node (0a) at (-1,-0.5) {};
          \node (0b) at (-0.5,-0.5) {};
          \node (0c) at (0,-0.5) {};
          \node[ls] (7) at (-0.5,0) {};       
          \node (10a) at (-0.5,0.5) {};       
          \draw[string, out=90, in =180] (0a.center) to (7);
          \draw[string, out=0, in=90] (7) to (0c.center);
          \draw[string] (7) to (10a.center);
         
          \end{tikzpicture}\end{aligned}
          =
\begin{aligned}\begin{tikzpicture}[scale=0.5]
          
          \node (0a') at (-2,-1) {};
          \node[blackdot] (a) at (-1,0.5) {};
          
          \node (0a) at (-2,0.25) {};
          \node (0c) at (-1.5,0.25) {}; 
                            
          \node[blackdot] (b) at (-1,1) {};
          \node[blackdot] (c) at (0,-0.5) {};
          \node (0b) at (-0.5,-0.5) {};
          \node (0c') at (-1.5,-1) {};
          \node[ls] (7) at (-0.5,0) {};       
          \node (10a) at (0.5,1.5) {};       
          \draw[string, out=90, in =180] (0c.center) to (a);
          \draw[string, out=180, in =0] (7) to (a);
          \draw[string, out=0, in=0] (7) to (b);
          \draw[string, out=180, in=90] (b) to (0a.center);      
          \draw[string,out=270 ,in=180] (7) to (c);
          \draw[string,out=0,in=270] (c) to (10a); 
          \draw[string,out=270,in=90] (0a.center) to (0c');   
          \draw[string,out=270,in=90] (0c.center) to (0a');  
          \end{tikzpicture}\end{aligned}
 \end{equation},
 \begin{equation}
\begin{aligned}\begin{tikzpicture}[scale=0.5]
          
          \node (0a) at (-1,0.5) {};
          \node (0b) at (-0.5,0.5) {};
          \node (0c) at (0,0.5) {};
          \node[ls] (7) at (-0.5,0) {};       
          \node (10a) at (-0.5,-0.5) {};       
          \draw[string, out=270, in =180] (0a.center) to (7);
          \draw[string, out=0, in=270] (7) to (0c.center);
          \draw[string] (7) to (10a.center);
         
          \end{tikzpicture}\end{aligned}
=
\begin{aligned}\begin{tikzpicture}[scale=0.5]
          
          \node (0a') at (-2,1) {};
          \node[blackdot] (a) at (-1,-0.5) {};
          
          \node (0a) at (-2,-0.25) {};
          \node (0c) at (-1.5,-0.25) {}; 
                            
          \node[blackdot] (b) at (-1,-1) {};
          \node[blackdot] (c) at (0,0.5) {};
          \node (0b) at (-0.5,0.5) {};
          \node (0c') at (-1.5,1) {};
          \node[ls] (7) at (-0.5,0) {};       
          \node (10a) at (0.5,-1.5) {};       
          \draw[string, out=270, in =180] (0c.center) to (a);
          \draw[string, out=180, in =0] (7) to (a);
          \draw[string, out=0, in=0] (7) to (b);
          \draw[string, out=180, in=270] (b) to (0a.center);      
          \draw[string,out=90 ,in=180] (7) to (c);
          \draw[string,out=0,in=90] (c) to (10a); 
          \draw[string,out=90,in=270] (0a.center) to (0c');   
          \draw[string,out=90,in=270] (0c.center) to (0a');  
          \end{tikzpicture}\end{aligned}
 \end{equation}
 
 \textit{Unitarity Property}
 \begin{equation}
\begin{aligned}\begin{tikzpicture}[scale=3/14]
         \node (0) at (0.5,0) {};
\node[ls] (1) at (-1,2) {};
\node (2) at (1.5,3.5) {};
\node[blackdot] (B) at (0.5,1) {};
\node (3) at (-2,0) {};
\node (4) at (-1,3.5) {};
\draw[string,out=180,in=90] (1) to (3.center);
\draw [string, out=90,in=270] (0.center) to (B);
\draw [string, out=180,in=0] (B) to (1);
\draw [string, out=0,in=270] (B) to (2);
\draw[string] (1) to (4);

\node[ls] (6) at (-1,-2) {};
\node (7) at (1.5,-3.5) {};
\node[blackdot] (A) at (0.5,-1) {};

\node (9) at (-1,-3.5) {};
\draw[string,out=180,in=270] (6) to (3.center);
\draw [string, out=270,in=90] (0.center) to (A);
\draw [string, out=180,in=0] (A) to (6);
\draw [string, out=0,in=90] (A) to (7);
\draw[string] (6) to (9);

\end{tikzpicture}\end{aligned}
=
\begin{aligned}\begin{tikzpicture}[scale=3/14]
 \node (0) at (0.5,-3.5) {};
\node[ls] (1) at (-1,-1) {};
\node (2) at (1.5,0) {};
\node[blackdot] (B) at (0.5,-2) {};
\node (3) at (-2,-3.5) {};
\node (4) at (-1,0) {};
\draw[string,out=180,in=90] (1) to (3);
\draw [string, out=90,in=270] (0) to (B);
\draw [string, out=180,in=0] (B) to (1);
\draw [string, out=0,in=270] (B) to (2.center);
\draw[string] (1) to (4.center);   

 \node (5) at (0.5,3.5) {};
\node[ls] (6) at (-1,1) {};

\node[blackdot] (A) at (0.5,2) {};
\node (8) at (-2,3.5) {};

\draw[string,out=180,in=270] (6) to (8);
\draw [string, out=270,in=90] (5) to (A);
\draw [string, out=180,in=0] (A) to (6);
\draw [string, out=0,in=90] (A) to (2.center);
\draw[string] (6) to (4.center);

\end{tikzpicture}\end{aligned}
=
\begin{aligned}\begin{tikzpicture}[xscale=3/14]
           \node (10a) at (0,0) {}; 
           \node (5a) at (0,1.3) {}; 
           \node (10) at (1.25,0) {}; 
           \node (5) at (1.25,1.3) {};
           \draw[string] (5a.center) to (10a.center);
           \draw[string] (5.center) to (10.center);
\end{tikzpicture}\end{aligned}  
=
\begin{aligned}\begin{tikzpicture}[scale=3/14]
       \node (0) at (0.5,-3.5) {};
\node[blackdot] (1) at (-1,-1) {};
\node (2) at (1.5,0) {};
\node[ls] (B) at (0.5,-2) {};
\node (3) at (-2,-3.5) {};
\node (4) at (-1,0) {};
\draw[string,out=180,in=90] (1) to (3);
\draw [string, out=90,in=270] (0) to (B);
\draw [string, out=180,in=0] (B) to (1);
\draw [string, out=0,in=270] (B) to (2.center);
\draw[string] (1) to (4.center);   

 \node (5) at (0.5,3.5) {};
\node[blackdot] (6) at (-1,1) {};

\node[ls] (A) at (0.5,2) {};
\node (8) at (-2,3.5) {};

\draw[string,out=180,in=270] (6) to (8);
\draw [string, out=270,in=90] (5) to (A);
\draw [string, out=180,in=0] (A) to (6);
\draw [string, out=0,in=90] (A) to (2.center);
\draw[string] (6) to (4.center);  

          \end{tikzpicture}\end{aligned}   
=
\begin{aligned}\begin{tikzpicture}[scale=3/14]

         \node (0) at (0.5,0) {};
\node[blackdot] (1) at (-1,2) {};
\node (2) at (1.5,3.5) {};
\node[ls] (B) at (0.5,1) {};
\node (3) at (-2,0) {};
\node (4) at (-1,3.5) {};
\draw[string,out=180,in=90] (1) to (3.center);
\draw [string, out=90,in=270] (0.center) to (B);
\draw [string, out=180,in=0] (B) to (1);
\draw [string, out=0,in=270] (B) to (2);
\draw[string] (1) to (4);

\node[blackdot] (6) at (-1,-2) {};
\node (7) at (1.5,-3.5) {};
\node[ls] (A) at (0.5,-1) {};

\node (9) at (-1,-3.5) {};
\draw[string,out=180,in=270] (6) to (3.center);
\draw [string, out=270,in=90] (0.center) to (A);
\draw [string, out=180,in=0] (A) to (6);
\draw [string, out=0,in=90] (A) to (7);
\draw[string] (6) to (9);

\end{tikzpicture}\end{aligned}  
  \end{equation}

\end{proposition}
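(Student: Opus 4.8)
The plan is to show that the four families of equations force \tinymultls to be the linear extension of a \emph{loop} operation on the copyable basis of the black classical structure, so that $(H,\tinymultls,\tinycomultls,\tinyunitls,\tinycounitls)$ is literally of the form appearing in the definition of a latin square structure. So the overall strategy is: (i) extract an honest binary operation $*$ from the bialgebra laws, (ii) use the duality relations to identify the comultiplication and counit as the adjoints of the multiplication and unit, (iii) feed this into the preceding proposition on unitary rules to get the quasigroup axioms, and (iv) use unitality to upgrade the quasigroup to a loop.

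First I would use the bialgebra laws to establish \emph{functionality}. Let $A=\{\ket{a_i}:0\le i<d\}$ be the copyable orthonormal basis of the black classical structure; recall that the states copied by \tinycomult[blackdot] are exactly this basis. The first two bialgebra equations say precisely that \tinymultls is a comonoid homomorphism from the copying comonoid on $H\otimes H$ to the copying comonoid on $H$. Hence for basis states $a,b$ the comultiplication equation gives $\tinycomult[blackdot](\tinymultls(a\otimes b))=\tinymultls(a\otimes b)\otimes\tinymultls(a\otimes b)$, so $\tinymultls(a\otimes b)$ is copyable, while the counit equation gives $\tinycounit[blackdot](\tinymultls(a\otimes b))=\tinycounit[blackdot](a)\,\tinycounit[blackdot](b)=1\neq 0$, ruling out the zero vector. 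Since the only nonzero copyable states are the elements of $A$, I may define $a*b:=\tinymultls(a\otimes b)\in A$, and \tinymultls is its linear extension. The remaining two bialgebra equations say \tinyunitls is copyable with counit $1$, hence $\tinyunitls=\ket{e}$ for a single basis element $e\in A$.

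Next I would pin down the comultiplication and counit. The black classical structure equips $H$ with a self-duality via its cup and cap, and the two duality relations express \tinycomultls (resp.\ \tinycounitls) as the transpose of \tinymultls (resp.\ \tinyunitls) under this duality --- this is exactly the black-duality identity derived earlier in this chapter for an honest latin square structure, read backwards. Because $*$ maps basis states to basis states, the matrix of \tinymultls in the basis $A$ has entries in $\{0,1\}$ and is therefore real; for a real matrix the black transpose coincides with the adjoint, so the duality relations yield $\tinycomultls=(\tinymultls)^{\dag}$ and $\tinycounitls=(\tinyunitls)^{\dag}$. Thus the comultiplication and counit are forced to be the adjoints of the multiplication and unit, matching the definition. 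With \tinymultls now known to be the linear extension of $*$ and $\tinycomultls=(\tinymultls)^{\dag}$, the hypotheses of the preceding proposition on unitary rules are in force, so its chain of equivalences applies verbatim: the Unitarity Property is equivalent to the four quasigroup identities (3.1)--(3.4) together with uniqueness of $a*b$, which fully characterise the latin square property. Hence $(A,*)$ is a quasigroup, and the (co)unitality equations read $e*a=a=a*e$ for all $a\in A$, so $(A,*)$ is a loop, whose multiplication table is a latin square.

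The hard part will be the first step. Everything downstream is a matter of invoking equivalences already proved in this chapter, but those equivalences presuppose that \tinymultls is the linear extension of a genuine function on the basis. Extracting that functionality from the purely diagrammatic bialgebra laws --- using the characterisation of copyable states as the orthonormal basis, and the counit condition to exclude the zero vector --- is the one place where the concrete structure of \cat{FHilb} and of classical structures is essential, and it is what makes this converse a genuine theorem rather than a restatement of the earlier observations.
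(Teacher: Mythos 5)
Your proof is correct, and its skeleton is the same as the paper's: define $*$ on the copyable basis via \tinymultls, use the bialgebra laws for closure, (co)unitality for the unit, and the preceding proposition on unitary rules to turn the Unitarity Property into the quasigroup identities (3.1)--(3.4). The difference is that your write-up is actually more complete than the paper's own proof, in two places. First, the paper infers ``$*$ is closed'' from copyability alone; since the zero vector is trivially copied by \tinycomult[blackdot], one genuinely needs your observation that the counit part of the bialgebra laws gives $\tinycounit[blackdot]\big(\tinymultls(\ket{a}\otimes\ket{b})\big)=1\neq 0$, so that each product lands on an honest basis element. Second, the paper's proof never invokes the Duality Relations at all, although they sit among the hypotheses; your step deducing $\tinycomultls=(\tinymultls)^{\dag}$ from them (black transpose plus realness of the $0/1$ matrix entries) is exactly their purpose, and it is indispensable both because the definition of a latin square structure requires the comultiplication to be the adjoint of the multiplication, and because the unitary-rules proposition reads the comultiplication node as that adjoint, so its chain of equivalences cannot be applied until this identification is made. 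One small inaccuracy: the two duality relations concern only the multiplication and comultiplication, so they do not directly ``express'' $\tinycounitls$ as the transpose of $\tinyunitls$; that identification needs a separate (easy) argument, e.g.\ taking adjoints of counitality shows that $(\tinycounitls)^{\dag}$ is a two-sided unit for \tinymultls, and two-sided units in a quasigroup are unique, so $(\tinycounitls)^{\dag}=\tinyunitls$.
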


\begin{proof}
Let $A$ be the set of basis elements copyable by \tinycomult[blackdot], that form an orthonormal basis of $H$. Let $*:A \times A \rightarrow A$ represent the binary operation on $A$ s.t \mbox{$a*b:= \begin{aligned}\begin{pic}
\node[ls] (A) at (0,0) {}; 
\node (1) at (0,0.35) {};
\node[state,black,scale=0.5] (5) at (-0.275,-0.3) {$a$};
\node[state,black,scale=0.5] (6) at (0.275,-0.3) {$b$};
\draw[string] (1) to (A);
\draw[string,out=180,in=90] (A) to (5);
\draw[string,out=0,in=90] (A) to (6);

\end{pic}\end{aligned}, \forall a,b \in A$}.

By the bialgebra laws, $a*b$ with $a,b \in A$ and \tinyunitls \, are copyable by \tinycomult[blackdot]. Thus $*$ is closed. From the unitality property we have that the element of $A$ corresponding to \tinyunitls, say $1 \in A$ is both a left and right identity for $*$.
The unitarity properties have already been proven to hold for \tinymultls \, iff the four equations (3.1-3.4), hold for $*$. Thus $(A,*)$ is a loop with a multiplication table that forms a latin square up to isotopy.

\end{proof}

The following section is somewhat of an aside, but is a nice result and demonstrates the graphical rules for  latin square structures in action.
\newpage
\section{Frobenius Latin Squares} 
\begin{proposition}
A latin square structure is a Frobenius algebra, iff it is associative. I.e. the underlying quasigroup is a group.
\end{proposition}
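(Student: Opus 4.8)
The plan is to split the biconditional and observe first that, for a latin square structure, the comultiplication $\tinycomultls$ is by definition the dagger of the multiplication $\tinymultls$, so coassociativity of $\tinycomultls$ is equivalent to associativity of $\tinymultls$. Being a dagger Frobenius algebra presupposes that $(H,\tinymultls,\tinyunitls)$ is a monoid, i.e. that $\tinymultls$ is associative; since the loop $(A,*)$ is already unital (the state $\tinyunitls$ is its two-sided identity, by the (co)unitality rules), the only content of ``monoid'' beyond what a latin square structure already supplies is associativity. Hence if the structure is a Frobenius algebra then $*$ is associative, and an associative loop is a group. This disposes of the forward direction and reduces the whole proposition to a single implication: for a latin square structure, associativity of $*$ already forces the Frobenius law.

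For that converse I would assume $(A,*)$ is a group $G$ with identity the element named by $\tinyunitls$, and verify the Frobenius law by evaluating both normal forms on the copyable black basis states, using that every map in sight is the linear extension of its action on $G$. On one side, $\tinycomultls\circ\tinymultls$ sends $a\otimes b$ to $\tinycomultls(a*b)=\sum_{x*y=a*b} x\otimes y$. On the other side, $(\id\otimes\tinymultls)\circ(\tinycomultls\otimes\id)$ sends $a\otimes b$ to $\sum_{x*y=a} x\otimes(y*b)$; setting $z=y*b$ and using associativity together with invertibility to rewrite $x*z=x*(y*b)=(x*y)*b=a*b$, this is exactly $\sum_{x*z=a*b} x\otimes z$, matching the first side. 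The mirror equation follows by the symmetric computation that splits on the right, so the Frobenius law holds, and together with unitality, counitality and $\tinycomultls$ being the dagger of $\tinymultls$ this makes the latin square structure a Frobenius algebra.

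More in keeping with the graphical spirit of the chapter, I would also give the same argument diagrammatically: use the black-duality relations derived in the preceding subsection (the equations expressing $\tinymultls$ and $\tinycomultls$ through the black cup and cap) to turn the comultiplication leg of $\tinycomultls\circ\tinymultls$ into a bent multiplication, then apply the associativity equation for $\tinymultls$ to reassociate the resulting trio of white nodes, and finally fold the bent wire back with the same duality relation to reach the opposite normal form. Identities and counits can be inserted or deleted freely using the unitality rules, so no further structure is required.

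The main obstacle is pinpointing exactly where associativity enters and confirming that the quasigroup axioms alone do not suffice. In the basis computation this is the reindexing step $\sum_{x*y=a} x\otimes(y*b)=\sum_{x*z=a*b} x\otimes z$, which is a genuine bijection of summation indices precisely because $x*(y*b)=(x*y)*b$; for a non-associative loop this reindexing fails and the two sides genuinely differ, so associativity is not merely sufficient but necessary. Graphically, the corresponding difficulty is that reassociating the white multiplications is the one move not already licensed by unitality, the bialgebra laws, or the duality relations, so the entire converse rests on that single equation.
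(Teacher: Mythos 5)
Your proof is correct, but it takes a genuinely different route from the paper's. The paper does not treat the forward implication as definitional: it assumes only that the Frobenius \emph{law} holds for the latin square structure --- which is not assumed to be a monoid --- and derives associativity through a long graphical computation (black spider rules, counitality, and the unitarity rule LS2). Under your reading, where ``Frobenius algebra'' has associativity built in, that direction is vacuous and the proposition reduces to the single implication that associativity forces the Frobenius law. Your closing remark does, however, contain the substantive forward direction: since the black basis states are orthonormal, equality of $\sum_{x*y=a} x \otimes (y*b)$ with $\sum_{x*z=a*b} x \otimes z$ forces $(x\backslash a)*b = x\backslash (a*b)$ for every $x,a,b$, which is associativity; stating this explicitly, rather than as an aside about ``necessity,'' would recover the paper's stronger statement. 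For the converse, your basis-reindexing computation is sound and complete --- coassociativity is free because the comultiplication is the dagger of the multiplication, and unitality comes from the loop --- whereas the paper stays inside the graphical calculus, using LS2, the associativity assumption, and spider merging. The trade-off: your computation is elementary and pinpoints exactly where associativity enters, but it relies on summing over copyable basis states in $\cat{FHilb}$; the paper's equational proof is basis-free, demonstrates the latin square axioms in action (its stated purpose for this section), and carries over to other symmetric monoidal categories, which the dissertation explicitly values.
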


\begin{proof}
Suppose \tinymultls \, is a latin square structure and \tinymult[blackdot] is a classical structure, then:
\begin{equation}
  \label{eq:Frobeniusidentities}
  \left[
    \begin{aligned}\begin{tikzpicture}[scale=0.7]
          \node (0a) at (-0.5,0) {};
          \node (0b) at (0.5,0) {};
          \node[ls] (1) at (0,1) {};
          \node[ls] (2) at (0,2) {};
          \node (3a) at (-0.5,3) {};
          \node (3b) at (0.5,3) {};
          \draw[string,out=90,in=180] (0a) to (1);
          \draw[string,out=90,in=0] (0b) to (1);
          \draw[string] (1) to (2);
          \draw[string,out=180,in=270] (2) to (3a);
          \draw[string,out=0,in=270] (2) to (3b);
          
      \end{tikzpicture}\end{aligned}
    \quad = \quad
    \begin{aligned}\begin{tikzpicture}[scale=0.7]
          \node (0) at (0,0) {};
          \node (0a) at (0,1) {};
          \node[ls] (1) at (0.5,2) {};
          \node[ls] (2) at (1.5,1) {};
          \node (3) at (1.5,0) {};
          \node (4) at (2,3) {};
          \node (4a) at (2,2) {};
          \node (5) at (0.5,3) {};
          \draw[string] (0) to (0a.center);
          \draw[string,out=90,in=180] (0a.center) to (1);
          \draw[string,out=0,in=180] (1) to (2);
          \draw[string,out=0,in=270] (2) to (4a.center);
          \draw[string] (4a.center) to (4);
          \draw[string] (2) to (3);
          \draw[string] (1) to (5);
                    
      \end{tikzpicture}\end{aligned}
      \right]
    \quad  \Rightarrow \quad
    \left[
    \begin{aligned}\begin{tikzpicture}[yscale=0.525,xscale=2/3]
          \node[ls] (0b) at (0.5,0) {};
          \node[ls] (1) at (0,1) {};
          \node[ls] (2) at (0,2) {};
          \node (3a) at (0.5,3) {};
          \node (3b) at (-0.5,3) {};
          \node (-1a) at (-1,-1) {};
          \node (-1b) at (0,-1) {};
          \node (-1c) at (1,-1) {}; 
          \draw[string,out=90,in=180] (-1a) to (1);
          \draw[string,out=90,in=180] (-1b) to (0b);
          \draw[string,out=90,in=0] (-1c) to (0b);
          \draw[string,out=90,in=0] (0b) to (1);
          \draw[string] (1) to (2);
          \draw[string,out=0,in=270] (2) to (3a);
          \draw[string,out=180,in=270] (2) to (3b);
              
      \end{tikzpicture}\end{aligned}
    \quad = \quad
\begin{aligned}\begin{tikzpicture}[yscale=0.525]
          \node (0) at (0,-1) {};
          \node (0a) at (0,1) {};
          \node[ls] (1) at (0.5,2) {};
          \node[ls] (2) at (1.5,1) {};
          \node[ls] (3) at (1.5,0) {};
          \node (4) at (2,3) {};
          \node (4a) at (2,2) {};
          \node (5) at (0.5,3) {};
          \node (6a) at (1,-1) {};
          \node (6b) at (2,-1) {};
          \draw[string] (0) to (0a.center);
          \draw[string,out=90,in=180] (0a.center) to (1);
          \draw[string,out=0,in=180] (1) to (2);
          \draw[string,out=0,in=270] (2) to (4a.center);
          \draw[string] (4a.center) to (4);
          \draw[string] (2) to (3);
          \draw[string,out=180,in=90] (3) to (6a);
          \draw[string,out=0,in=90] (3) to (6b);
          \draw[string] (1) to (5);

      \end{tikzpicture}\end{aligned}
      \right]  
     $$ \\ $$
       \quad \Rightarrow \quad
       \left[
\begin{aligned}\begin{tikzpicture}[yscale=0.5]  
          \node (0a) at (-1,-0.5) {};
          \node (0b) at (-0.5,-0.5) {};
          \node (0c) at (0.5,-0.5) {};
          \node (7a) at (-1,0.5) {};
          \node (7b) at (-0.5,0.5) {};
          \node[blackdot] (6) at (0.5,0.5) {};
          \node[ls] (1) at (0,1) {};
          \node[ls] (2) at (-0.5,1.5) {};
          \node[ls] (3) at (-0.5, 2) {};
          \node (8) at (-1,3) {};
          \node[blackdot] (4) at (0.5,3.5) {};  
          \node (5a) at (-1,4.5) {};
          \node (5b) at (0.5,4.5) {};
          \node (9) at (1.25,2) {};
          \draw[string] (0a) to (7a.center);
          \draw[string, out=90, in =180] (7a.center) to (2);
          \draw[string, out=0, in=90] (2) to (1);
          \draw[string, out=0, in=180] (1) to (6);
          \draw[string, out=180, in=90] (1) to (7b.center);
          \draw[string] (7b.center) to (0b);
          \draw[string] (0c) to (6); 
          \draw[string] (2) to (3);
          \draw[string, out=180, in=270] (3) to (8.center);
          \draw[string] (8.center) to (5a);
          \draw[string, out=0, in=180] (3) to (4);
          \draw[string] (4) to (5b);
          \draw[string, out=0,in=270] (6) to (9.center);
          \draw[string, out=90, in=0] (9.center) to (4);         
\end{tikzpicture}\end{aligned} 
\quad = \quad
\begin{aligned}\begin{tikzpicture}[yscale=0.5,xscale=0.7]
          \node (0) at (0,-2) {};
          \node (0a) at (0,1) {};
          \node[ls] (1) at (0.5,2) {};
          \node[ls] (2) at (1.5,1) {};
          \node[ls] (3) at (1.5,0) {};
          \node (4) at (2.5,3) {};
          \node[blackdot] (4a) at (2.5,2) {};
          \node (5) at (0.5,3) {};
          \node (6a) at (1,-1) {};
          \node[blackdot] (6b) at (2.5,-1) {};
          \node (7a) at (1,-2) {};
          \node (7b) at (2.5,-2) {};
          \node (8) at (3.25,0.5) {};
          \draw[string] (0) to (0a.center);
          \draw[string,out=90,in=180] (0a.center) to (1);
          \draw[string,out=0,in=180] (1) to (2);
          \draw[string,out=0,in=180] (2) to (4a.center);
          \draw[string] (4a.center) to (4);
          \draw[string] (2) to (3);
          \draw[string,out=180,in=90] (3) to (6a.center);
          \draw[string,out=0,in=180] (3) to (6b);
          \draw[string] (1) to (5);
          \draw[string] (7a) to (6a.center);
          \draw[string] (7b) to (6b.center);
          \draw[string, out=0, in=270] (6b) to (8.center);
          \draw[string, out=90, in=0] (8.center) to (4a);

\end{tikzpicture}\end{aligned}
\right]
       \quad \Rightarrow \quad
       \left[
\begin{aligned}\begin{tikzpicture}[yscale=0.5,xscale=0.85]  
          \node (0a) at (-1,-0.5) {};
          \node (0b) at (-0.5,-0.5) {};
          \node (0c) at (0.5,-0.5) {};
          \node (7a) at (-1,0.5) {};
          \node (7b) at (-0.5,0.5) {};
          \node[blackdot] (6) at (0.5,0.5) {};
          \node[ls] (1) at (0,1) {};
          \node[ls] (2) at (-0.5,1.5) {};
          \node[ls] (3) at (-0.5, 2) {};
          \node (8) at (-1,2.5) {};
          \node[blackdot] (4) at (0.5,3) {};  
          \node (5a) at (-0.25,4.25) {};
          \node[ls] (5) at (-0.25,3.5) {};
          \node (9) at (1.25,1.75) {};
          \draw[string] (0a) to (7a.center);
          \draw[string, out=90, in =180] (7a.center) to (2);
          \draw[string, out=0, in=90] (2) to (1);
          \draw[string, out=0, in=180] (1) to (6);
          \draw[string, out=180, in=90] (1) to (7b.center);
          \draw[string] (7b.center) to (0b);
          \draw[string] (0c) to (6); 
          \draw[string] (2) to (3);
          \draw[string, out=180, in=270] (3) to (8.center);
          \draw[string, out=90, in=180] (8.center) to (5);
          \draw[string, out=0,in=90] (5) to (4);
          \draw[string, out=0, in=180] (3) to (4);         
          \draw[string, out=0,in=270] (6) to (9.center);
          \draw[string, out=90, in=0] (9.center) to (4);         
          \draw[string] (5) to (5a);
\end{tikzpicture}\end{aligned} 
\quad = \quad
\begin{aligned}\begin{tikzpicture}[yscale=0.40,xscale=0.5]
          \node (0) at (0,-2) {};
          \node (0a) at (0,1) {};
          \node[ls] (1) at (0.5,2) {};
          \node[ls] (2) at (1.5,1) {};
          \node[ls] (3) at (1.5,0) {};
          \node (5a) at (1.5,4) {};
          \node[blackdot] (4a) at (2.5,2) {};
          \node[ls] (5) at (1.5,3) {};
          \node (6a) at (1,-1) {};
          \node[blackdot] (6b) at (2.5,-1) {};
          \node (7a) at (1,-2) {};
          \node (7b) at (2.5,-2) {};
          \node (8) at (3.25,0.5) {};
          \draw[string] (0) to (0a.center);
          \draw[string,out=90,in=180] (0a.center) to (1);
          \draw[string,out=0,in=180] (1) to (2);
          \draw[string,out=0,in=180] (2) to (4a.center);
          \draw[string] (5) to (5a);
          \draw[string] (2) to (3);
          \draw[string,out=180,in=90] (3) to (6a.center);
          \draw[string,out=0,in=180] (3) to (6b);
          \draw[string,out=90,in=180] (1) to (5);
          \draw[string, out=90,in=0] (4a) to (5);
          \draw[string] (7a) to (6a.center);
          \draw[string] (7b) to (6b.center);
          \draw[string, out=0, in=270] (6b) to (8.center);
          \draw[string, out=90, in=0] (8.center) to (4a);

\end{tikzpicture}\end{aligned}
\right]
$$\\$$
 \quad \text{LHS}  = \quad
\begin{aligned}\begin{tikzpicture}[yscale=0.5,xscale=0.85]  
          \node (0a) at (-1,-0.5) {};
          \node (0b) at (-0.5,-0.5) {};
          \node (0c) at (0.5,-0.5) {};
          \node (7a) at (-1,0.5) {};
          \node (7b) at (-0.5,0.5) {};
          \node[blackdot] (6) at (0.5,0.5) {};
          \node[ls] (1) at (0,1) {};
          \node[ls] (2) at (-0.5,1.5) {};
          \node[ls] (3) at (-0.5, 2) {};
          \node (8) at (-1,2.5) {};
          \node[blackdot] (4) at (0.5,3) {};  
          \node (5a) at (-0.25,4.25) {};
          \node[ls] (5) at (-0.25,3.5) {};
          \node (9) at (1.25,1.75) {};
          \draw[string] (0a) to (7a.center);
          \draw[string, out=90, in =180] (7a.center) to (2);
          \draw[string, out=0, in=90] (2) to (1);
          \draw[string, out=0, in=180] (1) to (6);
          \draw[string, out=180, in=90] (1) to (7b.center);
          \draw[string] (7b.center) to (0b);
          \draw[string] (0c) to (6); 
          \draw[string] (2) to (3);
          \draw[string, out=180, in=270] (3) to (8.center);
          \draw[string, out=90, in=180] (8.center) to (5);
          \draw[string, out=0,in=90] (5) to (4);
          \draw[string, out=0, in=180] (3) to (4);         
          \draw[string, out=0,in=270] (6) to (9.center);
          \draw[string, out=90, in=0] (9.center) to (4);         
          \draw[string] (5) to (5a);
\end{tikzpicture}\end{aligned} 
\quad \overset{(\text{Bl,S})}{=}\quad
\begin{aligned}\begin{tikzpicture}[yscale=0.5,xscale=0.85]
          \node (0a) at (-1,-0.5) {};
          \node (0b) at (-0.5,-0.5) {};
          \node (0c) at (0.5,-0.5) {};
          \node (7a) at (-1,0.5) {};
          \node (7b) at (-0.5,0.5) {};
          \node[blackdot] (6) at (0.5,0.5) {};
          \node[ls] (1) at (0,1) {};
          \node[ls] (2) at (-0.5,1.5) {};
          \node[ls] (3) at (-0.5, 2) {};
          \node (8) at (-1,2.5) {};
          \node[blackdot] (4) at (0.5,3) {};  
          \node (5a) at (-0.25,4.25) {};
          \node[ls] (5) at (-0.25,3.5) {};
          \node[blackdot] (9a) at (1,1.25) {};
          \node[blackdot] (9b) at (1,2.25) {};
          \draw[string] (0a) to (7a.center);
          \draw[string, out=90, in =180] (7a.center) to (2);
          \draw[string, out=0, in=90] (2) to (1);
          \draw[string, out=0, in=180] (1) to (6);
          \draw[string, out=180, in=90] (1) to (7b.center);
          \draw[string] (7b.center) to (0b);
          \draw[string] (0c) to (6); 
          \draw[string] (2) to (3);
          \draw[string, out=180, in=270] (3) to (8.center);
          \draw[string, out=90, in=180] (8.center) to (5);
          \draw[string, out=0,in=90] (5) to (4);
          \draw[string, out=0, in=180] (3) to (4);         
          \draw[string, out=0,in=270] (6) to (9a);
          \draw[string, out=90, in=0] (9b) to (4);         
          \draw[string] (5) to (5a);
          \draw[string,out=180,in=180] (9a) to (9b);
          \draw[string,out=0,in=0] (9a) to (9b);

\end{tikzpicture}\end{aligned}
\quad \overset{(\text{Bl,A})}{=} \quad
\begin{aligned}\begin{tikzpicture}[yscale=0.5,xscale=0.85]
          \node (0a) at (-1,-0.5) {};
          \node (0b) at (-0.5,-0.5) {};
          \node (0c) at (0.5,-0.5) {};
          \node (7a) at (-1,0.5) {};
          \node (7b) at (-0.5,0.5) {};
          \node[blackdot] (6) at (0.5,0.5) {};
          \node[ls] (1) at (0,1) {};
          \node[ls] (2) at (-0.5,1.5) {};
          \node[ls] (3) at (-0.5, 2) {};
          \node (8) at (-1,2.5) {};
          \node[blackdot] (4) at (0.5,3) {};  
          \node (5a) at (-0.25,4.25) {};
          \node[ls] (5) at (-0.25,3.5) {};
          \node[blackdot] (9a) at (1,1.25) {};
          \node[blackdot] (9b) at (0,2.25) {};
          \draw[string] (0a) to (7a.center);
          \draw[string, out=90, in =180] (7a.center) to (2);
          \draw[string, out=0, in=90] (2) to (1);
          \draw[string, out=0, in=180] (1) to (6);
          \draw[string, out=180, in=90] (1) to (7b.center);
          \draw[string] (7b.center) to (0b);
          \draw[string] (0c) to (6); 
          \draw[string] (2) to (3);
          \draw[string, out=180, in=270] (3) to (8.center);
          \draw[string, out=90, in=180] (8.center) to (5);
          \draw[string, out=0,in=90] (5) to (4);
          \draw[string, out=0, in=180] (3) to (9b);         
          \draw[string, out=0,in=270] (6) to (9a);
          \draw[string, out=90, in=180] (9b) to (4);         
          \draw[string] (5) to (5a);
          \draw[string,out=180,in=0] (9a) to (9b);
          \draw[string,out=0,in=0] (9a) to (4);
          \end{tikzpicture}\end{aligned}
\quad $$\\$$ \overset{(\text{Bl,SN})}{=} \quad
\begin{aligned}\begin{tikzpicture}[yscale=0.5,xscale=0.85]
          \node (0a) at (-1,-0.5) {};
          \node (0b) at (-0.5,-0.5) {};
          \node (0c) at (0.5,-0.5) {};
          \node (7a) at (-1,0.5) {};
          \node (7b) at (-0.5,0.5) {};
          \node[blackdot] (6) at (0.5,0.5) {};
          \node[ls] (1) at (0,1) {};
          \node[ls] (2) at (-0.5,1.5) {};
          \node[ls] (3) at (-0.5, 2) {};
          \node (8) at (-1,2.5) {};
          \node[blackdot] (4) at (0.5,3) {};  
          \node (5a) at (-0.25,4.25) {};
          \node[ls] (5) at (-0.25,3.5) {};
          \node[blackdot] (9a) at (1,1.25) {};
          \node[blackdot] (9b) at (0,2.25) {};
          \node[blackdot] (9c) at (1,2.25) {};
          \node[blackdot] (9d) at (1.5,3) {};                    
          \draw[string] (0a) to (7a.center);
          \draw[string, out=90, in =180] (7a.center) to (2);
          \draw[string, out=0, in=90] (2) to (1);
          \draw[string, out=0, in=180] (1) to (6);
          \draw[string, out=180, in=90] (1) to (7b.center);
          \draw[string] (7b.center) to (0b);
          \draw[string] (0c) to (6); 
          \draw[string] (2) to (3);
          \draw[string, out=180, in=270] (3) to (8.center);
          \draw[string, out=90, in=180] (8.center) to (5);
          \draw[string, out=0,in=90] (5) to (4);
          \draw[string, out=0, in=180] (3) to (9b);         
          \draw[string, out=0,in=270] (6) to (9a);
          \draw[string, out=90, in=180] (9b) to (4);         
          \draw[string] (5) to (5a);
          \draw[string,out=180,in=0] (9a) to (9b);
          \draw[string,out=0,in=0] (9a) to (9d);
          \draw[string,out=0,in=180] (4) to (9c);
          \draw[string,out=0,in=180] (9c) to (9d);
          \end{tikzpicture}\end{aligned}
          \quad \overset{(\text{Bl,SM})}{=} \quad
\begin{aligned}\begin{tikzpicture}[yscale=0.5,xscale=0.85]
          \node (0a) at (-1,-0.5) {};
          \node (0b) at (-0.5,-0.5) {};
          \node (0c) at (0.5,-0.5) {};
          \node (7a) at (-1,0.5) {};
          \node (7b) at (-0.5,0.5) {};
          \node[blackdot] (6) at (0.5,0.5) {};
          \node[ls] (1) at (0,1) {};
          \node[ls] (2) at (-0.5,1.5) {};
          \node[ls] (3) at (-0.5, 2) {};
          \node (8) at (-1,2.5) {};
          \node[blackdot] (4) at (0.25,3) {};  
          \node (5a) at (-0.5,4.25) {};
          \node[ls] (5) at (-0.5,3.5) {};
          \node[blackdot] (9a) at (1,1.25) {};
          \node[blackdot] (9b) at (0.25,2.5) {};
        
          \node[blackdot] (9d) at (1,3.5) {};                    
          \draw[string] (0a) to (7a.center);
          \draw[string, out=90, in =180] (7a.center) to (2);
          \draw[string, out=0, in=90] (2) to (1);
          \draw[string, out=0, in=180] (1) to (6);
          \draw[string, out=180, in=90] (1) to (7b.center);
          \draw[string] (7b.center) to (0b);
          \draw[string] (0c) to (6); 
          \draw[string] (2) to (3);
          \draw[string, out=180, in=270] (3) to (8.center);
          \draw[string, out=90, in=180] (8.center) to (5);
          \draw[string, out=0,in=180] (5) to (4);
          \draw[string, out=0, in=180] (3) to (9b);         
          \draw[string, out=0,in=270] (6) to (9a);
          \draw[string, out=90, in=270] (9b) to (4);         
          \draw[string] (5) to (5a);
          \draw[string,out=180,in=0] (9a) to (9b);
          \draw[string,out=0,in=0] (9a) to (9d);
          \draw[string,out=0,in=180] (4) to (9d);
          \end{tikzpicture}\end{aligned}
\quad  \overset{(\text{LS2})}{=} \quad
\begin{aligned}\begin{tikzpicture}[yscale=0.5,xscale=0.85]
          \node (0a) at (-1,-0.5) {};
          \node (0b) at (-0.5,-0.5) {};
          \node (0c) at (0.5,-0.5) {};
          \node (7a) at (-1,0.5) {};
          \node (7b) at (-0.5,0.5) {};
          \node[blackdot] (6) at (0.5,0.5) {};
          \node[ls] (1) at (0,1) {};
          \node[ls] (2) at (-0.5,1.5) {};

          \node (5a) at (-0.5,4.25) {};
          
          \node[blackdot] (9a) at (1,1.25) {};

          \node[blackdot] (9d) at (1,3.5) {};                    
          \node[blackdot] (9e) at (1,4) {};                    
          \draw[string] (0a) to (7a.center);
          \draw[string, out=90, in =180] (7a.center) to (2);
          \draw[string, out=0, in=90] (2) to (1);
          \draw[string, out=0, in=180] (1) to (6);
          \draw[string, out=180, in=90] (1) to (7b.center);
          \draw[string] (7b.center) to (0b);
          \draw[string] (0c) to (6); 
          \draw[string] (9d) to (9e);
                           
          \draw[string, out=0,in=270] (6) to (9a);

          \draw[string,out=180,in=180] (9a) to (9d);
          \draw[string,out=0,in=0] (9a) to (9d);
          \draw[string] (2) to (5a.center);
         
          \end{tikzpicture}\end{aligned}
\quad $$\\$$ \overset{(\text{Bl,S})}{=}\quad
\begin{aligned}\begin{tikzpicture}[scale=2/3]
          \node (0a) at (-1,-0.5) {};
          \node (0b) at (-0.5,-0.5) {};
          \node (0c) at (0.5,-0.5) {};
          \node (7a) at (-1,0.5) {};
          \node (7b) at (-0.5,0.5) {};
          \node[blackdot] (6) at (0.5,0.5) {};
          \node[ls] (1) at (0,1) {};
          \node[ls] (2) at (-0.5,1.5) {};

          \node (5a) at (-0.5,3.1) {};
          
          \node[blackdot] (9a) at (1,1.75) {};
                             
          \draw[string] (0a) to (7a.center);
          \draw[string, out=90, in =180] (7a.center) to (2);
          \draw[string, out=0, in=90] (2) to (1);
          \draw[string, out=0, in=180] (1) to (6);
          \draw[string, out=180, in=90] (1) to (7b.center);
          \draw[string] (7b.center) to (0b);
          \draw[string] (0c) to (6);

          \draw[string, out=0,in=270] (6) to (9a);

          \draw[string] (2) to (5a.center);
         
          \end{tikzpicture}\end{aligned}
\quad \overset{(\text{Bl,CU})}{=} \quad
\begin{aligned}\begin{tikzpicture}[scale=7/6]
          \node (0a) at (-1,0.25) {};
          \node (0b) at (-0.5,0.25) {};
          \node (0c) at (0.5,0.25) {};

          \node[ls] (1) at (0,1) {};
          \node[ls] (2) at (-0.5,1.5) {};

          \node (5a) at (-0.5,2) {};

          \draw[string, out=90, in =180] (0a) to (2);
          \draw[string, out=0, in=90] (2) to (1);
          \draw[string, out=0, in=90] (1) to (0c);
          \draw[string, out=180, in=90] (1) to (0b);

          \draw[string] (2) to (5a.center);
         
          \end{tikzpicture}\end{aligned}
\quad $$\\$$ RHS  = \quad
\begin{aligned}\begin{tikzpicture}[yscale=0.40,xscale=0.435]
          \node (0) at (0,-2) {};
          \node (0a) at (0,1) {};
          \node[ls] (1) at (0.5,2) {};
          \node[ls] (2) at (1.5,1) {};
          \node[ls] (3) at (1.5,0) {};
          \node (5a) at (1.5,4) {};
          \node[blackdot] (4a) at (2.5,2) {};
          \node[ls] (5) at (1.5,3) {};
          \node (6a) at (1,-1) {};
          \node[blackdot] (6b) at (2.5,-1) {};
          \node (7a) at (1,-2) {};
          \node (7b) at (2.5,-2) {};
          \node (8) at (3.25,0.5) {};
          \draw[string] (0) to (0a.center);
          \draw[string,out=90,in=180] (0a.center) to (1);
          \draw[string,out=0,in=180] (1) to (2);
          \draw[string,out=0,in=180] (2) to (4a.center);
          \draw[string] (5) to (5a);
          \draw[string] (2) to (3);
          \draw[string,out=180,in=90] (3) to (6a.center);
          \draw[string,out=0,in=180] (3) to (6b);
          \draw[string,out=90,in=180] (1) to (5);
          \draw[string, out=90,in=0] (4a) to (5);
          \draw[string] (7a) to (6a.center);
          \draw[string] (7b) to (6b.center);
          \draw[string, out=0, in=270] (6b) to (8.center);
          \draw[string, out=90, in=0] (8.center) to (4a);

\end{tikzpicture}\end{aligned}
\quad   \overset{(\text{LS2})}{=} \quad
\begin{aligned}\begin{tikzpicture}[scale=7/6]
          \node (0a) at (0.75,0.25) {};
          \node (0b) at (0.25,0.25) {};
          \node (0c) at (-0.75,0.25) {};

          \node[ls] (1) at (-0.25,1) {};
          \node[ls] (2) at (0.25,1.5) {};

          \node (5a) at (0.25,2) {};

          \draw[string, out=90, in =00] (0a) to (2);
          \draw[string, out=180, in=90] (2) to (1);
          \draw[string, out=180, in=90] (1) to (0c);
          \draw[string, out=0, in=90] (1) to (0b);

          \draw[string] (2) to (5a.center);
\end{tikzpicture}\end{aligned}
\quad $$\\ \text Thus:-$$  \quad
 \left[
    \begin{aligned}\begin{tikzpicture}[scale=0.7]
          \node (0a) at (-0.5,0) {};
          \node (0b) at (0.5,0) {};
          \node[ls] (1) at (0,1) {};
          \node[ls] (2) at (0,2) {};
          \node (3a) at (-0.5,3) {};
          \node (3b) at (0.5,3) {};
          \draw[string,out=90,in=180] (0a) to (1);
          \draw[string,out=90,in=0] (0b) to (1);
          \draw[string] (1) to (2);
          \draw[string,out=180,in=270] (2) to (3a);
          \draw[string,out=0,in=270] (2) to (3b);
          
      \end{tikzpicture}\end{aligned}
    \quad = \quad
    \begin{aligned}\begin{tikzpicture}[scale=0.7]
          \node (0) at (0,0) {};
          \node (0a) at (0,1) {};
          \node[ls] (1) at (0.5,2) {};
          \node[ls] (2) at (1.5,1) {};
          \node (3) at (1.5,0) {};
          \node (4) at (2,3) {};
          \node (4a) at (2,2) {};
          \node (5) at (0.5,3) {};
          \draw[string] (0) to (0a.center);
          \draw[string,out=90,in=180] (0a.center) to (1);
          \draw[string,out=0,in=180] (1) to (2);
          \draw[string,out=0,in=270] (2) to (4a.center);
          \draw[string] (4a.center) to (4);
          \draw[string] (2) to (3);
          \draw[string] (1) to (5);
                    
      \end{tikzpicture}\end{aligned}
      \right]
    \quad  \Rightarrow \quad
    \left[
 \begin{aligned}\begin{tikzpicture}[yscale=210/175,xscale=0.6]
          \node (0a) at (-1,0.25) {};
          \node (0b) at (-0.5,0.25) {};
          \node (0c) at (0.5,0.25) {};

          \node[ls] (1) at (0,1) {};
          \node[ls] (2) at (-0.5,1.5) {};

          \node (5a) at (-0.5,2) {};

          \draw[string, out=90, in =180] (0a) to (2);
          \draw[string, out=0, in=90] (2) to (1);
          \draw[string, out=0, in=90] (1) to (0c);
          \draw[string, out=180, in=90] (1) to (0b);

          \draw[string] (2) to (5a.center);
         
          \end{tikzpicture}\end{aligned}   
  \quad   = \quad
\begin{aligned}\begin{tikzpicture}[yscale=210/175,xscale=0.6]
          \node (0a) at (0.75,0.25) {};
          \node (0b) at (0.25,0.25) {};
          \node (0c) at (-0.75,0.25) {};

          \node[ls] (1) at (-0.25,1) {};
          \node[ls] (2) at (0.25,1.5) {};

          \node (5a) at (0.25,2) {};

          \draw[string, out=90, in =00] (0a) to (2);
          \draw[string, out=180, in=90] (2) to (1);
          \draw[string, out=180, in=90] (1) to (0c);
          \draw[string, out=0, in=90] (1) to (0b);

          \draw[string] (2) to (5a.center);
\end{tikzpicture}\end{aligned}  
\right]
  \end{equation}


Now assume that our latin square structure is associative:
\begin{equation} \left[
    \begin{aligned}\begin{tikzpicture}[yscale=147/120,xscale=0.7]
          \node (0a) at (-0.5,0) {};
          \node (0b) at (0.5,0) {};
          \node[ls] (1) at (0,1) {};
          \node (2) at (0,2) {};

          \draw[string,out=90,in=180] (0a.center) to (1);
          \draw[string,out=90,in=0] (0b.center) to (1);
          \draw[string] (1) to (2.center);
          \draw[string] (1,0) to (1,2);
          
      \end{tikzpicture}\end{aligned}
    \quad = \quad
    \begin{aligned}\begin{tikzpicture}[yscale=147/120,xscale=0.7]
          \node (0a) at (-0.5,0) {};
          \node (0b) at (0.5,0) {};
          \node[ls] (1) at (0,1) {};
          \node (2) at (0,2) {};

          \draw[string,out=90,in=180] (0a.center) to (1);
          \draw[string,out=90,in=0] (0b.center) to (1);
          \draw[string] (1) to (2.center);
          \draw[string] (1,0) to (1,2);
      \end{tikzpicture}\end{aligned}
       \right]
    \quad  \overset{\text{(LS2)}}{\Leftrightarrow} \quad
 \left[
    \begin{aligned}\begin{tikzpicture}[scale=49/60]
          \node (0a) at (0,0) {};
          \node (0b) at (1,0) {};
          \node (0c) at (2,0) {};
          \node[ls] (w1) at (1,0.5) {};
          \node[ls] (w2) at (1,2) {};
          \node[ls] (w3) at (0.5,2.5) {};
          \node[blackdot] (b1) at (1.5,1) {};
          \node[blackdot] (b2) at (1.5,1.5) {}; 
          \node (1a) at (0.5,3) {};
          \node (1b) at (2,3) {}; 
          \node (8) at (0,2) {};
          \draw[string,out=90,in=270] (0a.center) to (8.center);
          \draw[string,out=90,in=180] (8.center) to (w3);
          \draw[string] (w3) to (1a.center);
          \draw[string,out=0,in=90] (w3) to (w2);
          \draw[string,out=180,in=180] (w2) to (w1);
          \draw[string] (w1) to (0b.center);
          \draw[string,out=0,in=180] (w1) to (b1);
          \draw[string,out=0,in=90] (b1) to (0c.center);
          \draw[string,out=90,in=270] (b1) to (b2);
          \draw[string,out=180,in=0] (b2) to (w2);
          \draw[string,out=0,in=270] (b2) to (1b.center);

      \end{tikzpicture}\end{aligned}
    \quad \overset{\text{by assumption}}{=} \quad
        \begin{aligned}\begin{tikzpicture}[scale=49/60]
          \node (0a) at (-0.5,0) {};
          \node (0b) at (1,0) {};
          \node (0c) at (2,0) {};
          \node[ls] (w1) at (1,0.5) {};
          \node[ls] (w2) at (0.5,2.5) {};
          \node[ls] (w3) at (0,1.5) {};
          \node[blackdot] (b1) at (1.5,1) {};
          \node[blackdot] (b2) at (1.5,1.5) {}; 
          \node (1a) at (0.5,3) {};
          \node (1b) at (2,3) {}; 
          \node (8) at (-0.5,1) {}; 
          \draw[string,out=90,in=270] (0a.center) to (8.center);
          \draw[string,out=90,in=180] (8.center) to (w3);
          \draw[string] (w2) to (1a.center);
          \draw[string,out=0,in=180] (w3) to (w1);
          \draw[string,out=180,in=90] (w2) to (w3);
          \draw[string] (w1) to (0b.center);
          \draw[string,out=0,in=180] (w1) to (b1);
          \draw[string,out=0,in=90] (b1) to (0c.center);
          \draw[string,out=90,in=270] (b1) to (b2);
          \draw[string,out=180,in=0] (b2) to (w2);
          \draw[string,out=0,in=270] (b2) to (1b.center);

      \end{tikzpicture}\end{aligned}
    \quad = \quad
    \begin{aligned}\begin{tikzpicture}[scale=0.7]
          \node (0a) at (0.5,-0.5) {};
          \node (0b) at (1.5,-0.5) {};
          \node (0c) at (2,-0.5) {};
          \node[ls] (w1) at (1,0.5) {};
          \node[ls] (w2) at (1,2) {};
          \node[ls] (w3) at (1,0) {};
          \node[blackdot] (b1) at (1.5,1) {};
          \node[blackdot] (b2) at (1.5,1.5) {}; 
          \node (1a) at (1,3) {};
          \node (1b) at (2,3) {}; 
          
          \draw[string,out=90,in=180] (0a.center) to (w3);
     
          \draw[string] (w2) to (1a.center);
          \draw[string] (w3) to (w1);
          \draw[string,out=180,in=180] (w2) to (w1);
          \draw[string,out=0,in=90] (w3) to (0b.center);
          \draw[string,out=0,in=180] (w1) to (b1);
          \draw[string,out=0,in=90] (b1) to (0c.center);
          \draw[string,out=90,in=270] (b1) to (b2);
          \draw[string,out=180,in=0] (b2) to (w2);
          \draw[string,out=0,in=270] (b2) to (1b.center);
      \end{tikzpicture}\end{aligned}
       \right]    
\quad $$\\$$ \Rightarrow \quad
\left[
      \begin{aligned}\begin{tikzpicture}[scale=0.7]
          \node (0a) at (-0.5,0) {};
          \node (0b) at (1,0) {};
          \node (0c) at (2,0) {};
          \node[ls] (w1) at (1,0.5) {};
          \node[ls] (w2) at (0.5,2.5) {};
          \node[ls] (w3) at (0,1.5) {};
          \node[blackdot] (b1) at (1.5,1) {};
          \node[blackdot] (b2) at (1.5,1.5) {}; 
          \node[blackdot] (b3) at (1.5,3.5) {};
          \node[ls] (1a) at (0.5,3) {};
          \node (2a) at (0,4) {};
          \node (2b) at (1.5,4) {}; 
          \node (8) at (-0.5,1) {}; 
          \draw[string,out=90,in=270] (0a.center) to (8.center);
          \draw[string,out=90,in=180] (8.center) to (w3);
          \draw[string] (w2) to (1a);
          \draw[string,out=0,in=180] (w3) to (w1);
          \draw[string,out=180,in=90] (w2) to (w3);
          \draw[string] (w1) to (0b.center);
          \draw[string,out=0,in=180] (w1) to (b1);
          \draw[string,out=0,in=90] (b1) to (0c.center);
          \draw[string,out=90,in=270] (b1) to (b2);
          \draw[string,out=180,in=0] (b2) to (w2);
          \draw[string,out=0,in=0] (b2) to (b3);
          \draw[string] (b3) to (2b.center);
          \draw[string,out=180,in=270] (1a) to (2a.center);
          \draw[string,out=0,in=180] (1a) to (b3);

      \end{tikzpicture}\end{aligned}
    \quad = \quad
    \begin{aligned}\begin{tikzpicture}[scale=0.7]
          \node (0a) at (0.5,-0.5) {};
          \node (0b) at (1.5,-0.5) {};
          \node (0c) at (2,-0.5) {};
          \node[ls] (w1) at (1,0.5) {};
          \node[ls] (w2) at (1,2) {};
          \node[ls] (w3) at (1,0) {};
          \node[blackdot] (b1) at (1.5,1) {};
          \node[blackdot] (b2) at (1.5,1.5) {};
          \node[blackdot] (b3) at (1.5,3.5) {}; 
          \node[ls] (1a) at (1,3) {};
          \node (2a) at (0,4) {};
          \node (2b) at (1.5,4) {}; 
          
          \draw[string,out=90,in=180] (0a.center) to (w3);
          \draw[string] (b3) to (2b.center);     
          \draw[string] (w2) to (1a.center);
          \draw[string] (w3) to (w1);
          \draw[string,out=180,in=180] (w2) to (w1);
          \draw[string,out=0,in=90] (w3) to (0b.center);
          \draw[string,out=0,in=180] (w1) to (b1);
          \draw[string,out=0,in=90] (b1) to (0c.center);
          \draw[string,out=90,in=270] (b1) to (b2);
          \draw[string,out=180,in=0] (b2) to (w2);
          \draw[string,out=180,in=270] (1a) to (2a.center);
          \draw[string,out=0,in=0] (b2) to (b3);
          \draw[string,out=0,in=180] (1a) to (b3);
      \end{tikzpicture}\end{aligned}
       \right]          
\quad  \overset{(\text{LS2 both sides})}{\Leftrightarrow} \quad
\left[
      \begin{aligned}\begin{tikzpicture}[scale=0.7]
          \node (0a) at (-0.5,0) {};
          \node (0b) at (1,0) {};
          \node (0c) at (2,0) {};
          \node[ls] (w1) at (1,0.5) {};
          \node[ls] (w3) at (0,1.5) {};
          \node[blackdot] (b1) at (1.5,1) {};
          \node (2a) at (0,2) {};
          \node (2b) at (1.5,2) {}; 
          \node (8) at (-0.5,1) {}; 
          \draw[string,out=90,in=270] (0a.center) to (8.center);
          \draw[string,out=90,in=180] (8.center) to (w3);
          \draw[string,out=0,in=180] (w3) to (w1);
          \draw[string] (w1) to (0b.center);
          \draw[string,out=0,in=180] (w1) to (b1);
          \draw[string,out=0,in=90] (b1) to (0c.center);
          \draw[string] (b1) to (2b.center);
          \draw[string] (w3) to (2a.center);                    
          
      \end{tikzpicture}\end{aligned}
    \quad = \quad
    \begin{aligned}\begin{tikzpicture}[scale=0.7]
          \node (0a) at (0.5,-0.5) {};
          \node (0b) at (1.5,-0.5) {};
          \node (0c) at (2,-0.5) {};
          \node[ls] (w1) at (1,0.5) {};
          \node[ls] (w3) at (1,0) {};
          \node[blackdot] (b1) at (1.5,1) {};
          \node (2a) at (0,2) {};
          \node (2b) at (1.5,2) {}; 
          
          \draw[string,out=90,in=180] (0a.center) to (w3);
          \draw[string] (b1) to (2b.center);     
          \draw[string,out=0,in=90] (w3) to (0b.center);
          \draw[string] (w3) to (w1);
          \draw[string,out=0,in=180] (w1) to (b1);
          \draw[string,out=0,in=90] (b1) to (0c.center);
          \draw[string,out=180,in=270] (w1) to (2a.center);
      \end{tikzpicture}\end{aligned}
       \right] 
\quad $$\\$$ \Leftrightarrow \quad
\left[
      \begin{aligned}\begin{tikzpicture}[scale=0.7]
          \node (0a) at (-0.5,0) {};
          \node (0b) at (1,0) {};

          \node[ls] (w1) at (1,0.5) {};
          \node[ls] (w3) at (0,1.5) {};
          \node[blackdot] (b1) at (1.5,1) {};

          \node (2a) at (0,2) {};
          \node (2b) at (1.5,2) {}; 
          \node (8) at (-0.5,1) {}; 
          \draw[string,out=90,in=270] (0a.center) to (8.center);
          \draw[string,out=90,in=180] (8.center) to (w3);
          \draw[string,out=0,in=180] (w3) to (w1);
          \draw[string] (w1) to (0b.center);
          \draw[string,out=0,in=180] (w1) to (b1);

          \draw[string] (b1) to (2b.center);
          \draw[string] (w3) to (2a.center);   
          \draw[string,out=180,in=0] (2,0.5) to (b1);
          \node[blackdot](b2) at (2,0.5) {};  
          \node (2c) at (2.5,2) {};
          \draw[string,out=0,in=270] (b2) to (2c.center);                          
          
      \end{tikzpicture}\end{aligned}
    \quad = \quad
    \begin{aligned}\begin{tikzpicture}[scale=0.7]
          \node (0a) at (0.5,-0.5) {};
          \node (0b) at (1.5,-0.5) {};

          \node[ls] (w1) at (1,0.5) {};
          \node[ls] (w3) at (1,0) {};
          \node[blackdot] (b1) at (1.5,1) {};
          \node (2a) at (0,2) {};
          \node (2b) at (1.5,2) {}; 
          
          \draw[string,out=90,in=180] (0a.center) to (w3);
          \draw[string] (b1) to (2b.center);     
          \draw[string,out=0,in=90] (w3) to (0b.center);
          \draw[string] (w3) to (w1);
          \draw[string,out=0,in=180] (w1) to (b1);
          \draw[string,out=0,in=180] (b1) to (2,0.5);
          \draw[string,out=180,in=270] (w1) to (2a.center);

          \node[blackdot] (2) at (2,0.5) {};  
          \node (2c) at (2.5,2) {};
          \draw[string,out=0,in=270] (2) to (2c.center);  
      \end{tikzpicture}\end{aligned}
       \right] 
\quad \Leftrightarrow \quad
\left[
      \begin{aligned}\begin{tikzpicture}[scale=0.7]
          \node (0a) at (-0.5,0) {};
          \node (0b) at (1,0) {};

          \node[ls] (w1) at (1,0.5) {};
          \node[ls] (w3) at (0,1.5) {};
          \node[blackdot] (b1) at (1.5,1) {};

          \node (2a) at (0,2) {};
          \node (2b) at (2,2) {}; 
          \node (8) at (-0.5,1) {}; 
          \draw[string,out=90,in=270] (0a.center) to (8.center);
          \draw[string,out=90,in=180] (8.center) to (w3);
          \draw[string,out=0,in=180] (w3) to (w1);
          \draw[string] (w1) to (0b.center);
          \draw[string,out=0,in=180] (w1) to (b1);

          \draw[string] (2,1.5) to (2b.center);
          \draw[string] (w3) to (2a.center);   
          \draw[string,out=180,in=0] (2,0.5) to (b1);
          \node[blackdot](b2) at (2,0.5) {};  
          \node[blackdot] (2c) at (2,1.5) {};
          \draw[string,out=0,in=0] (b2) to (2c.center);                          \draw[string,out=90,in=180] (b1) to (2c);
          
      \end{tikzpicture}\end{aligned}
    \quad = \quad
    \begin{aligned}\begin{tikzpicture}[scale=0.7]
          \node (0a) at (0.5,-0.5) {};
          \node (0b) at (1.5,-0.5) {};

          \node[ls] (w1) at (1,0.5) {};
          \node[ls] (w3) at (1,0) {};
          \node[blackdot] (b1) at (1.5,1) {};
          \node (2a) at (0,2) {};
          \node (2b) at (1.5,2) {}; 
          \node[blackdot] (A) at (2,1.5) {};
          \draw[string,out=90,in=180] (0a.center) to (w3);
          \draw[string,in=180,out=90] (b1) to (A);     
          \draw[string,out=0,in=90] (w3) to (0b.center);
          \draw[string] (w3) to (w1);
          \draw[string,out=0,in=180] (w1) to (b1);
          \draw[string,out=0,in=180] (b1) to (2,0.5);
          \draw[string,out=180,in=270] (w1) to (2a.center);

          \node[blackdot] (2) at (2,0.5) {};  
          \node (2c) at (2.5,2) {};
          \draw[string,out=0,in=0] (2) to (A);  
          \node (B) at (2,2) {};
          \draw[string] (A) to (B.center);
      \end{tikzpicture}\end{aligned}
       \right]
\quad   $$\\$$ \overset{(\text{Bl,SM and S})}{\Leftrightarrow} \quad
\left[
    \begin{aligned}\begin{tikzpicture}[scale=0.6]
          \node (0a) at (-0.5,0) {};
          \node (0b) at (0.5,0) {};
          \node[ls] (1) at (0,1) {};
          \node[ls] (2) at (0,2) {};
          \node (3a) at (-0.5,3) {};
          \node (3b) at (0.5,3) {};
          \draw[string,out=90,in=180] (0a) to (1);
          \draw[string,out=90,in=0] (0b) to (1);
          \draw[string] (1) to (2);
          \draw[string,out=180,in=270] (2) to (3a);
          \draw[string,out=0,in=270] (2) to (3b);
          
      \end{tikzpicture}\end{aligned}
    \quad = \quad
    \begin{aligned}\begin{tikzpicture}[scale=0.6]
          \node (0) at (0,0) {};
          \node (0a) at (0,1) {};
          \node[ls] (1) at (0.5,2) {};
          \node[ls] (2) at (1.5,1) {};
          \node (3) at (1.5,0) {};
          \node (4) at (2,3) {};
          \node (4a) at (2,2) {};
          \node (5) at (0.5,3) {};
          \draw[string] (0) to (0a.center);
          \draw[string,out=90,in=180] (0a.center) to (1);
          \draw[string,out=0,in=180] (1) to (2);
          \draw[string,out=0,in=270] (2) to (4a.center);
          \draw[string] (4a.center) to (4);
          \draw[string] (2) to (3);
          \draw[string] (1) to (5);
                    
      \end{tikzpicture}\end{aligned}
      \right]\end{equation}\end{proof}


\chapter{Graphical Shift and Multiply Basis}
In Chapter 2 we found that a minimal shift and multiply basis $E_{ij}=P_j \circ H_{\text{diag}(i)}$
where 
\begin{equation}
P_j
\quad=\quad
\sum_{k=0}^{d-1}
\begin{aligned}\begin{tikzpicture}[yscale=0.75,xscale=0.65]
          \node (0w) at (0,-0.75) {};
          \node (0b) at (0,1.75) {};
          \node (w)[state,hflip,black,scale=0.5] at (0,0) {$k$};
          \node (b)[state,black,scale=0.5] at(0,1) {$k+j$};

          \draw[string] (0w) to (w);
          \draw[string] (0b) to (b);        
      \end{tikzpicture}\end{aligned}
      \quad=\quad
 \begin{aligned}\begin{tikzpicture}
\node (0a) at (0.75,-1) {};
         
          \node (j)[state,black,scale=0.5] at(2,0) {$j$};
          \node (2a)[agg] at (1.5,0.5) {};
          \node (2b) at (1.5,1.5) {};
          
          \draw[string,out=90,in=180] (0a) to (2a);

          \draw[string,out=90,in=0] (j) to (2a);
          \draw[string,out=90,in=270] (2a) to (2b);
         
      \end{tikzpicture}\end{aligned}      
\end{equation} 
and 
\begin{equation}
H_{\text{diag}(i)}
\quad=\quad
\sqrt{d} \sum_{k=0}^{d-1}
\begin{aligned}\begin{tikzpicture}[yscale=0.75,xscale=0.65]
          \node (0w) at (0,-0.75) {};
          \node (0b) at (0,1.75) {};
          \node (w)[state,hflip,black,scale=0.5] at (0,0) {$k$};
          \node (b)[state,black,scale=0.5] at(0,1) {$k$};
          \node (b2)[state,black,hflip,scale=0.5] at(-0.75,0.5) {$k$};
          \node (b3)[state,scale=0.5] at(-0.75,0.5) {$i$};

          \draw[string] (0w) to (w);
          \draw[string] (0b) to (b);        
      \end{tikzpicture}\end{aligned}     
 \quad=\quad
\sqrt{d} \begin{aligned}\begin{tikzpicture}
\node (0a) at (0.75,-1) {};
         
          \node (i)[state,scale=0.5] at(-0.5,0) {$i$};
          \node (2a)[blackdot] at (0,0.5) {};
          \node (2b) at (0,1.5) {};
          
          \draw[string,out=90,in=0] (0a) to (2a);

          \draw[string,out=90,in=180] (i) to (2a);
          \draw[string,out=90,in=270] (2a) to (2b);
         
      \end{tikzpicture}\end{aligned}
      \end{equation} 
The difference now is that our main classical structure will have an ONB of copyable states taken to be the elements of a loop rather than an abelian group. We also have the more general latin square structure \tinymultls, replacing the $G$-Frobenius algebra $\tinymultagg$, and a family of $d$ Hadamard matrices giving us $d$ different white orthonormal bases with corresponding white classical structures (all complementary to our main classical structure  from which the others are obtained via the Hadamard matrices and then normalised). We now have no relationship between the latin square and the Hadamard matrices.  I will denote the $i^{th}$ basis state of the $j^{th}$ white ONB (corresponding to the $j^{th}$ Hadamard matrix) by: 
$\begin{pic} 
\node[state,scale=0.5,label={[yshift=-0.3cm]0:{\tiny $j$}}] (A) at (0,0) {$i$};
\draw[string] (A) to (0,0.5);
\end{pic}$      

So we have a shift multiply basis $S_{ij}=P_j \circ H^j_{\text{diag}(i)}$ where:
\begin{equation}
P_j
\quad=\quad
 \begin{aligned}\begin{tikzpicture}
\node (0a) at (0.75,-1) {};
         
          \node (j)[state,black,scale=0.5] at(2,0) {$j$};
          \node (2a)[ls] at (1.5,0.5) {};
          \node (2b) at (1.5,1.5) {};
          
          \draw[string,out=90,in=180] (0a) to (2a);

          \draw[string,out=90,in=0] (j) to (2a);
          \draw[string,out=90,in=270] (2a) to (2b);
         
      \end{tikzpicture}\end{aligned}
\end{equation}
and
\begin{equation}
H^j_{\text{diag}(i)}
\quad=\quad
\sqrt{d} \begin{aligned}\begin{tikzpicture}
\node (0a) at (0.75,-1) {};
         
          \node (i)[state,scale=0.5,label={[yshift=-0.3cm]0:{\tiny $j$}}] at(-0.5,0) {$i$};
          \node (2a)[blackdot] at (0,0.5) {};
          \node (2b) at (0,1.5) {};
          
          \draw[string,out=90,in=0] (0a) to (2a);

          \draw[string,out=90,in=180] (i) to (2a);
          \draw[string,out=90,in=270] (2a) to (2b);
         
      \end{tikzpicture}\end{aligned}
      \end{equation} 
Thus
\begin{equation}
S_{ij}=
\sqrt{d} \begin{aligned}\begin{pic}

      \node (0a) at (0.75,-1) {};         
      \node (i)[state,scale=0.5,label={[yshift=-0.3cm]0:{\tiny $j$}}] at(-0.5,0) {$i$};
          \node (2a)[blackdot] at (0,0.5) {};
          \node (2b) at (0,0.75) {};          
          \draw[string,out=90,in=0] (0a) to (2a);                   
          \draw[string,out=90,in=180] (i) to (2a);
          \draw[string,out=90,in=270] (2a) to (2b.center);
          
          \node (0b) at (0,0.5) {};
          \node (j)[state,black,scale=0.5] at(1.25,1.5) {$j$};
          \node (2a')[ls] at (0.75,2) {};
          \node (2b') at (0.75,3) {};          
          \draw[string,out=90,in=180] (2b.center) to (2a');                  
          \draw[string,out=90,in=0] (j) to (2a');
          \draw[string,out=90,in=270] (2a') to (2b');
\end{pic}\end{aligned}
\end{equation}
I am now ready to present a fully graphical version of Werner's proof of the correctness of the combinatorial construction based on my axiomatisation of a latin square structure.
\begin{theorem}
Shift and multiply bases are unitary error bases.
\end{theorem}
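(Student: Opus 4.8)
The plan is to follow the template of Theorem~2.4 (that an MUB error basis is a unitary error basis), since $S_{ij}=P_j\circ H^j_{\text{diag}(i)}$ has exactly the same two-part shape: a \emph{multiply} (phase) factor $H^j_{\text{diag}(i)}$, built from the black classical structure and the $i$-th state of the $j$-th white basis, followed by a \emph{shift} (permutation) factor $P_j$, built from the latin square structure and the black state $\ket{j}$. First I would show each $S_{ij}$ is unitary by factoring it and handling the two factors separately, and then establish the trace relation $\tr(S_{ij}^{\dag}\circ S_{i'j'})=\delta_{ii'}\delta_{jj'}d$.

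For unitarity, the factor $\sqrt{d}\,H^j_{\text{diag}(i)}$ is black-multiplication by a single normalised state of the $j$-th white orthonormal basis. Since that white basis is complementary to black, this is unitary by precisely the argument used for $U_1$ in the proof of Theorem~2.4 (the complementarity equation together with copyability of the black basis states), the $\sqrt{d}$ being the usual normalisation. The factor $P_j$ is latin-square-multiplication by the fixed state $\ket{j}$, and this is unitary \emph{directly} by the Unitarity Property of a latin square structure --- the two diagrams asserted unitary in the ``Unitary Rules'' Proposition of Section~3.1.4, equivalently the quasigroup laws (3.1)--(3.4). A composite of unitaries is unitary, so $S_{ij}$ is unitary.

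For orthogonality I would draw the trace diagram of $S_{ij}^{\dag}\circ S_{i'j'}$, in which the two latin-square multiplications (one from $P_j$, the adjoint one from $P_{j'}^{\dag}$) are wired together through the black cups and caps, with $\ket{j},\ket{j'}$ feeding the shift factors and the two white states feeding the phase factors. The strategy is to contract the phase part using black/white complementarity, and to collapse the shift part using the duality relations of Section~3.1.3 (which re-express \tinymultls and \tinycomultls through the black comultiplication and cups/caps) together with (co)unitality. The decisive point is that cancellativity of the loop, encoded in the Unitarity Property, forces $\ket{j}$ and $\ket{j'}$ to contract into $\delta_{jj'}$: the composite permutation $P_j^{\dag}P_{j'}$ fixes a basis vector only when $j=j'$. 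Once $j=j'$ is forced, the two white states lie in the \emph{same} white basis, so their overlap is $\delta_{ii'}$, and balancing the $\sqrt{d}$ factors against the $1/\sqrt{d}$ from complementarity leaves the scalar $d$, as in the final step of Theorem~2.4.

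The main obstacle is that a latin square structure is \emph{not} a Frobenius algebra unless the loop is associative (the Frobenius-Latin-Square Proposition), so the spider-fusion moves that made the white manipulations in the MUB orthogonality computation so clean are unavailable for \tinymultls. Every rewrite of the latin-square wires must instead be justified using only (co)unitality, the bialgebra laws with black, and the duality relations, taking care never to invoke associativity. Producing a diagrammatic form of the fixed-point-free-permutation argument --- isolating $\delta_{jj'}$ from the two shift factors without Frobenius fusion --- is the delicate step; after that, the phase part reduces verbatim to the MUB case.
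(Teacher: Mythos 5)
Your proposal is correct and takes essentially the same route as the paper: the same factorisation $S_{ij}=P_j\circ H^j_{\text{diag}(i)}$, unitarity of the diagonal factor exactly as in Chapter 2 and of $P_j$ via the latin square unitarity rule (LS2), and orthogonality by collapsing the traced diagram using those same unitarity rules --- the paper implements this by black spider merges followed by two applications of (LS1), which is precisely your ``cancellativity'' step, producing $\delta_{jj'}$ from the black states and then $\delta_{ii'}$ from the white states. One minor correction: black/white complementarity plays no role in the orthogonality computation and there are no $1/\sqrt{d}$ factors to balance there; once $j=j'$ is forced, the two white states lie in the same orthonormal basis and contract to $\delta_{ii'}$ by orthonormality alone.
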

\begin{proof}
\textit{Unitarity}

As proven in Chapter 2, $H_{\text{diag}(i)}$ is unitary. $H^j_{\text{diag}(i)} $ is unitary in exactly the same way. $P_j$ is unitary iff the rule LS2 holds, as shown below:
\begin{equation}
\forall j \left[
\begin{aligned}\begin{tikzpicture}
          
          \node (0a) at (-1,-1.5) {};
          \node (0b) at (-0.5,-0.5) {};
          \node[state,black,scale=0.5] (0c) at (0,-0.5) {$j$};
          \node[ls] (7) at (-0.5,0) {};       
          \node (10a) at (-0.5,1.5) {};       
          \draw[string, out=90, in =180] (0a.center) to (7);
          \draw[string, out=0, in=90] (7) to (0c);
          \draw[string] (7) to (10a.center);
         
          \end{tikzpicture}\end{aligned}
unitary \right]
\quad \Rightarrow \quad
\forall j
 \left[
\begin{aligned}\begin{pic}[yscale=0.75]
\node[ls] (1) at (0,0) {};
\node[ls] (2) at (0,1) {};
\node[state,black,scale=0.5] (b) at (0.5,-1) {$j$};
\node[state,hflip,black,scale=0.5] (a) at (0.5,2) {$j$};
\draw[string] (1) to (2);
\draw[string,out=0,in=90] (1) to (b);
\draw[string,out=0,in=270] (2) to (a);
\draw[string,out=90,in=180] (-0.5,-1) to (1);
\draw[string,out=270,in=180] (-0.5,2) to (2);

\end{pic}\end{aligned}
\quad=\quad
\begin{aligned}\begin{pic}[yscale=0.75]
\draw[string] (0,-1) to (0,2);
\end{pic}\end{aligned}
\, \, \right]
\quad $$\\$$ \overset{\text{by proof of proposition 3.2}}{\Leftrightarrow} \quad
 \left[
\begin{aligned}\begin{pic}[yscale=0.375,xscale=0.5]
 \node (0) at (0.5,-3.5) {};
\node[ls] (1) at (-1,-1) {};
\node (2) at (1.5,0) {};
\node[blackdot] (B) at (0.5,-2) {};
\node (3) at (-2,-3.5) {};
\node (4) at (-1,0) {};
\draw[string,out=180,in=90] (1) to (3);
\draw [string, out=90,in=270] (0) to (B);
\draw [string, out=180,in=0] (B) to (1);
\draw [string, out=0,in=270] (B) to (2.center);
\draw[string] (1) to (4.center);   

 \node (5) at (0.5,3.5) {};
\node[ls] (6) at (-1,1) {};

\node[blackdot] (A) at (0.5,2) {};
\node (8) at (-2,3.5) {};

\draw[string,out=180,in=270] (6) to (8);
\draw [string, out=270,in=90] (5) to (A);
\draw [string, out=180,in=0] (A) to (6);
\draw [string, out=0,in=90] (A) to (2.center);
\draw[string] (6) to (4.center);  

\end{pic}\end{aligned} 
\quad=\quad
\begin{aligned}\begin{pic}[yscale=0.75]
\draw[string] (0,-1) to (0,2);
\node (a) at (0.75,-1) {};
\node (b) at (0.75,2) {};
\draw[string] (a.center) to (b.center);
\end{pic}\end{aligned} 
\, \, \right]          
\end{equation} 
The other direction straightforwardly follows in a similar way.
Thus $S_{ij}$ is composed of two unitaries and hence is unitary.\\

\vspace{10 mm}
\textit{orthogonality}\\

\begin{equation}
\tr (S^{\dag}_{ij} \circ S_{i'j'})=
d \begin{aligned}\begin{pic}[scale=0.6]

      \node (0a) at (1.5,-3.5) {};         
      \node (i)[state,scale=0.5,label={[yshift=-0.3cm]0:{\tiny $j'$}}] at(0.5,-3) {$i'$};
          \node (2a)[blackdot] at (1,-2.5) {};
          \node (2b) at (1,-2.25) {};          
          \draw[string,out=90,in=0] (0a.center) to (2a);                   
          \draw[string,out=90,in=180] (i) to (2a);
          \draw[string,out=90,in=270] (2a) to (2b.center);
          
          \node (0b) at (0,-1.5) {};
          \node (j)[state,black,scale=0.5] at(2.25,-1.5) {$j'$};
          \node (2a')[ls] at (1.75,-1) {};
          \node (2b') at (1.75,0) {};          
          \draw[string,out=90,in=180] (2b.center) to (2a');                  
          \draw[string,out=90,in=0] (j) to (2a');

           \node (A) at (1.5,3.5) {};         
      \node (B)[state,hflip,scale=0.5,label={[yshift=0.3cm]0:{\tiny $j$}}] at(0.5,3) {$i$};
          \node (C)[blackdot] at (1,2.5) {};
          \node (D) at (1,2.25) {};          
          \draw[string,out=270,in=0] (A.center) to (C);                   
          \draw[string,out=270,in=180] (B) to (C);
          \draw[string,out=270,in=90] (C) to (D.center);
          
          \node (E) at (0,1.5) {};
          \node (F)[state,black,hflip,scale=0.5] at(2.25,1.5) {$j$};
          \node (G)[ls] at (1.75,1) {};
         
          \draw[string,out=270,in=180] (D.center) to (G);                  
          \draw[string,out=270,in=0] (F) to (G);
          \draw[string] (G) to (2a');
          
          \node[blackdot] (X) at (-0.125,4.25) {};
          \node[blackdot] (Y) at (-0.125,-4.25) {};
          \draw[string,in=180,out=180] (X) to (Y);
          \draw[string,in=270,out=0] (Y) to (0a.center);
          \draw[string,in=90,out=0] (X) to (A.center);
\end{pic}\end{aligned}
\quad =\quad
d \begin{aligned}\begin{pic}[scale=0.6]

      \node (0a) at (1.5,-3.5) {};         
      \node (i)[state,scale=0.5,label={[yshift=-0.3cm]0:{\tiny $j'$}}] at(0.5,-4.75) {$i'$};
          \node (2a)[blackdot] at (1,-2.5) {};
          \node (2b) at (1,-2.25) {};          
          \draw[string,out=90,in=0] (0a.center) to (2a);                   
          \draw[string,out=90,in=180] (i) to (2a);
          \draw[string,out=90,in=270] (2a) to (2b.center);
          
          \node (0b) at (0,-1.5) {};
          \node (j)[state,black,scale=0.5] at(2.25,-1.5) {$j'$};
          \node (2a')[ls] at (1.75,-1) {};
          \node (2b') at (1.75,0) {};          
          \draw[string,out=90,in=180] (2b.center) to (2a');                  
          \draw[string,out=90,in=0] (j) to (2a');

           \node (A) at (1.5,3.5) {};         
      \node (B)[state,hflip,scale=0.5,label={[yshift=0.3cm]0:{\tiny $j$}}] at(0.5,4.75) {$i$};
          \node (C)[blackdot] at (1,2.5) {};
          \node (D) at (1,2.25) {};          
          \draw[string,out=270,in=0] (A.center) to (C);                   
          \draw[string,out=270,in=180] (B) to (C);
          \draw[string,out=270,in=90] (C) to (D.center);
          
          \node (E) at (0,1.5) {};
          \node (F)[state,black,hflip,scale=0.5] at(2.25,1.5) {$j$};
          \node (G)[ls] at (1.75,1) {};
         
          \draw[string,out=270,in=180] (D.center) to (G);                  
          \draw[string,out=270,in=0] (F) to (G);
          \draw[string] (G) to (2a');
          
          \node[blackdot] (X) at (-0.125,4.25) {};
          \node[blackdot] (Y) at (-0.125,-4.25) {};
          \draw[string,in=180,out=180] (X) to (Y);
          \draw[string,in=270,out=0] (Y) to (0a.center);
          \draw[string,in=90,out=0] (X) to (A.center);
\end{pic}\end{aligned}
\quad $$\\$$ \overset{2 \times \text{(Bl,SM)}}{=}\quad
d \begin{aligned}\begin{pic}[yscale=0.375,xscale=0.5]
 \node[state,scale=0.5,label={[yshift=-0.3cm]0:{\tiny $j'$}}] (0) at (-1.5,-3.5) {$i'$};
\node[ls] (1) at (0,-1) {};
\node (2) at (-2.5,0) {};
\node[blackdot] (B) at (-1.5,-2) {};
\node[state,black,scale=0.5] (3) at (1,-3.5) {$j'$};
\node (4) at (0,0) {};
\draw[string,out=0,in=90] (1) to (3);
\draw [string, out=90,in=270] (0) to (B);
\draw [string, out=0,in=180] (B) to (1);
\draw [string, out=180,in=270] (B) to (2.center);
\draw[string] (1) to (4.center);   

 \node[state,hflip,scale=0.5,label={[yshift=0.3cm]0:{\tiny $j$}}] (5) at (-1.5,3.5) {$i$};
\node[ls] (6) at (0,1) {};

\node[blackdot] (A) at (-1.5,2) {};
\node[state,hflip,black,scale=0.5] (8) at (1,3.5) {$j$};

\draw[string,out=0,in=270] (6) to (8);
\draw [string, out=270,in=90] (5) to (A);
\draw [string, out=0,in=180] (A) to (6);
\draw [string, out=180,in=90] (A) to (2.center);
\draw[string] (6) to (4.center);  

\end{pic}\end{aligned}
\quad \overset{2 \times \text{(LS1)}}{=} \quad
\begin{aligned}\begin{pic}[yscale=0.375,xscale=0.5]
 \node[state,scale=0.5,label={[yshift=-0.3cm]0:{\tiny $j'$}}] (0) at (-1.5,3.5) {$i'$};
\node[state,black,scale=0.5] (3) at (0,3.5) {$j'$}; 
 \node[state,hflip,scale=0.5,label={[yshift=0.3cm]0:{\tiny $j$}}] (5) at (-1.5,3.5) {$i$};
\node[state,hflip,black,scale=0.5] (8) at (0,3.5) {$j$};

\end{pic}\end{aligned} d
\quad=\quad
\begin{aligned}\begin{pic}[yscale=0.375,xscale=0.5]
 \node[state,scale=0.5,label={[yshift=-0.3cm]0:{\tiny $j'$}}] (0) at (-1.5,3.5) {$i'$};
 
 \node[state,hflip,scale=0.5,label={[yshift=0.3cm]0:{\tiny $j$}}] (5) at (-1.5,3.5) {$i$};

\end{pic}\end{aligned}
\delta_{jj'}d
=\delta_{ii'} \delta_{jj'}d \end{equation}
\end{proof}


\chapter{Generalised Shift and Multiply Basis}
In Chapter 3, I axiomatised latin squares categorically. I then used this axiomatisation to derive a graphical representation of a shift and multiply basis in Chapter 4. Using the axioms of a latin square structure, I then proved the correctness of the combinatorial construction. I am now ready to derive which axioms are surplus to the requirements of a UEB. 

Recall that a shift and multiply basis is represented as follows:
\begin{equation*}S_{ij}=
\sqrt{d} \begin{aligned}\begin{pic}

      \node (0a) at (0.75,-1) {};         
      \node (i)[state,scale=0.5,label={[yshift=-0.3cm]0:$j$}] at(-0.5,0) {$i$};
          \node (2a)[blackdot] at (0,0.5) {};
          \node (2b) at (0,0.75) {};          

          \draw[string,out=90,in=0] (0a) to (2a);                   
          \draw[string,out=90,in=180] (i) to (2a);
          \draw[string,out=90,in=270] (2a) to (2b.center);
          
          \node (0b) at (0,0.5) {};
          \node (j)[state,black,scale=0.5] at(1.25,1.5) {$j$};
          \node (2a')[ls] at (0.75,2) {};
          \node (2b') at (0.75,3) {};          
          \draw[string,out=90,in=180] (2b.center) to (2a');                  
          \draw[string,out=90,in=0] (j) to (2a');
          \draw[string,out=90,in=270] (2a') to (2b');
\end{pic}\end{aligned}\end{equation*}

To ensure that my proof of theorem 4.1 goes through, we can see that any candidate to replace the latin square structure will have to obey the rules LS1 and LS2. 
i.e.
\begin{equation*} U_1=
 \begin{pic}[scale=0.2]
\node (0) at (0.5,-0.5) {};
\node[blackdot,scale=0.7] (1) at (-1,2) {};
\node (2) at (1.5,3.5) {};
\node[ls,scale=0.7] (B) at (0.5,1) {};
\node (3) at (-2,-0.5) {};
\node (4) at (-1,3.5) {};
\draw[string,out=180,in=90] (1) to (3);
\draw [string, out=90,in=270] (0) to (B);
\draw [string, out=180,in=0] (B) to (1);
\draw [string, out=0,in=270] (B) to (2);
\draw[string] (1) to (4);
\end{pic}\text{ and } U_2=
 \begin{pic}[scale=0.2]
\node (0) at (0.5,-0.5) {};
\node[ls,scale=0.7] (1) at (-1,2) {};
\node (2) at (1.5,3.5) {};
\node[blackdot,scale=0.7] (B) at (0.5,1) {};
\node (3) at (-2,-0.5) {};
\node (4) at (-1,3.5) {};
\draw[string,out=180,in=90] (1) to (3);
\draw [string, out=90,in=270] (0) to (B);
\draw [string, out=180,in=0] (B) to (1);
\draw [string, out=0,in=270] (B) to (2);
\draw[string] (1) to (4);
\end{pic}\text{are unitary.} 
\end{equation*}
But that is all that we require. 

\begin{definition}[Generalised Latin Square Structure]
Given a finite dimensional Hilbert space $H$, a generalised latin square structure is a linear map $\tinymultlss: H \otimes H \rightarrow H$ and a linear map $\tinycomultlss: H \rightarrow H \otimes H$ such that: 
\begin{equation}
U_1=
 \begin{pic}[scale=0.3]
\node (0) at (0.5,-0.5) {};
\node[blackdot,scale=0.7] (1) at (-1,2) {};
\node (2) at (1.5,3.5) {};
\node[ls',scale=0.7] (B) at (0.5,1) {};
\node (3) at (-2,-0.5) {};
\node (4) at (-1,3.5) {};
\draw[string,out=180,in=90] (1) to (3);
\draw [string, out=90,in=270] (0) to (B);
\draw [string, out=180,in=0] (B) to (1);
\draw [string, out=0,in=270] (B) to (2);
\draw[string] (1) to (4);
\end{pic}  \text{ and }  U_2=
 \begin{pic}[scale=0.3]
\node (0) at (0.5,-0.5) {};
\node[ls',scale=0.7] (1) at (-1,2) {};
\node (2) at (1.5,3.5) {};
\node[blackdot,scale=0.7] (B) at (0.5,1) {};
\node (3) at (-2,-0.5) {};
\node (4) at (-1,3.5) {};
\draw[string,out=180,in=90] (1) to (3);
\draw [string, out=90,in=270] (0) to (B);
\draw [string, out=180,in=0] (B) to (1);
\draw [string, out=0,in=270] (B) to (2);
\draw[string] (1) to (4);
\end{pic}\text{are unitary}\end{equation}
for some classical structure $\tinymult[blackdot]$. I will call the relationship between these two structures `quasi-complementarity'. 
  \end{definition}
So please note in particular that \tinymultlss \, need not obey the (co)unitality (3.23), bialgebra (3.24), and duality relation (3.25)-(3.26) axioms of a latin square structure. 
\begin{definition}[Generalised Shift and Multiply Basis]
Let \tinymultlss \, be a generalised latin square structure and \tinymult[blackdot] a classical structure  quasi-complementary to it and suppose we have $d$ indexed white classical structures all complementary to $\tinymult[blackdot]$. Then $B_{ij}$ as defined below is a generalised shift and multiply basis for $0 \leq i,j < d$:
\begin{equation}
B_{ij}:=
\sqrt{d} \begin{aligned}\begin{pic}

      \node (0a) at (0.75,-1) {};         
      \node (i)[state,scale=0.5,label={[yshift=-0.2cm]0:{\tiny$j$}}] at(-0.5,0) {$i$};
          \node (2a)[blackdot] at (0,0.5) {};
          \node (2b) at (0,0.75) {};          

          \draw[string,out=90,in=0] (0a) to (2a);                   
          \draw[string,out=90,in=180] (i) to (2a);
          \draw[string,out=90,in=270] (2a) to (2b.center);
          
          \node (0b) at (0,0.5) {};
          \node (j)[state,black,scale=0.5] at(1.25,1.5) {$j$};
          \node (2a')[ls'] at (0.75,2) {};
          \node (2b') at (0.75,3) {};          
          \draw[string,out=90,in=180] (2b.center) to (2a');                  
          \draw[string,out=90,in=0] (j) to (2a');
          \draw[string,out=90,in=270] (2a') to (2b');
\end{pic}\end{aligned}
\end{equation}
Where the white state is the $i^{th}$ basis state of the $j$-indexed classical structure.
I will refer to this as the generalised combinatorial construction.
\end{definition}

\section{Seeking Non-trivial Models of Generalised Latin Square Structures}
Given a latin square structure \tinymultls, \, I want to find a modification of it which gives a generalised latin square structure that is not a latin square structure. Further, I want to find a model generalised latin square structure which produces a generalised shift and multiply basis that is not trivially equivalent to a shift and multiply basis.

 If we multiply our latin square structure by a phase like so:
\begin{equation}
\begin{aligned}\begin{tikzpicture}
          
          \node (0a) at (-1,-0.5) {};
          \node (0b) at (-0.5,-0.5) {};
          \node (0c) at (0,-0.5) {};
          \node[ls,scale=2] (7) at (-0.5,0) {};       
          \node (10a) at (-0.5,0.5) {};       
          \draw[string, out=90, in =180] (0a.center) to (7);
          \draw[string, out=0, in=90] (7) to (0c.center);
          \draw[string] (7) to (10a.center);
         
          \end{tikzpicture}\end{aligned}
          \quad:=\quad e^{i \theta}
          \begin{aligned}\begin{tikzpicture}
          
          \node (0a) at (-1,-0.5) {};
          \node (0b) at (-0.5,-0.5) {};
          \node (0c) at (0,-0.5) {};
          \node[ls] (7) at (-0.5,0) {};       
          \node (10a) at (-0.5,0.5) {};       
          \draw[string, out=90, in =180] (0a.center) to (7);
          \draw[string, out=0, in=90] (7) to (0c.center);
          \draw[string] (7) to (10a.center);
         
          \end{tikzpicture}\end{aligned}
\end{equation}
then $U_1$ and $U_2$ are clearly still unitary. And the bialgebra laws are violated. However, the resulting basis can be obtained from $S_{ij}$ by uniformly multiplying each row of each Hadamard matrix in our family of Hadamards by a phase.  This leads to an equivalent UEB. 

Another approach is to add some unitary matrix $F$ to the upper wire like so:
\begin{equation}
\begin{aligned}\begin{tikzpicture}
          
          \node (0a) at (-1,-0.5) {};
         
          \node (0c) at (0,-0.5) {};
          \node[circle,scale=1.3, minimum width=8pt, draw, inner sep=0pt, path picture={\draw (path picture bounding box.west) -- (path picture bounding box.north)  (path picture bounding box.south west) -- (path picture bounding box.north east)  (path picture bounding box.south) -- (path picture bounding box.east);}] (7) at (-0.5,0) {};       
          \node (10a) at (-0.5,0.5) {};       
          \draw[string, out=90, in =180] (0a.center) to (7);
          \draw[string, out=0, in=90] (7) to (0c.center);
          \draw[string] (7) to (10a.center);

          \end{tikzpicture}\end{aligned}
          \quad:=\quad 
          \begin{aligned}\begin{tikzpicture}
          
          \node (0a) at (-1,-0.5) {};
        
          \node (0c) at (0,-0.5) {};
          \node[ls] (7) at (-0.5,0) {};       
          \node (10a) at (-0.5,1) {};  
          \node[morphism,wedge,scale=0.5] (F) at (-0.5,0.5) {$F$};     
          \draw[string, out=90, in =180] (0a.center) to (7);
          \draw[string, out=0, in=90] (7) to (0c.center);
          \draw[string] (-0.5,0.65) to (10a.center);
          \draw[string] (-0.5,0.35) to (7);
         
          \end{tikzpicture}\end{aligned}
          \end{equation}
          Again this will leave $U_1$ and $U_2$ unitary but violate the bialgebra law; unfortunately we still end up with a trivially equivalent unitary error basis modified uniformly by composition with a unitary on the left. 

Yet another candidate is to compose by a unitary $D$ on the bottom left wire like so:
\begin{equation}
\begin{aligned}\begin{tikzpicture}
          
          \node (0a) at (-1,-0.5) {};
         
          \node (0c) at (0,-0.5) {};
          \node[circle,scale=1.3, minimum width=8pt, draw, inner sep=0pt, path picture={\draw  (path picture bounding box.south west) -- (path picture bounding box.north)  (path picture bounding box.north east) -- (path picture bounding box.south)  (path picture bounding box.north west) -- (path picture bounding box.east)  (path picture bounding box.south east) -- (path picture bounding box.west) ;}] (7) at (-0.5,0) {};       
          \node (10a) at (-0.5,0.5) {};       
          \draw[string, out=90, in =180] (0a.center) to (7);
          \draw[string, out=0, in=90] (7) to (0c.center);
          \draw[string] (7) to (10a.center);
         
          \end{tikzpicture}\end{aligned}
          \quad:=\quad 
          \begin{aligned}\begin{tikzpicture}
          
          \node (0a) at (-1,-1) {};
        
          \node (0c) at (0,-1) {};
          \node[ls] (7) at (-0.5,0) {};       
          \node (10a) at (-0.5,0.5) {};  
          \node[morphism,wedge,scale=0.5] (F) at (-1,-0.5) {$D$};     
          \draw[string] (0a.center) to (-1,-0.65);
          \draw[string,out=180,in=90] (7) to (-1,-0.35);
          \draw[string, out=0, in=90] (7) to (0c.center);
          \draw[string] (7) to (10a.center);
          \draw[string] (-0.5,0.15) to (7);
\end{tikzpicture}\end{aligned}
\end{equation} 
More care is necessary now. $U_2$ is immediately unitary, but to ensure that $U_1$ is unitary, I will choose $D$ such that it can be moved through a spider.

Let $D$ be a diagonal unitary matrix of size $d \times d$. Suppose that the diagonal entries of $D$ are given by $D_{kk}=e^{i \phi_k}$. Define a function $\theta : \{0,1,...,d-1\} \rightarrow \mathbb{R}$ such that $ \theta (k)= \phi_k$. Now graphically  $D$ can be represented as:
\begin{equation} 
D=
\sum^{d-1}_{k=0} e^{i \theta(k)}
\begin{aligned}\begin{pic}
\node[state,black,hflip,scale=0.5] (A) at (0,-0.35) {$k$};
\node[state,black,scale=0.5] (B) at (0,0.35) {$k$};
\draw[string] (0,-0.75) to (A);
\draw[string] (0,0.75) to (B);
\end{pic}\end{aligned}
\end{equation}
Since scalar factors can move freely around the diagram:
\begin{equation}
\forall, a,b,c \in A \, \, \, \,
 \begin{aligned}\begin{tikzpicture}
          
          \node[state,black,scale=0.5] (0a) at (-1,-1) {$a$};
        
          \node[state,black,scale=0.5] (0c) at (0,-1) {$b$};
          \node[blackdot] (7) at (-0.5,0) {};       
          \node[state,hflip,black,scale=0.5] (10a) at (-0.5,0.5) {$c$};  
          \node[morphism,wedge,scale=0.5] (F) at (-1,-0.5) {$D$};     
          \draw[string] (0a.center) to (-1,-0.65);
          \draw[string,out=180,in=90] (7) to (-1,-0.35);
          \draw[string, out=0, in=90] (7) to (0c.center);
          \draw[string] (7) to (10a.center);
          \draw[string] (-0.5,0.15) to (7);
         
          \end{tikzpicture}\end{aligned}
         \quad =\quad
 \begin{aligned}\begin{tikzpicture}
          
          \node[state,black,scale=0.5] (0a) at (-1,-0.5) {$a$};
        
          \node[state,black,scale=0.5] (0c) at (0,-0.5) {$b$};
          \node[blackdot] (7) at (-0.5,0) {};       
          \node[state,black,hflip,scale=0.5] (10a) at (-0.5,1) {$c$};  
          \node[morphism,wedge,scale=0.5] (F) at (-0.5,0.5) {$D$};     
          \draw[string, out=90, in =180] (0a.center) to (7);
          \draw[string, out=0, in=90] (7) to (0c.center);
          \draw[string] (-0.5,0.65) to (10a.center);
          \draw[string] (-0.5,0.35) to (7);
         
          \end{tikzpicture}\end{aligned}  
          \quad=\quad
\begin{aligned}\begin{tikzpicture}[xscale=-1]
          
          \node[state,black,scale=0.5] (0a) at (-1,-1) {$b$};
        
          \node[state,black,scale=0.5] (0c) at (0,-1) {$a$};
          \node[blackdot] (7) at (-0.5,0) {};       
          \node[state,black,hflip,scale=0.5] (10a) at (-0.5,0.5) {$c$};  
          \node[morphism,wedge,scale=0.5] (F) at (-1,-0.5) {$D$};     
          \draw[string] (0a.center) to (-1,-0.65);
          \draw[string,out=180,in=90] (7) to (-1,-0.35);
          \draw[string] (7) to (10a.center);
          \draw[string] (-0.5,0.15) to (7);
          \draw[string] (0c.center) to (0,-0.35);
          \draw[string,out=0,in=90] (7) to (0,-0.35);         
          \end{tikzpicture}\end{aligned}
          \quad=\quad 
\delta_{ab} \delta_{bc} e^{i \theta(a)}                
\end{equation}
Thus
\begin{equation}
 \begin{aligned}\begin{tikzpicture}
          
          \node (0a) at (-1,-1) {};
        
          \node (0c) at (0,-1) {};
          \node[blackdot] (7) at (-0.5,0) {};       
          \node (10a) at (-0.5,0.5) {};  
          \node[morphism,wedge,scale=0.5] (F) at (-1,-0.5) {$D$};     
          \draw[string] (0a.center) to (-1,-0.65);
          \draw[string,out=180,in=90] (7) to (-1,-0.35);
          \draw[string, out=0, in=90] (7) to (0c.center);
          \draw[string] (7) to (10a.center);
          \draw[string] (-0.5,0.15) to (7);
         
          \end{tikzpicture}\end{aligned}
         \quad =\quad
 \begin{aligned}\begin{tikzpicture}
          
          \node (0a) at (-1,-0.5) {};
        
          \node (0c) at (0,-0.5) {};
          \node[blackdot] (7) at (-0.5,0) {};       
          \node (10a) at (-0.5,1) {};  
          \node[morphism,wedge,scale=0.5] (F) at (-0.5,0.5) {$D$};     
          \draw[string, out=90, in =180] (0a.center) to (7);
          \draw[string, out=0, in=90] (7) to (0c.center);
          \draw[string] (-0.5,0.65) to (10a.center);
          \draw[string] (-0.5,0.35) to (7);
         
          \end{tikzpicture}\end{aligned}  
          \quad=\quad
\begin{aligned}\begin{tikzpicture}[xscale=-1]
          
          \node (0a) at (-1,-1) {};
        
          \node (0c) at (0,-1) {};
          \node[blackdot] (7) at (-0.5,0) {};       
          \node (10a) at (-0.5,0.5) {};  
          \node[morphism,wedge,scale=0.5] (F) at (-1,-0.5) {$D$};     
          \draw[string] (0a.center) to (-1,-0.65);
          \draw[string,out=180,in=90] (7) to (-1,-0.35);
          \draw[string] (7) to (10a.center);
          \draw[string] (-0.5,0.15) to (7);
          \draw[string] (0c.center) to (0,-0.35);
          \draw[string,out=0,in=90] (7) to (0,-0.35);         
          \end{tikzpicture}\end{aligned}
\end{equation}
For any diagonal unitary matrix $D$. Similarly $D$ can move from one leg of any spider to another.

Now we can see that equation (5.5), with $D$ a diagonal unitary matrix satisfies conditions LS1 and LS2. However, the adaptation again turns out to be trivial as $H^j_{\text{diag}(i)}$ is also diagonal and thus commutes with $D$, thus producing a UEB differing from $S_{ij}$ only by uniform composition by a unitary matrix on the right. Again we have a trivially equivalent unitary error basis.

So in order to create a potentially non-equivalent generalised shift and multiply basis we need to choose a \tinymultlss \, that varies with $i$ or $j$ or both $i$ and $j$, and has the properties above to ensure that $U_1$ and $U_2$ are unitary. 
I propose the following model:
\begin{equation}
\begin{aligned}\begin{tikzpicture}
          
          \node (0a) at (-1,-0.5) {};
          \node (0b) at (-0.5,-0.5) {};
          \node (0c) at (0,-0.5) {};
          \node[ls'] (7) at (-0.5,0) {};       
          \node (10a) at (-0.5,0.5) {};       
          \draw[string, out=90, in =180] (0a.center) to (7);
          \draw[string, out=0, in=90] (7) to (0c.center);
          \draw[string] (7) to (10a.center);
         
          \end{tikzpicture}\end{aligned}
          \quad:=\quad
          \sum_{m}
\begin{aligned}\begin{pic}
         \node[ls] (l) at (-0.25,3.25) {};
         \node[whitedot,label={[xshift=-0.1cm,yshift=-0.1cm]0:{\tiny$k$}}] (k) at (-1,1.5) {};
         \node[blackdot] (b) at (0.5,0.75) {};
         \node[state,hflip,scale=0.5,label={[xshift=0.15cm]0:{\tiny$k$}}] (k1) at (-1,1.75) {$m$};
         \node[state,black,scale=0.5] (m1) at (-1,2.5) {$m$};
         \node[state,hflip,black,scale=0.5] (m2) at (-1.5,0.25) {$m$};
         \node[state,scale=0.5,label={[yshift=-0.2cm]0:{\tiny$k$}}] (k2) at (-1.5,1) {$m$};
         \node (0a) at (-1.5,-0.25){};
         \node (0b) at (0.5,-0.25){};
         \node (0c) at (0.5,2.5) {};
         \node (0d) at (-0.25,3.75){};
         \draw[string] (0a) to (m2);
         \draw[string,out=90,in=180] (k2) to (k);
         \draw[string] (k) to (k1);
         \draw[string,out=90,in=180] (m1) to (l);
         \draw[string,out=0,in=0] (l) to (b);
         \draw[string] (b) to (0b);
         \draw[string,out=180,in=0] (b) to (k);
         \draw[string] (l) to (0d.center);
                           
\end{pic}\end{aligned} \end{equation}
Where \tinymultk is the classical structure corresponding to the ONB obtained by applying the $k^{th}$ Hadamard matrix to the black ONB and then normalising.
Now  we obtain generalised shift and multiply basis:
\begin{equation} B'_{ij}:=
\sqrt{d} \begin{aligned}\begin{pic}

      \node (0a) at (0.75,-1) {};         
      \node (i)[state,scale=0.5,label={[yshift=-0.2cm]0:{\tiny$j$}}] at(-0.5,0) {$i$};
          \node (2a)[blackdot] at (0,0.5) {};
          \node (2b) at (0,0.75) {};          

          \draw[string,out=90,in=0] (0a) to (2a);                   
          \draw[string,out=90,in=180] (i) to (2a);
          \draw[string,out=90,in=270] (2a) to (2b.center);
          
          \node (0b) at (0,0.5) {};
          \node (j)[state,black,scale=0.5] at(1.25,1.5) {$j$};
          \node (2a')[ls'] at (0.75,2) {};
          \node (2b') at (0.75,3) {};          
          \draw[string,out=90,in=180] (2b.center) to (2a');                  
          \draw[string,out=90,in=0] (j) to (2a');
          \draw[string,out=90,in=270] (2a') to (2b');
\end{pic}\end{aligned}
=
\begin{aligned}\begin{pic}
         \node[ls] (l) at (-0.25,4.25) {};
         \node[whitedot,label={[xshift=-0.1cm,yshift=-0.1cm]0:{\tiny$k$}}] (k) at (-1,1.5) {};
         \node[state,black,scale=0.5] (b) at (-0.5,1) {$j$};
         \node[state,black,scale=0.5] (c) at (0.25,3.75) {$j$};
         \node[state,hflip,scale=0.5,label={[xshift=0.15cm]0:{\tiny$k$}}] (k1) at (-1,1.75) {$m$};
         \node[state,black,scale=0.5] (m1) at (-1,2.5) {$m$};
         \node[state,hflip,black,scale=0.5] (m2) at (-1.5,0.25) {$m$};
         \node[state,scale=0.5,label={[yshift=-0.2cm]0:{\tiny$k$}}] (k2) at (-1.5,1) {$m$};
         \node[blackdot] (0a) at (-1.5,-1.25){};
         \node (A) at (-2.25,-0.75) {};
         \node (B) at (-2.25,-2.75) {};
         \node[state,scale=0.5,label={[yshift=-0.2cm]0:{\tiny$j$}}] (i) at (-1,-1.75) {$i$};
         \node (0b) at (0.5,-0.25){};
         \node (0c) at (0.5,2.5) {};
         \node (0d) at (-0.25,5.25){};
         \draw[string] (0a) to (m2);
         \draw[string,out=90,in=180] (k2) to (k);
         \draw[string] (k) to (k1);
         \draw[string,out=90,in=180] (m1) to (l);
         \draw[string,out=0,in=90] (l) to (c);

         \draw[string,out=90,in=0] (b) to (k);
         \draw[string] (l) to (0d);
         \draw[string,out=180,in=90] (0a) to (B);
 
         \draw[string,out=0,in=90] (0a) to (i);
                           
\end{pic}\end{aligned}
\end{equation}
\begin{proposition}
$B'_{ij}$ is a unitary error basis. 
\end{proposition}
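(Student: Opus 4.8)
The plan is to separate the argument into a general transfer principle and a single computational verification for the proposed model. Observe first that $B'_{ij}$ is, by construction, a generalised shift and multiply basis, obtained by composing the multiply factor $H^j_{\mathrm{diag}(i)}$ with a shift factor built from \tinymultlss, and that nothing in the proof of Theorem~4.1 used the latin square axioms of (co)unitality, the bialgebra laws, or the duality relations. The unitarity half used only that $H^j_{\mathrm{diag}(i)}$ is assembled from classical structures (so it is unitary exactly as in Chapter~2) together with the fact that the shift factor is unitary, which is precisely rule (LS2); and the orthogonality half used only black spider fusion (Bl,SM) and rule (LS1). Hence for \emph{any} generalised latin square structure the resulting basis is automatically a unitary error basis, the proof being the verbatim re-run of the computation of Theorem~4.1 with \tinymultls\ replaced by \tinymultlss. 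The Proposition therefore reduces to showing that the \tinymultlss\ given by the proposed model is a genuine generalised latin square structure, i.e.\ that its maps $U_1$ and $U_2$ are unitary.

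For the easier of the two quasi-complementarity conditions I would unfold the model and use the structural feature for which it was designed: each $k$-indexed white dot is, after the normalisation $\tfrac{1}{\sqrt d}F$, a unitary (Hadamard) change of basis relative to black, and the inserted basis-change data can be slid through a black spider from one leg to another, exactly as was shown for the diagonal unitary $D$ in the preceding discussion. Sliding the modification off the relevant diagram and cancelling it against its adjoint leaves the corresponding unitarity map of the underlying latin square structure \tinymultls, which is unitary because \tinymultls\ is a latin square structure; the $\sqrt d$ normalisations account for the scalars produced along the way.

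The main obstacle is the second condition, where the modification genuinely couples the two input legs through the $k$-indexed white structure and the sum over $m$, so it is not literally a diagonal unitary and the one-line slide used for $D$ does not apply directly. Here I would expand the composite of the map with its adjoint, fuse the black spiders, and then use the complementarity relation between black and the $k$-th white classical structure (equation (1.33)) to collapse the closed white loop created by the $m$-summation; the net effect of (1.33) is to turn the Hadamard round-trip into the scalar $d$ that the $\sqrt d$ factors absorb. After this collapse the diagram is once again the unitarity map of \tinymultls\ with a unitary pre- and post-composed, so rule (LS1) for \tinymultls\ finishes the job. The delicate parts are the bookkeeping of the normalisation constants and confirming that the combination of black effects and states with the $m$-sum really implements the intended controlled basis change rather than some non-unitary map.

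With both $U_1$ and $U_2$ verified unitary, the proposed \tinymultlss\ is a generalised latin square structure, and the transfer principle of the first paragraph yields that $B'_{ij}$ is a unitary error basis. I would close by noting that, in contrast to the modifications by a global phase, an upper-wire unitary $F$, or a single diagonal $D$, this modification depends on the data and so does not merely re-express $S_{ij}$ by uniform left or right composition with a fixed unitary, which is why it has the potential to produce a genuinely new basis.
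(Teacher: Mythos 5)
Your proposal is correct, and it is organised around a genuinely different decomposition from the paper's. The paper never checks that the model (5.9) satisfies the two defining unitarity conditions of a generalised latin square structure. Instead it factorises $B'_{ij}=P_j\circ D_j\circ H^j_{\text{diag}(i)}$, isolates the controlled factor $D_j$, and proves two standalone lemmas: $D_j$ is unitary (equation (5.11) --- the same complementarity computation you delegate to (1.33)), and $D_j$ is diagonal in the black basis, hence slides through black spiders by (5.8). Unitarity of $B'_{ij}$ is then immediate from the three-fold factorisation, and orthogonality is proved by re-running the Theorem 4.1 trace computation for the \emph{original} structure \tinymultls: the $D$'s are slid out of the way, (Bl,SM) and (LS1) for \tinymultls\ collapse the diagram exactly as in Chapter 4, the black pairing yields $\delta_{jj'}$, and only then do the slid-out factors cancel via $D_j^\dag D_j = \mathbb{I}_d$. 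In particular, the paper entirely bypasses what you rightly call the main obstacle: the coupled condition (your $U_1$ for \tinymultlss) is never verified, because diagonality, sliding, and (LS1) for \tinymultls\ jointly do its work. Your route --- the transfer principle (which is sound, and is asserted by the paper itself at the start of Chapter 5, since Theorem 4.1 uses only (LS1), (LS2), (Bl,SM) and orthonormality of the basis states) followed by verification of both $U_1$ and $U_2$ for the model --- therefore costs an extra computation, but it buys a fact the paper leaves implicit: that the model really is a generalised latin square structure, so that $B'_{ij}$ is a generalised shift and multiply basis in the sense of the paper's definition, and not merely a UEB proved ad hoc. Your sketches for both verifications are workable (block-diagonality over the controlling black index; (1.33) for the phases; (LS1) and (LS2) of \tinymultls\ for the permutation part), with one caution: the cancellation of the modification ``against its adjoint'' is never formal, both because in $U_2^\dag U_2$ the factors $D_j^\dag$ and $D_j$ are separated by $P_j^\dag P_j$, and because $D_j^\dag D_j=\mathbb{I}_d$ is itself the complementarity computation; your normalisation worry is exactly on point, since the diagonal entries of $D_j$ as drawn are $\langle m^{(k)}|j\rangle$, which are phases only after the $\sqrt d$ rescaling.
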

\begin{proof}
$B_{ij}=P_j \circ D_j \circ H^j_{\text{diag}(i)}$ where $D_j:= 
\sum_m\begin{aligned}\begin{pic}

         \node[whitedot,label={[xshift=-0.1cm,yshift=-0.1cm]0:{\tiny$k$}}] (k) at (-1,1.5) {};
         \node[state,black,scale=0.5] (b) at (-0.5,1) {$j$};

         \node[state,hflip,scale=0.5,label={[xshift=0.15cm]0:{\tiny$k$}}] (k1) at (-1,1.75) {$m$};
         \node[state,black,scale=0.5] (m1) at (-1,2.5) {$m$};
         \node[state,hflip,black,scale=0.5] (m2) at (-1.5,0.25) {$m$};
         \node[state,scale=0.5,label={[yshift=-0.2cm]0:{\tiny$k$}}] (k2) at (-1.5,1) {$m$};
         \node (0a) at (-1.5,-1.25){};

         \node (0b) at (0.5,-0.25){};
         \node (0c) at (0.5,2.5) {};
         \node (0d) at (-0.25,5.25){};
         \draw[string] (-1.5,-0.5) to (m2);
         \draw[string,out=90,in=180] (k2) to (k);
         \draw[string] (k) to (k1);
         \draw[string] (m1) to (-1,3.25);

         \draw[string,out=90,in=0] (b) to (k);

\end{pic}\end{aligned}$
\\$P_j$ and $H^j_{\text{diag}(i)}$ were proven to be unitary in Chapters 2 and 4 respectively. $D_j$ is unitary as follows:\\
\begin{equation}\sum_n \sum_m 
\begin{aligned}\begin{pic}

         \node[whitedot,label={[xshift=-0.1cm,yshift=-0.1cm]0:{\tiny$k$}}] (k) at (-1,-1.5) {};
         \node[state,black,scale=0.5] (b) at (-0.5,-2) {$j$};

         \node[state,hflip,scale=0.5,label={[xshift=0.15cm]0:{\tiny$k$}}] (k1) at (-1,-1.25) {$m$};
         \node[state,black,scale=0.5] (m1) at (-1,-0.5) {$m$};
         \node[state,hflip,black,scale=0.5] (m2) at (-1.5,-2.75) {$m$};
         \node[state,scale=0.5,label={[yshift=-0.2cm]0:{\tiny$k$}}] (k2) at (-1.5,-2) {$m$};
         \node (0a) at (-1.5,-1.25){};

         \node (0b) at (0.5,-0.25){};
         \node (0c) at (0.5,2.5) {};
         \draw[string] (-1.5,-3.5) to (m2);
         \draw[string,out=90,in=180] (k2) to (k);
         \draw[string] (k) to (k1);
         \draw[string] (m1) to (-1,0);

         \draw[string,out=90,in=0] (b) to (k);

                \node[whitedot,label={[xshift=-0.1cm,yshift=-0.1cm]0:{\tiny$k$}}] (k') at (-1,1.5) {};
         \node[state,hflip,black,scale=0.5] (b') at (-0.5,2) {$j$};

         \node[state,scale=0.5,label={[yshift=-0.2cm]0:{\tiny$k$}}] (k1') at (-1,1.25) {$n$};
         \node[state,hflip,black,scale=0.5] (m1') at (-1,0.5) {$n$};
         \node[state,black,scale=0.5] (m2') at (-1.5,2.75) {$n$};
         \node[state,hflip,scale=0.5,label={[xshift=0.15cm]0:{\tiny$k$}}] (k2') at (-1.5,2) {$n$};

         \draw[string] (-1.5,3.5) to (m2');
         \draw[string,out=270,in=180] (k2') to (k');
         \draw[string] (k') to (k1');
         \draw[string] (m1') to (-1,0);

         \draw[string,out=270,in=0] (b') to (k');
                           
\end{pic}\end{aligned}
\quad=\quad
 \sum_m 
\begin{aligned}\begin{pic}

         \node[whitedot,label={[xshift=-0.1cm,yshift=-0.1cm]0:{\tiny$k$}}] (k) at (-1,-1.5) {};
         \node[state,black,scale=0.5] (b) at (-0.5,-2) {$j$};

         \node[state,hflip,scale=0.5,label={[xshift=0.15cm]0:{\tiny$k$}}] (k1) at (-1,-1.25) {$m$};

         \node[state,hflip,black,scale=0.5] (m2) at (-1.5,-2.75) {$m$};
         \node[state,scale=0.5,label={[yshift=-0.2cm]0:{\tiny$k$}}] (k2) at (-1.5,-2) {$m$};
         \node (0a) at (-1.5,-1.25){};

         \node (0b) at (0.5,-0.25){};
         \node (0c) at (0.5,2.5) {};
         \draw[string] (-1.5,-3.5) to (m2);
         \draw[string,out=90,in=180] (k2) to (k);
         \draw[string] (k) to (k1);

         \draw[string,out=90,in=0] (b) to (k);

                \node[whitedot,label={[xshift=-0.1cm,yshift=-0.1cm]0:{\tiny$k$}}] (k') at (-1,-0.25) {};
         \node[state,hflip,black,scale=0.5] (b') at (-0.5,0.25) {$j$};

         \node[state,scale=0.5,label={[yshift=-0.2cm]0:{\tiny$k$}}] (k1') at (-1,-0.5) {$m$};

         \node[state,black,scale=0.5] (m2') at (-1.5,1) {$m$};
         \node[state,hflip,scale=0.5,label={[xshift=0.15cm]0:{\tiny$k$}}] (k2') at (-1.5,0.25) {$m$};

         \draw[string] (-1.5,1.75) to (m2');
         \draw[string,out=270,in=180] (k2') to (k');
         \draw[string] (k') to (k1');

         \draw[string,out=270,in=0] (b') to (k');
                           
\end{pic}\end{aligned}
\quad=\quad
 \sum_m 
\begin{aligned}\begin{pic}

         \node[whitedot,label={[xshift=-0.1cm,yshift=-0.1cm]0:{\tiny$k$}}] (k) at (-1,-1.5) {};
         \node[state,black,scale=0.5] (b) at (-0.5,-2) {$j$};

         \node[state,hflip,black,scale=0.5] (m2) at (-1.5,-2.75) {$m$};
         \node[state,scale=0.5,label={[yshift=-0.2cm]0:{\tiny$k$}}] (k2) at (-1.5,-2) {$m$};
         \node (0a) at (-1.5,-1.25){};

         \node (0b) at (0.5,-0.25){};
         \node (0c) at (0.5,2.5) {};
         \draw[string] (-1.5,-3.5) to (m2);
         \draw[string,out=90,in=180] (k2) to (k);
         \draw[string] (k) to (k1);

         \draw[string,out=90,in=0] (b) to (k);

                \node[whitedot,label={[xshift=-0.1cm,yshift=-0.1cm]0:{\tiny$k$}}] (k') at (-1,-0.25) {};
         \node[state,hflip,black,scale=0.5] (b') at (-0.5,0.25) {$j$};

         \node[state,black,scale=0.5] (m2') at (-1.5,1) {$m$};
         \node[state,hflip,scale=0.5,label={[xshift=0.15cm]0:{\tiny$k$}}] (k2') at (-1.5,0.25) {$m$};

         \draw[string] (-1.5,1.75) to (m2');
         \draw[string,out=270,in=180] (k2') to (k');
         \draw[string] (k') to (k1');

         \draw[string] (k) to (k');
         \draw[string,out=270,in=0] (b') to (k');
                           
\end{pic}\end{aligned}\\
\quad=\quad
 \sum_m 
\begin{aligned}\begin{pic}

         \node[state,hflip,black,scale=0.5] (m2) at (-1.5,-2.75) {$m$};
         \node[state,scale=0.5,label={[yshift=-0.2cm]0:{\tiny$k$}}] (k2) at (-1.5,-2) {$m$};
 
         \draw[string] (-1.5,-3.5) to (m2);


         \node[state,black,scale=0.5] (m2') at (-1.5,-1.25) {$m$};
         \node[state,hflip,scale=0.5,label={[xshift=0.15cm]0:{\tiny$k$}}] (k2') at (-1.5,-2) {$m$};

         \draw[string] (-1.5,-0.5) to (m2');

         \draw[string] (k2) to (k2');                 
\end{pic}\end{aligned}
\quad=\quad
 \sum_m 
\begin{aligned}\begin{pic}

         \node[state,hflip,black,scale=0.5] (m2) at (-1.5,-2.75) {$m$};

         \draw[string] (-1.5,-3.5) to (m2);


         \node[state,black,scale=0.5] (m2') at (-1.5,-2) {$m$};

         \draw[string] (-1.5,-1.25) to (m2');

         \draw[string] (k2) to (k2');                 
\end{pic}\end{aligned}
\quad=\quad
 \sum_m 
\begin{aligned}\begin{pic}
         \draw[string] (-1.5,-3.5) to (-1.5,-1.25);
\end{pic}\end{aligned}\end{equation}
The other direction straightforwardly follows in a similar manner. So $D_j$ is unitary. Thus $B'_{ij}$ is unitary.

Clearly $\sum_{m}
\begin{aligned}\begin{pic}

         \node[whitedot,label={[xshift=-0.1cm,yshift=-0.1cm]0:{\tiny$k$}}] (k) at (-1,1.5) {};
         \node[state,black,scale=0.5] (b) at (-0.5,1) {$j$};

         \node[state,hflip,scale=0.5,label={[xshift=0.15cm]0:{\tiny$k$}}] (k1) at (-1,1.75) {$m$};
         \node[state,black,scale=0.5] (m1) at (-1,2.5) {$m$};
         \node[state,hflip,black,scale=0.5] (m2) at (-1.5,0.25) {$m$};
         \node[state,scale=0.5,label={[yshift=-0.2cm]0:{\tiny$k$}}] (k2) at (-1.5,1) {$m$};
         \node (0a) at (-1.5,-1.25){};

         \node (0b) at (0.5,-0.25){};
         \node (0c) at (0.5,2.5) {};
         \node (0d) at (-0.25,5.25){};
         
         \node[state,black,hflip,scale=0.5] at(-1,3.25){$a$};
         \node[state,black,scale=0.5] at (-1.5,-0.5){$b$}; 
         \draw[string] (-1.5,-0.5) to (m2);
         \draw[string,out=90,in=180] (k2) to (k);
         \draw[string] (k) to (k1);
         \draw[string] (m1) to (-1,3.25);

         \draw[string,out=90,in=0] (b) to (k);

\end{pic}\end{aligned}
\hspace{-5mm}=0, \forall a \neq b.$ Thus  $D_j$ is diagonal $\forall j$.
So $D_j$ can move through spiders by
(5.8). The same is true of $D^{\dag}_j$ which is obviously diagonal too.

\begin{equation}
\tr (B^{\dag}_{ij} \circ B_{i'j'})=
d \begin{aligned}\begin{pic}[scale=0.6]

      \node (0a) at (1.5,-3.5) {};         
      \node (i)[state,scale=0.5,label={[yshift=-0.3cm]0:{\tiny $j'$}}] at(0.5,-3) {$i'$};
          \node (2a)[blackdot] at (1,-2.5) {};
          \node[morphism,wedge,scale=0.5] (2b) at (1,-1.75) {$D_{j'}$};          
          \draw[string,out=90,in=0] (0a.center) to (2a);                   
          \draw[string,out=90,in=180] (i) to (2a);
          \draw[string,out=90,in=270] (2a) to (1,-2);

          \node (j)[state,black,scale=0.5] at(2.25,-1.5) {$j'$};
          \node (2a')[ls] at (1.75,-1) {};
         
          \draw[string,out=90,in=180] (1,-1.5) to (2a');                  
          \draw[string,out=90,in=0] (j) to (2a');

           \node (A) at (1.5,3.5) {};         
      \node (B)[state,hflip,scale=0.5,label={[yshift=0.3cm]0:{\tiny $j$}}] at(0.5,3) {$i$};
          \node (C)[blackdot] at (1,2.5) {};
          \node[morphism,hflip,wedge,scale=0.5] (D) at (1,1.75) {$D_j$};          
          \draw[string,out=270,in=0] (A.center) to (C);                   
          \draw[string,out=270,in=180] (B) to (C);
          \draw[string,out=270,in=90] (C) to (1,2);

          \node (F)[state,black,hflip,scale=0.5] at(2.25,1.5) {$j$};
          \node (G)[ls] at (1.75,1) {};
         
          \draw[string,out=270,in=180] (1,1.5) to (G);                  
          \draw[string,out=270,in=0] (F) to (G);
          \draw[string] (G) to (2a');
          
          \node[blackdot] (X) at (-0.125,4.25) {};
          \node[blackdot] (Y) at (-0.125,-4.25) {};
          \draw[string,in=180,out=180] (X) to (Y);
          \draw[string,in=270,out=0] (Y) to (0a.center);
          \draw[string,in=90,out=0] (X) to (A.center);
\end{pic}\end{aligned}
\quad \overset{\text{(5.8)}}{=}\quad
d \begin{aligned}\begin{pic}[scale=0.6]

      \node (0a) at (1.5,-3.5) {};         
      \node (i)[state,scale=0.5,label={[yshift=-0.3cm]0:{\tiny $j'$}}] at(0.5,-5.75) {$i'$};
          \node (2a)[blackdot] at (1,-2.5) {};
          \node (2b) at (1,-2.25) {};          
          \draw[string,out=90,in=0] (0a.center) to (2a);                   
          \draw[string] (i) to (0.5,-5.5);
          \draw[string,out=90,in=180] (0.5,-5) to (2a);
          \draw[string,out=90,in=270] (2a) to (2b.center);
          
          \node (0b) at (0,-1.5) {};
          \node (j)[state,black,scale=0.5] at(2.25,-1.5) {$j'$};
          \node (2a')[ls] at (1.75,-1) {};
          \node (2b') at (1.75,0) {};          
          \draw[string,out=90,in=180] (2b.center) to (2a');                  
          \draw[string,out=90,in=0] (j) to (2a');

           \node (A) at (1.5,3.5) {};         
      \node (B)[state,hflip,scale=0.5,label={[yshift=0.3cm]0:{\tiny $j$}}] at(0.5,5.75) {$i$};
          \node (C)[blackdot] at (1,2.5) {};
          \node (D) at (1,2.25) {};          
          \draw[string,out=270,in=0] (A.center) to (C);                   
          \draw[string] (B) to (0.5,5.5);
          \draw[string,out=270,in=180] (0.5,5) to (C);
          \draw[string,out=270,in=90] (C) to (D.center);
          
          \node (E) at (0,1.5) {};
          \node (F)[state,black,hflip,scale=0.5] at(2.25,1.5) {$j$};
          \node (G)[ls] at (1.75,1) {};
         
          \draw[string,out=270,in=180] (D.center) to (G);                  
          \draw[string,out=270,in=0] (F) to (G);
          \draw[string] (G) to (2a');
          
          \node[blackdot] (X) at (-0.125,4.25) {};
          \node[blackdot] (Y) at (-0.125,-4.25) {};
          \draw[string,in=180,out=180] (X) to (Y);
          \draw[string,in=270,out=0] (Y) to (0a.center);
          \draw[string,in=90,out=0] (X) to (A.center);

          \node[morphism,hflip,wedge,scale=0.5] (6) at (0.5,5.25) {$D_j$};
          \node[morphism,wedge,scale=0.5] (7) at (0.5,-5.25) {$D_{j'}$};
\end{pic}\end{aligned}
\quad $$\\$$ \overset{2 \times \text{(Bl,SM)}}{=}\quad
d \begin{aligned}\begin{pic}[yscale=0.375,xscale=0.5]
 \node[state,scale=0.5,label={[yshift=-0.3cm]0:{\tiny $j'$}}] (0) at (-1.5,-4.5) {$i'$};
\node[ls] (1) at (0,-1) {};
\node (2) at (-2.5,0) {};
\node[blackdot] (B) at (-1.5,-2) {};
\node[state,black,scale=0.5] (3) at (1,-4.5) {$j'$};
\node (4) at (0,0) {};
\draw[string,out=0,in=90] (1) to (3);
\draw [string, out=90,in=270] (0) to (-1.5,-3.65);
\draw [string, out=90,in=270] (-1.5,-2.85) to (B);
\draw [string, out=0,in=180] (B) to (1);
\draw [string, out=180,in=270] (B) to (2.center);
\draw[string] (1) to (4.center);   

 \node[state,hflip,scale=0.5,label={[yshift=0.3cm]0:{\tiny $j$}}] (5) at (-1.5,4.5) {$i$};
\node[ls] (6) at (0,1) {};

\node[blackdot] (A) at (-1.5,2) {};
\node[state,hflip,black,scale=0.5] (8) at (1,4.5) {$j$};

\draw[string,out=0,in=270] (6) to (8);
\draw [string, out=270,in=90] (5) to (-1.5,3.65);
\draw [string, out=270,in=90] (-1.5,2.85) to (A);
\draw [string, out=0,in=180] (A) to (6);
\draw [string, out=180,in=90] (A) to (2.center);
\draw[string] (6) to (4.center);

          \node[morphism,hflip,wedge,scale=0.5] (6) at (-1.5,3.25) {$D_j$};
          \node[morphism,wedge,scale=0.5] (7) at (-1.5,-3.25) {$D_{j'}$};
\end{pic}\end{aligned}
\quad=\quad
\begin{aligned}\begin{pic}[yscale=0.375,xscale=0.5]
 \node[state,scale=0.5,label={[yshift=-0.3cm]0:{\tiny $j'$}}] (0) at (-1.5,1.5) {$i'$};
\node[state,black,scale=0.5] (3) at (0,3.5) {$j'$}; 
 \node[state,hflip,scale=0.5,label={[yshift=0.3cm]0:{\tiny $j$}}] (5) at (-1.5,5.5) {$i$};
\node[state,hflip,black,scale=0.5] (8) at (0,3.5) {$j$};
          \node[morphism,hflip,wedge,scale=0.5] (6) at (-1.5,4.5) {$D_j$};
          \node[morphism,wedge,scale=0.5] (7) at (-1.5,2.5) {$D_{j'}$};
          \draw[string] (0) to (-1.5,2.1);
          \draw[string] (-1.5,2.9) to (-1.5,4.1);
          \draw[string] (-1.5,4.9) to (5);
          
\end{pic}\end{aligned} d
\quad=\quad
\begin{aligned}\begin{pic}[yscale=0.375,xscale=0.5]
 \node[state,scale=0.5,label={[yshift=-0.3cm]0:{\tiny $j$}}] (0) at (-1.5,1.5) {$i'$};

 \node[state,hflip,scale=0.5,label={[yshift=0.3cm]0:{\tiny $j$}}] (5) at (-1.5,5.5) {$i$};

          \node[morphism,hflip,wedge,scale=0.5] (6) at (-1.5,4.5) {$D_j$};
          \node[morphism,wedge,scale=0.5] (7) at (-1.5,2.5) {$D_j$};
          \draw[string] (0) to (-1.5,2.1);
          \draw[string] (-1.5,2.9) to (-1.5,4.1);
          \draw[string] (-1.5,4.9) to (5);
          
\end{pic}\end{aligned} \delta_{jj'}d
\quad=\quad
\begin{aligned}\begin{pic}[yscale=0.375,xscale=0.5]
 \node[state,scale=0.5,label={[yshift=-0.3cm]0:{\tiny $j$}}] (0) at (-1.5,3.5) {$i'$};
 
 \node[state,hflip,scale=0.5,label={[yshift=0.3cm]0:{\tiny $j$}}] (5) at (-1.5,3.5) {$i$};

\end{pic}\end{aligned}
\delta_{jj'}d
=\delta_{ii'} \delta_{jj'}d
\end{equation}
\end{proof} 
 Since $D_j$ varies with $j$ the unitary error basis $B'_{ij}$ is not obviously equivalent to any shift and multiply basis. Clearly there are many other possible models for generalised latin structures, perhaps varying the choice of white classical structure with $i$ or $j$ or adding a phase which varies with $i$ or $j$. 

\section{Generalised Latin Square}

I will now present a generalised latin square order $6$ that fits my model (5.9), is not a latin square, but does produce a UEB via the generalised combinatorial construction.
I have chosen $d=6$ because it is the lowest dimension for which there exist both non-isotopic latin squares and non-equivalent Hadamard matrices.
As input I have taken the following non-associative latin square found in this paper~\cite{latinsquare}:

\begin{equation*}
\begin{tabular}[]{|c|c|c|c|c|c|}

\hline

$a$  & $b$  & $c$  & $d$ & $e$ & $f$\\

\hline

$b$  & $a$  & $e$ & $f$ & $c$  & $d$\tabularnewline

\hline

$c$  & $f$ & $b$  & $a$  & $d$ & $e$\tabularnewline

\hline

$d$ & $e$ & $a$  & $b$  & $f$ & $c$\tabularnewline

\hline

$e$ & $d$ & $f$ & $c$  & $b$  & $a$\tabularnewline

\hline

$f$ & $c$  & $d$ & $e$ & $a$  & $b$\tabularnewline

\hline

\end{tabular}
\end{equation*}
As the $k^{th}$ member of my family of Hadamard matrices I have chosen $C^{(0)}_6$, which can be found in this paper~\cite{Hadamard} :

\begin{equation*}
\begin{pmatrix}
   
    1
    &
    1 
    & 1
    & 1
    & 1
    & 1
    \\
    1
    & -1 & -p & -p^{2} & p^2 & p
    \\
    1 & -{\overline{p}} & 1 & p^2 & -p^3 & p^2
    \\
     1 & {\overline{p}}^2 & {\overline{p}}^2 & -1  & p^2 & -p^2
    \\
    1 & {\overline{p}}^2 & -{\overline{p}}^3 & {\overline{p}}^2 & 1 & -p
    \\
    1
    & {\overline{p}} & {\overline{p}}^2 & -{\overline{p}}^2 & -{\overline{p}} &
    -1
  \end{pmatrix}
  \end{equation*}
Where $p=\frac{1- \sqrt{3}}{2}+i\sqrt{(\frac{\sqrt{3}}{2}}).$

So equation (5.9) gives the following generalised latin square:
\begin{equation*}
\begin{aligned}\begin{tabular}{|c|c|c|c|c|c|}

\hline
\, \, $a$ \, \,  & $\frac{b}{\sqrt{6}}$ & $\frac{c}{\sqrt{6}}$ & $\frac{d}{\sqrt{6}}$ & $\frac{e}{\sqrt{6}}$ & $\frac{f}{\sqrt{6}}$\\[5pt]

\hline

$\frac{b}{\sqrt{6}}$ & $-a$ & $\frac{-e{\overline{p}}}{\sqrt{6}}$ & $\frac{-f{\overline{p}}^{2}}{\sqrt{6}}$ & $\frac{c\overline{p}^{2}}{\sqrt{6}}$ & $\frac{d{\overline{p}}}{\sqrt{6}}$\\[5pt]

\hline

$\frac{c}{\sqrt{6}}$ & $\frac{-fp}{\sqrt{6}}$ & $\frac{b}{\sqrt{6}}$ & ${\overline{p}}^{2}a$ & $\frac{-d{\overline{p}}^{3}}{\sqrt{6}}$ & $\frac{e{\overline{p}}^{2}}{\sqrt{6}}$\\[5pt]

\hline

$\frac{d}{\sqrt{6}}$ & $\frac{-ep^2}{\sqrt{6}}$ & $p^2a$ & $\frac{-b}{\sqrt{6}}$ & $\frac{f{\overline{p}}^{2}}{\sqrt{6}}$ & $\frac{-c{\overline{p}}^{2}}{\sqrt{6}}$\\[5pt]

\hline

$\frac{e}{\sqrt{6}}$ & $\frac{dp^2}{\sqrt{6}}$ & $\frac{-fp^{3}}{\sqrt{6}}$ & $\frac{cp^2}{\sqrt{6}}$ & $\frac{b}{\sqrt{6}}$ & $-pa$\\[5pt]

\hline

$\frac{f}{\sqrt{6}}$ & $\frac{cp}{\sqrt{6}}$ & $\frac{dp^2}{\sqrt{6}}$ & $\frac{-ep^2}{\sqrt{6}}$ & $\overline{p}a$ & $\frac{-b}{\sqrt{6}}$\\[5pt]

\hline

\end{tabular}
\end{aligned}
\end{equation*}
Where $p$ is as above.


\chapter{Conclusion}
In this dissertation I have achieved the aims that I set out to accomplish. Through representing both shift and multiply bases and MUB error bases in the graphical calculus of categorical quantum mechanics I have been able to find out where the MUB construction fits into the unitary error basis picture. I have proven that the MUB construction is strictly less general than the combinatorial construction.

In explicitly formulating latin squares as latin square structures I have opened up this combinatorial object for further investigation using categorical quantum mechanics.  

Using the high level tools of categorical quantum mechanics I have also been able to formulate shift and multiply bases in a way which is enlightening as to how the construction works, and additionally gives insight into exactly which parts of the construction are strictly necessary to producing unitary error bases. I have thereby generalised the combinatorial construction and given an explicit model of a generalised shift and multiply basis. I have also presented a specific example of a generalised latin square that is not a latin square but does give rise to a UEB.

The connection with quantum teleportation and dense coding protocols makes this line of enquiry a useful one. As we get closer to fully characterising unitary error bases we gain valuable information about these protocols which have potentially profound implications for the fields of quantum information and quantum computing.

I have worked within the symmetric monoidal category \cat{FHilb}, but due to the use of category theory all my results can be carried over into other symmetric monoidal categories.


\chapter{Further Work}
The results of this dissertation naturally open up various avenues of enquiry:

\begin{itemize}
\item to explicitly prove whether a generalised shift and multiply basis is equivalent to a shift and multiply basis;
\item if not then to explore other models of the generalised construction;
\item to explore  latin square structures generalised to other symmetric monoidal categories;
\item to explore the analogues of the various UEB constructions in other symmetric monoidal categories, such as \cat{Rel} as well as the abstract quantum teleportation protocols they produce;
\item to formulate the algebraic construction of UEBs in the graphical calculus in order to investigate similarities with and differences from the combinatorial construction and generalisations.
\end{itemize}


\appendix

\chapter{Glossary of Rules}

In the course of this paper I will make use of diagramatic proof extensively. It is therefore pragmatic to have a glossary of the rules I will use with abbreviations which I will utilise to show which rule I have used to get from one diagram to the next. 

\section{Rules Applying to Classical Structures}
\

\underline{(SM) Spider Merge / Unmerge}\\
\begin{equation}
    \begin{pic}
     \node[blackdot,scale=2] (A){};
     \draw[string,out=190,in=90] (A) to (-2,-1);
     \draw[string,out=350,in=90] (A) to (2,-1);
     \draw[string,out=190,in=90] (A) to (-1.7,-1);
     \draw[string,out=350,in=90] (A) to (1.7,-1);  
     \draw[string,out=10,in=270] (A) to (2,1);
     \draw[string,out=20,in=260] (A) to (2,1);  
     \draw[loosely dotted] (-1.65,-0.9) to (1.65,-0.9); 
     \node[blackdot,scale=2] at (2,1) (B){};
     \draw[string,out=190,in=90] (B)  to (A);
     \draw[string,out=200,in=80] (B) to (A);  
     \draw[string,out=170,in=270] (B) to (0,2);
     \draw[string,out=10,in=270] (B) to (4,2);
     \draw[string,out=170,in=270] (B) to (0.3,2);
     \draw[string,out=10,in=270] (B) to (3.7,2); 
     \draw[loosely dotted] (0.35,1.9) to (3.65,1.9); 
     \draw[dotted] (0.6,0.7) to (1.3,0.35);
     \draw[decoration={brace,amplitude=7pt},decorate] (-0.1,2.05) to (4.1,2.05);
     \draw[decoration={brace,mirror,amplitude=7pt},decorate] (-2.1,-1.05) to (2.1,-1.05);       
     \node at (0,-1.4) {{\tiny $m$}};
     \node at (2,2.4) {{\tiny $n$}};   
    \end{pic}
    \quad=\quad
    \begin{pic}[yscale=3/2]
     \node[blackdot,scale=2] (A){};
     \draw[string,out=190,in=90] (A) to (-2,-1);
     \draw[string,out=350,in=90] (A) to (2,-1);
     \draw[string,out=190,in=90] (A) to (-1.7,-1);
     \draw[string,out=350,in=90] (A) to (1.7,-1);  
     \draw[string,out=170,in=270] (A) to (-2,1);
     \draw[string,out=10,in=270] (A) to (2,1);
     \draw[string,out=170,in=270] (A) to (-1.7,1);
     \draw[string,out=10,in=270] (A) to (1.7,1); 
     \draw[loosely dotted] (-1.65,0.9) to (1.65,0.9); 
     \draw[loosely dotted] (-1.65,-0.9) to (1.65,-0.9);
          \draw[decoration={brace,amplitude=7pt},decorate] (-2.1,1.033) to (2.1,1.033);
     \draw[decoration={brace,mirror,amplitude=7pt},decorate] (-2.1,-1.033) to (2.1,-1.033);       
     \node at (0,-1.26) {{\tiny $m$}};
     \node at (0,1.26) {{\tiny $n$}};       
    \end{pic}
  \end{equation}
\underline{(A)Associativity}
\begin{equation}
\begin{aligned}\begin{tikzpicture}
          \node (0a) at (-1,0.25) {};
          \node (0b) at (-0.5,0.25) {};
          \node (0c) at (0.5,0.25) {};

          \node[dot] (1) at (0,1) {};
          \node[dot] (2) at (-0.5,1.5) {};

          \node (5a) at (-0.5,2) {};

          \draw[string, out=90, in =180] (0a) to (2);
          \draw[string, out=0, in=90] (2) to (1);
          \draw[string, out=0, in=90] (1) to (0c);
          \draw[string, out=180, in=90] (1) to (0b);

          \draw[string] (2) to (5a.center);
         
          \end{tikzpicture}\end{aligned}   
  \quad   = \quad
\begin{aligned}\begin{tikzpicture}
          \node (0a) at (0.75,0.25) {};
          \node (0b) at (0.25,0.25) {};
          \node (0c) at (-0.75,0.25) {};

          \node[dot] (1) at (-0.25,1) {};
          \node[dot] (2) at (0.25,1.5) {};

          \node (5a) at (0.25,2) {};

          \draw[string, out=90, in =00] (0a) to (2);
          \draw[string, out=180, in=90] (2) to (1);
          \draw[string, out=180, in=90] (1) to (0c);
          \draw[string, out=0, in=90] (1) to (0b);

          \draw[string] (2) to (5a.center);
\end{tikzpicture}\end{aligned}  
\end{equation}
  
\underline{(CA)Coassociativity}
\begin{equation}
\begin{aligned}\begin{tikzpicture}[yscale=-1]
          \node (0a) at (-1,0.25) {};
          \node (0b) at (-0.5,0.25) {};
          \node (0c) at (0.5,0.25) {};

          \node[dot] (1) at (0,1) {};
          \node[dot] (2) at (-0.5,1.5) {};

          \node (5a) at (-0.5,2) {};

          \draw[string, out=90, in =180] (0a.center) to (2);
          \draw[string, out=0, in=90] (2) to (1);
          \draw[string, out=0, in=90] (1) to (0c.center);
          \draw[string, out=180, in=90] (1) to (0b.center);

          \draw[string] (2) to (5a.center);
         
          \end{tikzpicture}\end{aligned}   
  \quad   = \quad
\begin{aligned}\begin{tikzpicture}[yscale=-1]
          \node (0a) at (0.75,0.25) {};
          \node (0b) at (0.25,0.25) {};
          \node (0c) at (-0.75,0.25) {};

          \node[dot] (1) at (-0.25,1) {};
          \node[dot] (2) at (0.25,1.5) {};

          \node (5a) at (0.25,2) {};

          \draw[string, out=90, in =00] (0a.center) to (2);
          \draw[string, out=180, in=90] (2) to (1);
          \draw[string, out=180, in=90] (1) to (0c.center);
          \draw[string, out=0, in=90] (1) to (0b.center);

          \draw[string] (2) to (5a.center);
\end{tikzpicture}\end{aligned}  
  \end{equation}

  \underline{(C)Commutativity}
\begin{equation}
\begin{aligned}\begin{tikzpicture}
          \node (0a) at (-1,0.25) {};
          \node (0b) at (-0.5,0.25) {};
          \node (0c) at (0,0.25) {};

          \node[dot] (2) at (-0.5,1.5) {};

          \node (5a) at (-0.5,2) {};

          \draw[string, out=90, in =180] (0a) to (2);
          \draw[string, out=0, in=90] (2) to (0c);

          \draw[string] (2) to (5a.center);
         
          \end{tikzpicture}\end{aligned}   
  \quad   = \quad
\begin{aligned}\begin{tikzpicture}
           \node (0a) at (-1,0.25) {};
          \node (0b) at (-0.5,0.25) {};
          \node (0c) at (0,0.25) {};

          \node[dot] (2) at (-0.5,1.5) {};

          \node (5a) at (-0.5,2) {};

          \draw[string, out=90, in =0] (0a) to (2);
          \draw[string, out=180, in=90] (2) to (0c);
          \draw[string] (2) to (5a.center);
\end{tikzpicture}\end{aligned}  
  \end{equation}
  
  \underline{(CC)Cocommutativity}
\begin{equation}
\begin{aligned}\begin{tikzpicture}[yscale=-1]
          \node (0a) at (-1,0.25) {};
          \node (0b) at (-0.5,0.25) {};
          \node (0c) at (0,0.25) {};

          \node[dot] (2) at (-0.5,1.5) {};

          \node (5a) at (-0.5,2) {};

          \draw[string, out=90, in =180] (0a.center) to (2);
          \draw[string, out=0, in=90] (2) to (0c.center);

          \draw[string] (2) to (5a.center);
         
          \end{tikzpicture}\end{aligned}   
  \quad   = \quad
\begin{aligned}\begin{tikzpicture}[yscale=-1]
           \node (0a) at (-1,0.25) {};
          \node (0b) at (-0.5,0.25) {};
          \node (0c) at (0,0.25) {};

          \node[dot] (2) at (-0.5,1.5) {};

          \node (5a) at (-0.5,2) {};

          \draw[string, out=90, in =0] (0a.center) to (2);
          \draw[string, out=180, in=90] (2) to (0c.center);
          \draw[string] (2) to (5a.center);
\end{tikzpicture}\end{aligned}  
  \end{equation}

 \underline{(S)Specialness}
\begin{equation}
\begin{aligned}\begin{tikzpicture}
          \node (0a) at (-1,0) {};
          \node (0b) at (-0.5,0) {};
          \node (0c) at (0,0) {};

          \node[dot] (2) at (-0.5,0.5) {};
          \node (5a) at (-0.5,1) {};
 
          \node[dot] (7) at (-0.5,-0.5) {};       
          \node (10a) at (-0.5,-1) {};

          \draw[string, out=90, in =180] (0a.center) to (2);
          \draw[string, out=0, in=90] (2) to (0c.center);

          \draw[string] (2) to (5a.center);
          
          \draw[string, out=270, in =180] (0a.center) to (7);
          \draw[string, out=0, in=270] (7) to (0c.center);

          \draw[string] (7) to (10a.center);
         
          \end{tikzpicture}\end{aligned}   
  \quad   = \quad
\begin{aligned}\begin{tikzpicture}
           \node (10a) at (-0.5,-1) {}; 
           \node (5a) at (-0.5,1) {}; 
           \draw[string] (5a.center) to (10a.center);
\end{tikzpicture}\end{aligned}  
  \end{equation}

\underline{(U) Unitality}
\begin{equation}
\begin{aligned}\begin{tikzpicture}
          
          \node (0a) at (-1,0) {};
          \node (0b) at (-0.5,0) {};
          \node[dot] (0c) at (0,0) {};
          \node[dot] (7) at (-0.5,-0.5) {};       
          \node (10a) at (-0.5,-1) {};       
          \draw[string, out=270, in =180] (0a.center) to (7);
          \draw[string, out=0, in=270] (7) to (0c.center);
          \draw[string] (7) to (10a.center);
         
          \end{tikzpicture}\end{aligned}   
  \quad   = \quad
\begin{aligned}\begin{tikzpicture}
           \node (10a) at (-0.5,0) {}; 
           \node (5a) at (-0.5,1) {}; 
           \draw[string] (5a.center) to (10a.center);
\end{tikzpicture}\end{aligned}
\quad=\quad
\begin{aligned}\begin{tikzpicture}

          \node[dot] (0a) at (-1,0) {};
          \node (0b) at (-0.5,0) {};
          \node (0c) at (0,0) {};
          \node[dot] (7) at (-0.5,-0.5) {};       
          \node (10a) at (-0.5,-1) {};       
          \draw[string, out=270, in =180] (0a.center) to (7);
          \draw[string, out=0, in=270] (7) to (0c.center);
          \draw[string] (7) to (10a.center);

\end{tikzpicture}\end{aligned}  
  \end{equation}

\underline{(CU) Counitality}
\begin{equation}
\begin{aligned}\begin{tikzpicture}
          
          \node (0a) at (-1,-0.5) {};
          \node (0b) at (-0.5,-0.5) {};
          \node[dot] (0c) at (0,-0.5) {};
          \node[dot] (7) at (-0.5,0) {};       
          \node (10a) at (-0.5,0.5) {};       
          \draw[string, out=90, in =180] (0a.center) to (7);
          \draw[string, out=0, in=90] (7) to (0c.center);
          \draw[string] (7) to (10a.center);
         
          \end{tikzpicture}\end{aligned}   
  \quad   = \quad
\begin{aligned}\begin{tikzpicture}
           \node (10a) at (-0.5,0) {}; 
           \node (5a) at (-0.5,1) {}; 
           \draw[string] (5a.center) to (10a.center);
\end{tikzpicture}\end{aligned}
\quad=\quad
\begin{aligned}\begin{tikzpicture}

      \node[dot] (0a) at (-1,-0.5) {};
          \node (0b) at (-0.5,-0.5) {};
          \node (0c) at (0,-0.5) {};
          \node[dot] (7) at (-0.5,0) {};       
          \node (10a) at (-0.5,0.5) {};       
          \draw[string, out=90, in =180] (0a.center) to (7);
          \draw[string, out=0, in=90] (7) to (0c.center);
          \draw[string] (7) to (10a.center);

\end{tikzpicture}\end{aligned}  
  \end{equation}
\section{Rules Applying to Latin Square Structures}
\

In what follows \tinymultls \, represents a latin square structure and $\tinymult[blackdot]$ represents the classical structure corresponding to the orthonormal basis associated to the elements of the underlying latin square.

\underline{(LS1) $\begin{pic}[scale=0.2]
\node (0) at (0.5,-0.5) {};
\node[blackdot,scale=0.7] (1) at (-1,2) {};
\node (2) at (1.5,3.5) {};
\node[ls,scale=0.7] (B) at (0.5,1) {};
\node (3) at (-2,-0.5) {};
\node (4) at (-1,3.5) {};
\draw[string,out=180,in=90] (1) to (3);
\draw [string, out=90,in=270] (0) to (B);
\draw [string, out=180,in=0] (B) to (1);
\draw [string, out=0,in=270] (B) to (2);
\draw[string] (1) to (4);
\end{pic}$ is unitary}

\begin{equation}
\begin{aligned}\begin{tikzpicture}[scale=0.5]
       \node (0) at (0.5,-3.5) {};
\node[blackdot] (1) at (-1,-1) {};
\node (2) at (1.5,0) {};
\node[ls] (B) at (0.5,-2) {};
\node (3) at (-2,-3.5) {};
\node (4) at (-1,0) {};
\draw[string,out=180,in=90] (1) to (3);
\draw [string, out=90,in=270] (0) to (B);
\draw [string, out=180,in=0] (B) to (1);
\draw [string, out=0,in=270] (B) to (2.center);
\draw[string] (1) to (4.center);   

 \node (5) at (0.5,3.5) {};
\node[blackdot] (6) at (-1,1) {};

\node[ls] (A) at (0.5,2) {};
\node (8) at (-2,3.5) {};

\draw[string,out=180,in=270] (6) to (8);
\draw [string, out=270,in=90] (5) to (A);
\draw [string, out=180,in=0] (A) to (6);
\draw [string, out=0,in=90] (A) to (2.center);
\draw[string] (6) to (4.center);  

          \end{tikzpicture}\end{aligned}   
  \quad   = \quad
\begin{aligned}\begin{tikzpicture}[scale=0.5]
           \node (10a) at (0,0) {}; 
           \node (5a) at (0,7) {}; 
           \node (10) at (1.25,0) {}; 
           \node (5) at (1.25,7) {};
           \draw[string] (5a.center) to (10a.center);
           \draw[string] (5.center) to (10.center);
\end{tikzpicture}\end{aligned}
\quad=\quad
\begin{aligned}\begin{tikzpicture}[scale=0.5]

         \node (0) at (0.5,0) {};
\node[blackdot] (1) at (-1,2) {};
\node (2) at (1.5,3.5) {};
\node[ls] (B) at (0.5,1) {};
\node (3) at (-2,0) {};
\node (4) at (-1,3.5) {};
\draw[string,out=180,in=90] (1) to (3.center);
\draw [string, out=90,in=270] (0.center) to (B);
\draw [string, out=180,in=0] (B) to (1);
\draw [string, out=0,in=270] (B) to (2);
\draw[string] (1) to (4);

\node[blackdot] (6) at (-1,-2) {};
\node (7) at (1.5,-3.5) {};
\node[ls] (A) at (0.5,-1) {};

\node (9) at (-1,-3.5) {};
\draw[string,out=180,in=270] (6) to (3.center);
\draw [string, out=270,in=90] (0.center) to (A);
\draw [string, out=180,in=0] (A) to (6);
\draw [string, out=0,in=90] (A) to (7);
\draw[string] (6) to (9);

\end{tikzpicture}\end{aligned}  
  \end{equation}
  
\underline{(LS2) $\begin{pic}[scale=0.2]
\node (0) at (0.5,-0.5) {};
\node[ls,scale=0.7] (1) at (-1,2) {};
\node (2) at (1.5,3.5) {};
\node[blackdot,scale=0.7] (B) at (0.5,1) {};
\node (3) at (-2,-0.5) {};
\node (4) at (-1,3.5) {};
\draw[string,out=180,in=90] (1) to (3);
\draw [string, out=90,in=270] (0) to (B);
\draw [string, out=180,in=0] (B) to (1);
\draw [string, out=0,in=270] (B) to (2);
\draw[string] (1) to (4);
\end{pic}$ is unitary}

\begin{equation}
\begin{aligned}\begin{tikzpicture}[scale=0.5]
         \node (0) at (0.5,0) {};
\node[ls] (1) at (-1,2) {};
\node (2) at (1.5,3.5) {};
\node[blackdot] (B) at (0.5,1) {};
\node (3) at (-2,0) {};
\node (4) at (-1,3.5) {};
\draw[string,out=180,in=90] (1) to (3.center);
\draw [string, out=90,in=270] (0.center) to (B);
\draw [string, out=180,in=0] (B) to (1);
\draw [string, out=0,in=270] (B) to (2);
\draw[string] (1) to (4);

\node[ls] (6) at (-1,-2) {};
\node (7) at (1.5,-3.5) {};
\node[blackdot] (A) at (0.5,-1) {};

\node (9) at (-1,-3.5) {};
\draw[string,out=180,in=270] (6) to (3.center);
\draw [string, out=270,in=90] (0.center) to (A);
\draw [string, out=180,in=0] (A) to (6);
\draw [string, out=0,in=90] (A) to (7);
\draw[string] (6) to (9);

\end{tikzpicture}\end{aligned}
\quad=\quad
\begin{aligned}\begin{tikzpicture}[scale=0.5]
           \node (10a) at (0,0) {}; 
           \node (5a) at (0,7) {}; 
           \node (10) at (1.25,0) {}; 
           \node (5) at (1.25,7) {};
           \draw[string] (5a.center) to (10a.center);
           \draw[string] (5.center) to (10.center);
\end{tikzpicture}\end{aligned}
\quad=\quad
\begin{aligned}\begin{tikzpicture}[scale=0.5]
 \node (0) at (0.5,-3.5) {};
\node[ls] (1) at (-1,-1) {};
\node (2) at (1.5,0) {};
\node[blackdot] (B) at (0.5,-2) {};
\node (3) at (-2,-3.5) {};
\node (4) at (-1,0) {};
\draw[string,out=180,in=90] (1) to (3);
\draw [string, out=90,in=270] (0) to (B);
\draw [string, out=180,in=0] (B) to (1);
\draw [string, out=0,in=270] (B) to (2.center);
\draw[string] (1) to (4.center);   

 \node (5) at (0.5,3.5) {};
\node[ls] (6) at (-1,1) {};

\node[blackdot] (A) at (0.5,2) {};
\node (8) at (-2,3.5) {};

\draw[string,out=180,in=270] (6) to (8);
\draw [string, out=270,in=90] (5) to (A);
\draw [string, out=180,in=0] (A) to (6);
\draw [string, out=0,in=90] (A) to (2.center);
\draw[string] (6) to (4.center);

\end{tikzpicture}\end{aligned}  
  \end{equation}

\underline{(SN) Snake equation}
\begin{equation}
 \begin{aligned}\begin{tikzpicture}

          \node (2a) at (1.25,0.5) {};
          \node (2b)[whitedot] at (0.75,1.75) {};
          \node (3a)[whitedot] at (0,0.75){};
          \node (3b) at (-0.5,2){};

          \draw[string,out=180,in=0] (2b) to (3a);
          \draw[string,out=180,in=270] (3a) to (3b);

          \draw[string,out=90,in=00] (2a) to (2b);
      \end{tikzpicture}\end{aligned} 
      \quad= \quad
      \begin{aligned}\begin{tikzpicture}
      \node (A) at (0,0) {};
      \node(B) at (0,3) {};
      \draw[string] (A) to (B);
      \end{tikzpicture}\end{aligned}  
  \quad=\quad
    \begin{aligned}\begin{tikzpicture}

          \node (2a) at (1.75,1.5) {};
          \node (2d)[whitedot] at (1.25,0.25) {};
          \node (3d)[whitedot] at (0.5,1.25){};
          \node (3e) at (0,0){};
          
          \draw[string,out=180,in=0] (2d) to (3d);
          \draw[string,out=180,in=90] (3d) to (3e);
         \draw[string,out=270,in=00] (2a) to (2d);
          
      \end{tikzpicture}\end{aligned}    
      \end{equation}  
\textit{Note: both latin square structures and classical structures obey this rule.}      
       
\underline{(D)Relationship Between Multiplication and Comultiplication}    \begin{equation}
\begin{aligned}\begin{tikzpicture}
          
          \node (0a) at (-1,-0.5) {};
          \node (0b) at (-0.5,-0.5) {};
          \node (0c) at (0,-0.5) {};
          \node[ls] (7) at (-0.5,0) {};       
          \node (10a) at (-0.5,0.5) {};       
          \draw[string, out=90, in =180] (0a.center) to (7);
          \draw[string, out=0, in=90] (7) to (0c.center);
          \draw[string] (7) to (10a.center);
         
          \end{tikzpicture}\end{aligned}
          \quad=\quad
\begin{aligned}\begin{tikzpicture}
          
          \node (0a') at (-2,-1) {};
          \node[blackdot] (a) at (-1,0.5) {};
          
          \node (0a) at (-2,-0.5) {};
          \node (0c) at (-1.5,-0.5) {}; 
                            
          \node[blackdot] (b) at (-1,1) {};
          \node[blackdot] (c) at (0,-0.5) {};
          \node (0b) at (-0.5,-0.5) {};
          \node (0c') at (-1.5,-1) {};
          \node[ls] (7) at (-0.5,0) {};       
          \node (10a) at (0.5,1.5) {};       
          \draw[string, out=90, in =180] (0c.center) to (a);
          \draw[string, out=180, in =0] (7) to (a);
          \draw[string, out=0, in=0] (7) to (b);
          \draw[string, out=180, in=90] (b) to (0a.center);      
          \draw[string,out=270 ,in=180] (7) to (c);
          \draw[string,out=0,in=270] (c) to (10a); 
          \draw[string,out=270,in=90] (0a.center) to (0c');   
          \draw[string,out=270,in=90] (0c.center) to (0a');  
          \end{tikzpicture}\end{aligned}
 \end{equation} 
\\
 \begin{equation}
\begin{aligned}\begin{tikzpicture}
          
          \node (0a) at (-1,0.5) {};
          \node (0b) at (-0.5,0.5) {};
          \node (0c) at (0,0.5) {};
          \node[ls] (7) at (-0.5,0) {};       
          \node (10a) at (-0.5,-0.5) {};       
          \draw[string, out=270, in =180] (0a.center) to (7);
          \draw[string, out=0, in=270] (7) to (0c.center);
          \draw[string] (7) to (10a.center);
         
          \end{tikzpicture}\end{aligned}
          \quad=\quad
\begin{aligned}\begin{tikzpicture}
          
          \node (0a') at (-2,1) {};
          \node[blackdot] (a) at (-1,-0.5) {};
          
          \node (0a) at (-2,0.5) {};
          \node (0c) at (-1.5,0.5) {}; 
                            
          \node[blackdot] (b) at (-1,-1) {};
          \node[blackdot] (c) at (0,0.5) {};
          \node (0b) at (-0.5,0.5) {};
          \node (0c') at (-1.5,1) {};
          \node[ls] (7) at (-0.5,0) {};       
          \node (10a) at (0.5,-1.5) {};       
          \draw[string, out=270, in =180] (0c.center) to (a);
          \draw[string, out=180, in =0] (7) to (a);
          \draw[string, out=0, in=0] (7) to (b);
          \draw[string, out=180, in=270] (b) to (0a.center);      
          \draw[string,out=90 ,in=180] (7) to (c);
          \draw[string,out=0,in=90] (c) to (10a); 
          \draw[string,out=90,in=270] (0a.center) to (0c');   
          \draw[string,out=90,in=270] (0c.center) to (0a');  
          \end{tikzpicture}\end{aligned}
 \end{equation}

\addcontentsline{toc}{chapter}{Bibliography}
\bibliography{aaThesis}        
\bibliographystyle{plain}  

\end{document}